\newtheorem{theorem}{Theorem}
\newtheorem{proposition}[theorem]{Proposition}
\newtheorem{remark}[theorem]{Remark}
\newtheorem{definition}{Definition}
\newtheorem{example}{Example}
\pgfplotsset{compat=newest,compat/show suggested version=false}
\pgfplotsset{
    legend image with text/.style={
        legend image code/.code={%
            \node[anchor=center] at (0.3cm,0cm) {#1};
        }
    },
}
\newcommand\mcal{\mathcal}
\newcommand\mbf{\mathbf}
\newcommand\mbb{\mathbb}
\newcommand\x{\mbf{x}}
\newcommand\bin{\mbf{b}}
\newcommand\bi{\mbf{i}}
\newcommand\bk{\mbf{k}}
\newcommand\bu{\mbf{u}}
\newcommand\balpha{\boldsymbol\alpha}
\newcommand\N{\mbb{N}}
\newcommand\K{\mbb{K}}
\newcommand\F{\mbb{F}}
\newcommand\cG{\mcal{G}}
\newcommand\cK{\mcal{K}}
\newcommand\cS{\mcal{S}}
\newcommand\cT{\mcal{T}}
\newcommand\ceil[1]{\left\lceil#1\right\rceil}
\newcommand\floor[1]{\left\lfloor#1\right\rfloor}
\newcommand\cro[1]{\left[#1\right]}
\newcommand\pare[1]{\left(#1\right)}
\newcommand\acc[1]{\left\{#1\right\}}
\newcommand\BM{\textsc{BM}\xspace}
\newcommand\BMS{\textsc{BMS}\xspace}
\newcommand\aBMS{\textsc{Adaptive BMS}\xspace}
\newcommand\FGLM{\textsc{FGLM}\xspace}
\newcommand\sFGLM{\textsc{Scalar-FGLM}\xspace}
\newcommand\asFGLM{\textsc{Adaptive Scalar-FGLM}\xspace}
\newcommand\spFGLM{\textsc{Sparse-FGLM}\xspace}
\newcommand\bms{Berlekamp--Massey--Sakata\xspace}
\newcommand\bm{Berlekamp--Massey\xspace}
\newcommand\gb{Gr\"obner basis\xspace}
\newcommand\gbs{Gr\"obner bases\xspace}
\newcommand\ie{\mbox{{i.e.}}\xspace}
\newcommand\resp{\mbox{resp.}\xspace}
\DeclareMathOperator\DRL{\textsc{drl}}
\DeclareMathOperator\LEX{\textsc{lex}}
\newcommand{\adots}{\mathinner{%
  \mkern1mu\raise1pt\hbox{.}%
  \mkern2mu\raise4pt\hbox{.}%
  \mkern2mu\raise7pt\vbox{\kern7pt\hbox{.}}\mkern1mu}} 
\DeclareMathOperator\LM{\textsc{lm}}
\DeclareMathOperator\LT{\textsc{lt}}
\DeclareMathOperator\LC{\textsc{lc}}
\DeclareMathOperator\Stabilize{Stabilize}
\DeclareMathOperator\Border{Border}
\DeclareMathOperator\Staircase{Staircase}
\DeclareMathOperator\InterReduce{InterReduce}
\DeclareMathOperator\fail{fail}
\DeclareMathOperator\shift{shift}
\DeclareMathOperator\supp{supp}
\journal{Journal of Symbolic Computation}
\begin{document}

\begin{frontmatter}



\title{In-depth comparison of the Berlekamp--Massey--Sakata and
the Scalar-FGLM algorithms:\\the adaptive variants}

\author{J\'er\'emy
  Berthomieu\corref{cor1}}
\cortext[cor1]{Laboratoire d'Informatique de Paris~6,
  Sorbonne Universit\'e, Campus Pierre-et-Marie-Curie, bo\^ite
  courrier~169, 4~place Jussieu, F-75252 Paris Cedex~05, France.}
\ead{jeremy.berthomieu@lip6.fr}
\author{Jean-Charles Faug\`ere}
\ead{jean-charles.faugere@inria.fr}
\address{Sorbonne Universit\'e, \textsc{CNRS}, \textsc{INRIA},
  Laboratoire d'Informatique de Paris~6, \textsc{LIP6},
  \'Equipe \textsc{PolSys}, 4 place Jussieu, 75252 Paris Cedex 05, France}

\begin{abstract}
  The \textsc{Berlekamp--Massey--Sakata} algorithm and the
\textsc{Scalar-FGLM} algorithm both compute the ideal of
relations of a multidimensional linear recurrent sequence.

Whenever quering a single sequence element is prohibitive, the bottleneck of
these algorithms becomes the computation of all the needed sequence
terms. As such, having adaptive variants of these algorithms, reducing
the number of sequence queries, becomes mandatory.

A native adaptive variant of the \textsc{Scalar-FGLM} algorithm was
presented by its authors, the so-called \textsc{Adaptive Scalar-FGLM} algorithm.

In this paper, our first contribution is to make
the \textsc{Berlekamp--Massey--Sakata} algorithm more efficient by
making it adaptive to avoid some useless relation testings.
This variant allows us to
divide by four in dimension $2$ and by seven in dimension $3$ the number
of basic operations performed on some sequence family.

Then, we compare the two adaptive algorithms.
We show that their behaviors differ in a way that it is not possible
to tweak one of the 
algorithms in order to mimic exactly the behavior of the other. We
detail precisely the differences and the similarities of both
algorithms and conclude that in general the \textsc{Adaptive
  Scalar-FGLM} algorithm needs fewer queries and performs fewer basic
operations than the \textsc{Adaptive Berlekamp--Massey--Sakata} algorithm.

We also show that these variants are always more efficient than the
original algorithms.


\end{abstract}

\begin{keyword}
  The \BMS algorithm \sep the \sFGLM algorithm \sep
  Gr\"obner basis computation \sep
  multidimensional linear recurrent sequence \sep
  algorithms comparison



\end{keyword}

\end{frontmatter}


\tableofcontents
\section{Introduction}\label{s:intro}
A fundamental problem in Computer Science is to estimate the linear
complexity of an infinite sequence $S$: this is the smallest length
of a recurrence with constant coefficients satisfied by $S$ or the
length of the shortest
linear feedback shift register (\textsc{LFSR}) which generates it.

Linear Prediction dates back to Gau\ss{} in the
18th century: given a discrete set of original values
$(u_i)_{i\in\N}$, the goal is to find the best coefficients, in the
least-squares sense, $(\alpha_i)_{i\in\N}$ that will approximate
$u_i$ by $-\sum_{k=1}^d\alpha_k\,u_{i-k}$.
Least-square sense means that the solution
minimizes the sum of the squares of the errors.

This yields a linear system whose matrix is Hankel.
This problem has also been extensively used in Digital
Signal Processing theory
and applications.  Numerically, Levinson--Durbin recursion method
can be used to solve this problem. Hence,
to some extent, the original Levinson--Durbin problem in Norbert
Wiener's Ph.D. thesis, \cite{Levinson47,Wiener49}, predates the Hankel
interpretation of the \bm algorithm, see for instance~\cite{JoMa89}.

The \bm algorithm (\BM, \cite{Berl68,Mass69}) is a famous algorithm
guessing a solution of this problem for a
one-dimensional sequence. This
algorithm has been tremendously studied and many variants were
designed.
We refer the reader
to~\cite{KaPa91,KaYa13,Kalto06} for a very nice classification
of the \BM algorithms for solving this problem, and for its
generalization to
matrix sequences.




A generalization of the \BM algorithm to $2$ dimensions was first
designed in~\cite{Sakata88}. It was then further generalized
to $n$ dimensions in~\cite{Sakata90,Sakata09}. The so-called \bms
algorithm (\BMS) guesses a \gb of the ideal of relations
satisfied by the first terms of the input sequence, \cite[Lemma~5]{Sakata90}.


In~\cite{issac2015,berthomieu:hal-01253934}, the authors designed
the \sFGLM algorithm. It also guesses a reduced
\gb of the ideal of relations of a sequence.
While the \BM algorithm can be seen as the computation of the kernel
of a Hankel matrix, the \sFGLM algorithm computes the kernel of a
\emph{multi-Hankel} matrix,
its multivariate generalization.

In some applications, computing even a term of the input sequence is
costly or even the bottleneck of the \sFGLM algorithm. An adaptive
variant of the algorithm, called the \asFGLM algorithm was designed
in~\cite{issac2015,berthomieu:hal-01253934} in order to minimize the
number of sequence queries.

More recently, the authors proposed a new algorithm,
\textsc{Polynomial Scalar-FGLM}, in~\cite{berthomieu:hal-01784369} for computing the
linear recurrence relations of a sequence based on multivariate polynomial
arithmetic. It extends the \BMS algorithm through the use of
polynomial divisions and is a complete revision of the
\sFGLM algorithm without any linear algebra operations. Yet, in this
paper the algorithms are treated as high-level ones, with linear
algebra operations. We do not try to improve them using polynomial
arithmetic as in~\cite{berthomieu:hal-01784369}.

Finally, let us recall that as it is not possible to store the
whole input sequence, 
all these algorithms take a bound as an input and only handle sequence
terms up to this index bound. This is why they can only \emph{guess}
the ideal of relations.


\subsection{Related works}
Computing linear recurrence relations of multi-dimensional sequences
finds applications in Coding Theory, Computer Algebra and
Combinatorics.

Historically, the \BM algorithm was designed to decode cyclic codes,
like the \textsc{BCH}
codes, \cite{BoseRC1960,Hocquenghem1959}. Therefore,
decoding $n$-dimensional cyclic codes, a generalization of
Reed Solomon codes, was Sakata's motivation for designing the
\BMS algorithm in~\cite{Sakata91}.

On the other hand, as the output of the \BMS
and the \sFGLM algorithms is a \gb, a
natural application in Computer Algebra
is the computation of a \gb of an ideal for another order, typically
from a total degree ordering to an elimination ordering.
In fact the latest versions of the \spFGLM algorithm
rely heavily on the \BM and \BMS algorithms,
see~\cite{FM11,faugere:hal-00807540}. These notions are recalled in a
concise way in Section~\ref{s:prelim}, see also~\cite[Section~2]{part1}.

Finally, computing linear
recurrence relations with \emph{polynomial} coefficients finds
applications in Computer Algebra for computing properties of
univariate and multivariate Special Functions. The Dynamic
Dictionary of Mathematical Functions (\textsc{DDMF}, \cite{DDMF}) generates
automatically web-pages on univariate special functions through the
differential equations they satisfy. Equivalently, they could be generated
through the linear recurrence relations satisfied by their Taylor
series sequence of
coefficients. Deciding whether
\textsc{2D}/\textsc{3D}-space walks are D-finite or not finds
applications in Combinatorics,
see~\cite{BanderierF2002,BostanBMKM2014,BousquetMM2010,BousquetMP2003}. This
motivated the authors to extend the \sFGLM algorithm to
handle relations with polynomial coefficients
in~\cite{berthomieu:hal-01314266}.

\subsection{Contributions}
Following the open question in~\cite{part1} whether an adaptive variant of
the \BMS algorithm, reducing the number of 
sequence queries, exists or not, first we answer positively. Then,  
the goal of this paper is to compare this adaptive variant and
the \asFGLM algorithm.

In Section~\ref{s:aBMS}, we design an
adaptive variant of the \BMS algorithm, namely the \aBMS algorithm,
reducing the number of sequence
queries. To our knowledge some early termination criteria were
proposed for the \BMS algorithm, see~\cite{Sakata09}. However, these
criteria did not allow one to skip some relation testings. Here, the \aBMS
algorithm can skip some relation testings and still
test some further relations. In practice, this variant is more
efficient than the \BMS algorithm thanks to these skippings. To do so,
it uses an a priori upper bound on the staircase size to
prevent some useless relation testings. In some favorable cases, this
can even allow us to require fewer sequence elements than when calling
the \BMS algorithm. The presentation of this variant follows the
linear algebra description of the \BMS
algorithm introduced in \cite[Section~3.2]{part1}, see also
Appendix~\ref{ss:bms_lin_alg}.

In Section~\ref{s:asFGLM}, we deal with the
\asFGLM algorithm, first presented in~\cite{issac2015}. Compared to
the \BMS algorithm, we iteratively increase the size of the staircase.
Although it can drastically decrease the number of sequence queries, one
of its drawback is that it can fail to compute the true ideal of
relations of a sequence.

Therefore, it is essential to investigate when these algorithms output
a \gb of the ideal of relations. To do so, we
focus on their similarities and differences of behaviors. We report
here simplified and synthetic versions of the results obtained in
Section~\ref{s:comparison_adapt}.

A first similarity is that they both output a zero-dimensional ideal
of relations.
\begin{theorem}[Theorem~\ref{th:closed_staircase_adapt}]
  Let $\bu=(u_{i,j})_{(i,j)\in\N^2}$ be a sequence, let $\prec$ be a
  degree monomial ordering and $d$ be
  the size of the staircase.
  
  Calling each algorithm on
  $\bu$, $\prec$, $d$ yields a
  truncated \gb of a zero-dimensional ideal.
\end{theorem}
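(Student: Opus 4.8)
The plan is to show that each algorithm, when run on $\bu$, $\prec$ with a prescribed staircase size $d$, terminates with a set of leading monomials whose complement (the staircase) is a finite set of cardinality $d$, and that these leading monomials generate a zero-dimensional ideal. Recall that an ideal $I \subseteq \K[x,y]$ is zero-dimensional precisely when the staircase—the set of monomials not in the leading-term ideal $\LM(I)$—is finite; equivalently, when for each variable $x_k$ there is a pure power $x_k^{e_k}$ lying in $\LM(I)$. So the target reduces to a purely combinatorial statement about the output leading monomials.

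First I would recall, for each algorithm separately, what its output leading terms look like. Since $\prec$ is a degree monomial ordering and the prescribed staircase has size $d$, the staircase built by either algorithm is a finite, $\prec$-initial (downward-closed under divisibility) set of exactly $d$ monomials. I would then invoke the guarantee from the respective algorithm's description that every border monomial of this staircase appears as a leading monomial of a computed relation: by construction the returned leading monomials are exactly the minimal generators of the monomial ideal complementary to the staircase. The key elementary fact is that the complement of a finite staircase is an ideal whose variety is finite.

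The main step is then to verify that each coordinate axis is eventually cut off, i.e. that the staircase cannot contain $x^i$ for all $i$ nor $y^j$ for all $j$. Because the staircase has fixed finite size $d$, only finitely many pure powers of $x$ (and of $y$) can lie in it, so some $x^a$ and some $y^b$ must be border monomials, hence leading monomials of relations in the output. This is exactly the condition forcing zero-dimensionality. I would argue this uniformly by noting that a downward-closed set of $d$ monomials in $\N^2$ is contained in a rectangle of bounded side length, so both axes are truncated.

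The hard part will be handling the word \emph{truncated} in the statement: the output is only a \emph{truncated} \gb, guessed from finitely many sequence terms, so I must be careful that the finiteness of the staircase is a genuine structural feature and not an artifact of the truncation bound. Concretely, I expect the obstacle to be arguing that the leading monomials returned genuinely form the minimal generators of a zero-dimensional monomial ideal, rather than merely a partial list consistent with the queried terms; this is where I would lean on the precise output specification of each algorithm (that it returns the full border of the size-$d$ staircase) established earlier in Sections~\ref{s:aBMS} and~\ref{s:asFGLM}, and on the fact that prescribing the staircase size $d$ fixes the staircase as a finite set regardless of the truncation. Once both algorithms are seen to return the border of one and the same finite staircase, zero-dimensionality follows immediately and identically in both cases.
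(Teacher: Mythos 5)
Your proposal is correct and takes essentially the same route as the paper's own proof: for the \aBMS algorithm zero-dimensionality is read off the line $G':=\Border(S')$, and for the \asFGLM algorithm one notes that the returned leading monomials are the divisibility-minimal elements of the complement of a finite downward-closed staircase, so a pure power of each variable occurs among them. Your write-up is simply a more detailed elaboration (including the explicit equivalence between zero-dimensionality and finiteness of the staircase) of the paper's two-sentence argument.
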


In the \gb change of ordering application, like the \spFGLM algorithm,
one needs to use the lexicographical ordering. Although the \BMS
algorithm is not designed to handle such an ordering, the \aBMS can
perfectly be called with this ordering. Indeed, if the ideal is in
\emph{shape position}, then, as a second similarity, both algorithm
output correctly the ideal.
\begin{theorem}[Theorem~\ref{th:shape_position_adapt}]
  Let $\bu=(u_{i,j})_{(i,j)\in\N^2}$ be a linear recurrent sequence whose
  ideal of relations $I=\langle g(y),x-f(y)\rangle$ is in
  \emph{shape position} for the $\LEX(y\prec x)$ ordering, with $\deg
  f<\deg g=d$ and $g$ squarefree.

  Assuming no error
  is thrown in the execution of the \asFGLM algorithm called on $\bu$, $d$ and
  $\LEX(y\prec x)$ ordering,
  then the ouput ideal is $I$.

  Likewise, calling the \aBMS algorithm on $\bu$, $d$ and
  $\LEX(y\prec x)$ yields ideal $I$.
\end{theorem}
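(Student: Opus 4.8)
The plan is to first extract the rigid combinatorial structure forced by the shape position hypothesis, and then to run each algorithm under $\LEX(y\prec x)$ and check that it reconstructs exactly that structure.

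First I would record the static data. For $\LEX(y\prec x)$ the set $\{g(y),x-f(y)\}$ is already the reduced \gb of $I$, with leading monomials $y^d$ and $x$; hence the staircase is $\{1,y,\dots,y^{d-1}\}$ and its border is $\{y^d,x\}$. The decisive feature of this ordering is that every pure power of $y$ precedes $x$, so these $d$ staircase monomials are simply the $d$ smallest monomials overall. Because $g$ is squarefree, $\K[x,y]/I\cong\K[y]/(g)$ is reduced and $I$ is radical, so $\bu$ is a combination of $d$ distinct pure-exponential sequences; this non-degeneracy makes the multi-Hankel matrix indexed by the staircase on both sides invertible. Consequently there is no nonzero relation supported on the staircase alone, and adjoining a border monomial to the columns yields a one-dimensional kernel: the unique monic relation supported on $\{1,\dots,y^{d-1},y^d\}$ is $g(y)$ (here $\deg g=d$ is used), and the unique one supported on $\{1,\dots,y^{d-1},x\}$ is $x-f(y)$ (here $\deg f<d$ is used).

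For the \asFGLM algorithm I would follow its greedy enlargement of the staircase by the smallest available monomials, each time testing invertibility of the corresponding multi-Hankel submatrix. Since the $d$ smallest monomials are exactly $1,\dots,y^{d-1}$ and the matrix on them has full rank $d$, and since no error is thrown --- so no candidate is wrongly discarded and no relation is certified on untested shifts --- the algorithm selects precisely this staircase. Testing the two border monomials then returns the unique monic relations computed above, namely $g$ and $x-f$, and the output ideal is $I$.

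For the \aBMS algorithm the subtlety, and the main obstacle, is that $\LEX(y\prec x)$ is not handled by the plain \BMS: since $y^{d+1},y^{d+2},\dots$ all precede $x$, a naive scan in increasing monomial order never reaches $x$. Here the a priori bound $d$ on the staircase size is exactly what makes the run well-defined. After the staircase has reached the $d$ elements $1,\dots,y^{d-1}$ and the relation with leading term $y^d$ has been recorded, the bound certifies that the staircase is complete; every subsequent power $y^{d+k}$ is then reducible by $y^d$, so its relation testing is skipped, and the algorithm advances to the smallest monomial outside the current leading-term ideal, which is $x$. Testing $x$ discovers $x-f(y)$, after which every multiple of $x$ is reducible and the border is closed, so \aBMS returns $\{g,x-f\}$, i.e.\ $I$. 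The part of the argument that needs genuine care is precisely the correctness of this skipping rule under the bound: one must verify that stopping the $y$-chain after $d$ steps and jumping to $x$ loses no relation, after which recovering $g$ and $x-f$ follows from the same support constraints $\deg g=d$ and $\deg f<d$ used above.
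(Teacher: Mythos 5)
Your treatment of the \asFGLM half is correct and in fact more detailed than the paper's: the paper's proof simply asserts that, no error being thrown, the staircase grows incrementally to $\{1,y,\ldots,y^{d-1}\}$ and that the early-termination solves of $H_{S,S}\,\balpha+H_{S,\{t\}}=0$ for $t\in\{y^d,x\}$ yield $g$ and $x-f$, whereas you additionally justify uniqueness by observing that squarefreeness of $g$ makes $H_{S,S}$ a nonsingular Hankel matrix (a Vandermonde-type factorization with nonzero weights), so that adjoining one border monomial gives a one-dimensional kernel which must contain the true relation. One caveat, which you handle the same way the paper does: nonsingularity of the full $H_{S,S}$ does \emph{not} pass to the nested matrices $H_{S_k,S_k}$ that the algorithm actually tests along the way, so the incremental selection of $1,y,\ldots,y^{d-1}$ genuinely rests on the ``no error is thrown'' hypothesis, as you say.

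The \aBMS half, however, contains a genuine error in the description of the run, and the error would break your own conclusion. You claim that once the staircase has its $d$ elements and a candidate with leading monomial $y^d$ is recorded, the tests at $y^{d+1},y^{d+2},\ldots$ are skipped because these monomials are reducible by $y^d$, and that the algorithm then jumps to $x$. Reducibility is not a skip criterion in Algorithm~\ref{algo:abms}: a test of $g$ at $m=y^{d+k}$ with $0\leq k\leq d-1$ has shift $\frac{m}{\LM(g)}=y^{k}\in\Stabilize(S)$ once the staircase is $\{1,\ldots,y^{d-1}\}$, so the first skip condition fails and the test \emph{is} performed --- and it must be, since pinning down the $d$ unknown coefficients of $g$ requires the $2d$ terms $u_{0,0},\ldots,u_{0,2d-1}$, exactly as for the \BM algorithm; after the monomials $1,\ldots,y^{d}$ alone, $g$ is underdetermined, so the run you describe could not output $g$. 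What actually makes the $\LEX$ run terminate and reach $x$ is not the skip rule but the degree truncation of Section~\ref{ss:shape_position}: the \aBMS algorithm enumerates, in increasing $\LEX$ order, all monomials of degree at most $2d-1$, so the $y$-chain runs through $y^{2d-1}$ (recovering $g$ as \BM does) and the next visited monomial is then $x$, after which $x,x\,y,\ldots,x\,y^{d-1}$ determine $x-f(y)$ --- this is the paper's proof. The verification you defer does hold, but in the opposite place: in the early $y$-chain, a genuine failure of a relation with leading monomial $y^a$ occurs at some $y^j$ with $j\leq d+a-1$ (since the final complexity is $d$), while the skip criterion requires $j\geq d+a$, so no needed failure is ever skipped; the segment $y^{d},\ldots,y^{2d-1}$ you propose to cut is precisely the part that cannot be cut.
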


Although, the previous two theorems seem to show that both algorithms
have very similar outputs, their outputs can still differ.

Indeed, as neither algorithm can test if their output relations are valid on
the whole sequence, they intrinsically return the \emph{shifts} of the
relations: that is the set of translation monomials for which the
relations are valid. Thus, the larger the shift, the more the
relation has been tested. Therefore, it reinforces the confidence one can have
in the guessed output ideal. Even if both algorithms output the same ideal, they
usually do so while outputting different shifts.
\begin{theorem}[Theorem~\ref{th:valid_shift_adapt}]
  Let $\bu=(u_{i,j})_{(i,j)\in\N^2}$ be a sequence, $\prec$ be a
  monomial ordering and $d$ be
  the size of the output staircase $S$. Let us assume that both
  algorithms return a common relation $g$ when called on $\bu$,
  $\prec$, $d$ and some stopping monomial $M$ for the \aBMS algorithm.

  Then, the shift associated to $g$ the \aBMS algorithm yields is the
  monomial set
  $\{m,\ m\,\LM(g)\preceq M\}$. In other words, the smaller $\LM(g)$,
  the larger its shift.

  The shift associated to $g$ the \asFGLM algorithm
  returns is either $S$ if $\LM(g)\succ\max_{\prec}(S)$ or
  $\{m\in S,\ m\prec\LM(g)\}\cup\{\LM(g)\}$ otherwise. In other words,
  the larger $\LM(g)$, the larger its shift.
\end{theorem}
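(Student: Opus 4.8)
The plan is to analyse the two algorithms separately, since the shift of $g$ is by definition the set of translation monomials $m$ for which the translated relation is certified by the run: writing $g=\sum_b c_b\,b$, this means $\sum_b c_b\,u_{m\,b}=0$. Throughout I set $t=\LM(g)$ and order the staircase $S=\{s_1\prec\cdots\prec s_d\}$. For each algorithm I would describe which sweep or which matrix produces $g$, read off the set of translations it certifies, and finally deduce the announced monotonicity.

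For the \aBMS algorithm I would rely on the linear-algebra description recalled in Appendix~\ref{ss:bms_lin_alg}. The algorithm sweeps the monomials $N$ in increasing $\prec$-order up to the stopping monomial $M$; a relation of leading monomial $t$ is tested against the translation $m$ exactly when the swept monomial is $N=m\,t$, and at that step the relation is either tested or, when the testing is skipped, asserted through the staircase bound $d$. Either way $g$ is certified at shift $m$ if and only if $m\,t$ has been reached, that is $m\,t\preceq M$, which is the claimed set $\{m:\ m\,\LM(g)\preceq M\}$. The monotonicity is then immediate: as $\prec$ is compatible with multiplication, $\LM(g)\preceq\LM(g')$ forces $m\,\LM(g)\preceq m\,\LM(g')$, so the shift of the smaller leading monomial contains that of the larger one.

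For the \asFGLM algorithm I would use that $g$ is the one-dimensional kernel of the multi-Hankel matrix $H$, with $(H)_{m,b}=u_{m\,b}$, restricted to the columns $\{s\in S:\ s\prec t\}\cup\{t\}$ and to a set $R$ of rows; vanishing of row $m$ is exactly the validity of $g$ at shift $m$, so the returned shift is precisely $R$. The whole difficulty is to pin down $R$, and this is where the size bound $d$ forces the case split. Processing monomials in increasing $\prec$-order, by the time $t$ is examined every smaller monomial has already been sorted into a staircase element or a leading monomial, so the staircase built so far is $\{s\in S:\ s\prec t\}$. If $t\prec\max_{\prec}(S)$ this staircase is not yet complete, $t$ is adjoined as a tentative row and column, and the relation is detected on the square singular matrix indexed by $\{s\in S:\ s\prec t\}\cup\{t\}$, whence $R=\{m\in S:\ m\prec t\}\cup\{t\}$. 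If instead $t\succ\max_{\prec}(S)$, the staircase has already reached its full size $d$, so no further row is adjoined: $t$ is only tested against the $d$ rows of the completed staircase by solving $H_S\,\lambda=(u_{m\,t})_{m\in S}$, whence $R=S$.

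The main obstacle will be this identification of $R$, and specifically the justification that once $|S|=d$ the row set is frozen, so that a border monomial $t\succ\max_{\prec}(S)$ is tested against $S$ alone and not against $S\cup\{t\}$; this is exactly the query-saving mechanism of the adaptive variant and has to be read off from the description in Section~\ref{s:asFGLM}. Granting it, the monotonicity follows: enlarging $t$ only enlarges $\{m\in S:\ m\prec t\}$, hence the cardinality of the shift grows, the extreme case $t\succ\max_{\prec}(S)$ returning the whole staircase $S$; thus the larger $\LM(g)$, the larger its shift.
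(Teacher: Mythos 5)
Your proposal is correct and takes essentially the same approach as the paper: the \aBMS part is the paper's (very terse) observation that the shift is exactly the set of translations swept up to the stopping monomial $M$, with skipped testings covered by the bound $d$, and the \asFGLM part is the paper's same two-case analysis — a relation found during the main loop comes from a rank defect of $H_{S^t\cup\{t\},S^t\cup\{t\}}$ with $S^t=S_{\LM(g)}$ (monomials being processed in increasing order), whereas a relation produced in the early-termination phase is obtained by solving $H_{S,S}\,\balpha+H_{S,\{t'\}}=0$ against the frozen row set $S$. The only minor gloss is that in the \aBMS part the final relation $g$ need not itself have been tested at every swept multiple, since relations are replaced during the run; its validity at earlier shifts rests on the invariant maintained by the updates of Proposition~\ref{prop:augm_shift}, a point the paper's own proof also leaves implicit by declaring it clear from the algorithms' behavior.
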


As a consequence of these differences of behavior, it is not possible
to tweak one of the algorithms in order to mimic exactly the behavior
of the other.

Finally, in Section~\ref{s:implem_adapt}, we compare both algorithms based on
the number of sequence queries they perform and their number of basic
operations. We show that the \aBMS algorithm is able to perform four
(\resp seven) times fewer operations than the \BMS algorithm to ouput
the ideal of relations of a family of bidimensional (\resp
tridimensional) sequences.

We also show that the \asFGLM needs fewer queries and fewer
basic operations to recover the whole ideal of relations of several
families of sequences. However, it seems that asymptotically
the ratios between the number of basic operations and the number of
sequence queries made by both algorithm could be the same.

\subsection{Conclusion and Perspectives}
We now understand better the advantages of each algorithm.

On the one hand, the \asFGLM
algorithm can fail to return the right answer, yet, on the other hand,
we can tweak it to test the computed relations further, allowing us to
discard wrong relations. Furthermore, generally it returns the right
ideal of relations and it usually does so faster than the \aBMS algorithm.

However, the \aBMS algorithm seems to be the safer one. If the upper
bounds on the staircase size is correct, it will always return the
right ideal of relations. Though, its performance speedup relies on
the number of skipped relation testings and thus on the sharpness of
this bound. Moreover, it seems hard to predict in advance which
monomials will be totally skipped during the execution of the algorithm.

Combining the design of the \textsc{Polynomial Scalar-FGLM}
algorithm, based on polynomial
arithmetic in~\cite{berthomieu:hal-01784369}, and the comparison of
the \aBMS and \asFGLM algorithms
in this paper could lead to the design of an hybrid algorithm
taking advantage of all these algorithms. In particular, this
algorithm could replace the linear algebra arithmetic by
a polynomial one.

Indeed, the goal would be to mix the efficiency of the polynomial
arithmetic in the \textsc{Polynomial Scalar-FGLM} algorithm and the
small number of queries
performed by the \aBMS and the \asFGLM algorithms to compute the relations.


\section{Preliminaries}\label{s:prelim}
In this section, we give a brief description of classical notation used all
along the paper. We refer the reader to~\cite[Section~2]{part1} for a
more detailed presentation.

\subsection{Sequences and relations}
For $n\geq 1$, we let $\bi=(i_1,\dots,i_n)\in\N^n$. Classically, we
write $\x=(x_1,\ldots,x_n)$ and
$\x^{\bi}=x_1^{i_1}\,\cdots\,x_n^{i_n}$.
An $n$-dimensional sequence $\bu=(u_\bi)_{\bi\in\N^n}$ over a field
$\K$ satisfies the
(linear recurrence) relation induced by
$\balpha=(\alpha_\bk)_{\bk\in\cK}\in\K^{|\cK|}$, with $\cK\subset\N^n$ finite if
\begin{equation}
  \label{eq:recmulti}
  \forall \bi\in\N^n,\,\sum_{\bk\in\cK}
  \alpha_\bk\, u_{\bk+\bi}=0.
\end{equation}

\begin{example}\label{ex:binom}
  Let $\bin$ be the $2$-dimensional sequence of the binomial
  coefficients, $\bin = \left(\binom{i}{j}\right)_{(i,j)\in\N^2}$.
  Then the Pascal's rule:
  \[
  \forall (i,j)\in\N^2,\, \bin_{i+1,j+1}-\bin_{i,j+1}-\bin_{i,j}=0
  \]
  is a linear recurrence relation for the sequence $\bin$.
\end{example}

As we can only work with a finite number of terms of a sequence, in
this paper, a \emph{table} shall denote a finite subset of terms of a
sequence: it is one of the input parameters of the algorithms.

Given a finite table extracted from the sequence $\bu$, the main
purpose of the \BMS and the \sFGLM algorithms is to, lousy speaking,
determine a minimal set of relations that will allow us to generate
this finite table using only the values of $\bu$ on their supports.

Relations satisfied by a sequence can be added and shifted, therefore
it is natural to associate them with multivariate polynomials in $\K[\x]$.

\begin{definition}
  Let $f=\sum_{\bk\in\cK}\alpha_\bk\,\x^\bk\in\K[\x]$.
  We will denote by $\cro{f}_{\bu}$, or $\cro{f}$ when no ambiguity
  arises, the linear combination
  $\sum_{\bk\in\cK}\alpha_\bk \,u_\bk$.
  Moreover, if $\balpha$ defines a relation for $\bu$, that is for all
  $\bi\in\N^n$, $\cro{\x^{\bi}\,f}=0$, then we say that $f$ is the
  polynomial of this relation.
\end{definition}
The main benefit of the $[\,]$ notation resides in the immediate fact
that for all index $\bi$,
$\left[\x^\bi\,f\right]=\sum_{\bk\in\cK}
\alpha_\bk\,u_{\bk+\bi}$.

In the previous example, the Pascal's
rule relation is associated with polynomial $P=x\,y-y-1$, so that
\[\forall (i,j)\in\N^2,\,[x^i\,y^j\,P]=0.\]

\begin{definition}[\cite{FitzpatrickN90,Sakata88}]~\label{def:lin_rec}
  Let $\bu=(u_{\bi})_{\bi\in\N^n}$ be an $n$-dimensional sequence
  with coefficients in $\K$. The sequence $\bu$ is \emph{linear
    recurrent} if from a nonzero finite number of initial terms
  $\{u_{\bi},\ \bi\in S\}$, and a finite number of linear recurrence
  relations, without any contradiction, one can compute any term of
  the sequence.
  
  Equivalently, $\bu$ is linear recurrent if its ideal of relations
  $\{f,\ \forall\,m\in\K[\x],\cro{m\,f}=0\}$ is \emph{zero-dimensional}.
\end{definition}

\subsection{\gbs}
Let $\cT=\{\x^{\bi},\ \bi\in\N^n\}$ be the set of all monomials
in $\K[\x]$.
A monomial ordering $\prec$ on $\K[\x]$ is an order relation
satisfying the following three classical properties:
\begin{enumerate}
\item for all $m\in\cT$, $1\preceq m$;
\item for all $m,m',s\in\cT$, $m\prec m'\Rightarrow m\,s\prec m'\,s$;
\item every subset of $\cT$ has a least element for $\prec$.
\end{enumerate}

For a monomial ordering $\prec$ on $\K[\x]$, the
\emph{leading monomial} of $f$, denoted $\LM(f)$, is the
greatest monomial in the support of $f$ for $\prec$. The
\emph{leading coefficient} of $f$, denoted $\LC(f)$, is the
nonzero coefficient of $\LM(f)$. The \emph{leading term} of $f$,
$\LT(f)$, is defined as $\LT(f)=\LC(f)\,\LM(f)$.
For an ideal $I$, we denote, classically,
$\LM(I)=\{\LM(f),\ f\in I\}$.

We recall briefly the definition of a \gb and a staircase.
\begin{definition}\label{def:staircase}
  Let $I$ be a nonzero ideal of $\K[\x]$ and let $\prec$ be a monomial ordering.
  A set $\cG\subseteq I$ is a \emph{\gb} of $I$ if for all $f\in I$,
  there exists $g\in\cG$ such that $\LM(g)|\LM(f)$.

  The set $\cG$ is a \emph{minimal} \gb of $I$ if for any $g\in\cG$,
  $\cG\setminus\{g\}$ does not span $I$.

  Furthermore, $\cG$ is (minimal) \emph{reduced} if for any $g,g'\in\cG$,
  $g\neq g'$ and any monomial $m\in\supp g'$, $\LT(g)\nmid m$.

  Let $\cG$ be a reduced truncated \gb, the \emph{staircase} of $\cG$
  is
  \[S=\Staircase(\cG)=\{s\in\cT,\ \forall\,g\in\cG, \LM(g)\nmid s\}.\]
  It is also the canonical basis of $\K[\x]/I$.
\end{definition}

\gb theory allows us to choose any monomial ordering $\prec$.
Among all the monomial ordering, we will mainly use the
\begin{itemize}
\item $\LEX(x_n\prec\cdots\prec x_1)$ ordering which compares
  monomials as follows $\x^{\bi}\prec\x^{\bi'}$ if, and only if, there
  exists $k$, $1\leq k\leq n$ such that for all $\ell<k$,
  $i_{\ell}=i_{\ell}'$ and $i_k<i_k'$,
  see~\cite[Chapter~2, Definition~3]{CoxLOS2015};
\item $\DRL(x_n\prec\cdots\prec x_1)$ order which compares monomials as
  follows $\x^{\bi}\prec\x^{\bi'}$ if, and only if,
  $i_1+\cdots+i_n<i_1'+\cdots+i_n'$ or $i_1+\cdots+i_n=i_1'+\cdots+i_n'$
  and there exists $k$,
  $2\leq k\leq n$ such that for all $\ell>k$, $i_{\ell}=i_{\ell}'$
  and $i_k>i_k'$. Equivalently, there exists $k$, $1\leq k\leq n$ such
  that for all $\ell>k$, $i_1+\cdots+i_{\ell}=i_1'+\cdots+i_{\ell}'$
  and $i_1+\cdots+i_k<i_1'+\cdots+i_k'$,
  see~\cite[Chapter~2, Definition~6]{CoxLOS2015}.
\end{itemize}

However, in the \BMS algorithm, we need to be able to enumerate all
the monomials up to a bound monomial. This forces the user to take an
ordering $\prec$
such that for all $M\in\cT$, the set $\{m\prec M,\ m\in\cT\}$
is finite. Such an ordering $\prec$ makes $(\N^n,\prec)$ isomorphic to
$(\N,<)$, thus it makes sense to speak about the next monomial
for $\prec$.

This request excludes for instance the $\LEX$ ordering, and more
generally any elimination ordering. In other words, only weighted
degree ordering, or \emph{weight ordering}, should be used.

\subsection{Multi-Hankel matrices}
A matrix $H\in\K^{m\times n}$ is \emph{Hankel}, if there exists a
sequence $\bu=(u_i)_{i\in\N}$ such that for all
$(i,i')\in\{1,\ldots,m\}\times\{1,\ldots,n\}$, the
coefficient $h_{i,i'}$ lying on the $i$th row and $i'$th column of $H$
satisfies $h_{i,i'}=u_{i+i'}$.

In a multivariate setting, we can extend this Hankel matrices notion
to \emph{multi-Hankel} matrices. Indexing the rows and columns with
monomials $\x^{\bi}=x_1^{i_1}\,\cdots\,x_n^{i_n}$ and
$\x^{\bi'}=x_1^{i'_1}\,\cdots\,x_n^{i'_n}$, the
coefficient of $H$ lying on the row labeled with
$\x^{\bi}$ and column labeled with $\x^{\bi'}$ is
$u_{\bi+\bi'}$. Given two sets of monomials $U$ and $T$, we let
$H_{U,T}$ be the multi-Hankel matrix with rows (\resp columns) indexed
with monomials in $U$ (\resp $T$).
\begin{example}
  Let $\bu=(u_{i,j})_{(i,j)\in\N^2}$ be a sequence.
  \begin{enumerate}
  \item Let $U=\{1,y,y^2,x,x\,y,x\,y^2,x^2,x^2\,y,x^2\,y^2\}$ and
    $T=\{1,y,x,x\,y,x^2,x^2\,y,x^3,x^3\,y\}$, then
    \[H_{U,T}=\kbordermatrix{
      		&1	&y	&
                &x      &x\,y	&
                &x^2	&x^2\,y	&
                &x^3	&x^3\,y\\
      1		&u_{0,0}	&u_{0,1}	&\vrule
      		&u_{1,0}	&u_{1,1}	&\vrule
                &u_{2,0}	&u_{2,1}	&\vrule
                &u_{3,0}	&u_{3,1}\\
      y    	&u_{0,1}	&u_{0,2} &\vrule
      		&u_{1,1}	&u_{1,2}	&\vrule
                &u_{2,1}	&u_{2,2}	&\vrule
                &u_{3,1}	&u_{3,2}\\
      y^2	&u_{0,2}	&u_{0,3}	&\vrule
      		&u_{1,2}	&u_{1,3}	&\vrule
                &u_{2,2}	&u_{2,3}	&\vrule
                &u_{3,2}	&u_{3,3}\\
      \cline{2-12}
      x		&u_{1,0}	&u_{1,1}	&\vrule
      		&u_{2,0}	&u_{2,1}	&\vrule
                &u_{3,0}	&u_{3,1}	&\vrule
                &u_{4,0}	&u_{4,1}\\
      x\,y	&u_{1,1}	&u_{1,2}	&\vrule
      		&u_{2,1}	&u_{2,2}	&\vrule
                &u_{3,1}	&u_{3,2}	&\vrule
                &u_{4,1}	&u_{4,2}\\
      x\,y^2	&u_{1,2}	&u_{1,3}	&\vrule
      		&u_{2,2}	&u_{2,3}	&\vrule
                &u_{3,2}	&u_{3,3}	&\vrule
                &u_{4,2}	&u_{4,3}\\
      \cline{2-12}
      x^2	&u_{2,0}	&u_{2,1}	&\vrule
      		&u_{3,0}	&u_{3,1}	&\vrule
                &u_{4,0}	&u_{4,1}	&\vrule
                &u_{5,0}	&u_{5,1}\\
      x^2\,y	&u_{2,1}	&u_{2,2}	&\vrule
      		&u_{3,1}	&u_{3,2}	&\vrule
                &u_{4,1}	&u_{4,2}	&\vrule
                &u_{5,1}	&u_{5,2}\\
      x^2\,y^2	&u_{2,2}	&u_{2,3}	&\vrule
      		&u_{3,2}	&u_{3,3}	&\vrule
                &u_{4,2}	&u_{4,3}	&\vrule
                &u_{5,2}	&u_{5,3}
    }.
    \]
    We can see that this matrix is a $3\times 4$ \emph{block-Hankel}
    matrix with Hankel blocks of size $3\times 2$.
  \item Let $T=\{1,y,x,y^2,x\,y,x^2,y^3,x\,y^2,x^2\,y,x^3\}$,
    then the following matrix has
    a less obvious structure:
    \[H_{T,T}=\kbordermatrix{
      		&1	&y	&x
      		&y^2	&x\,y	&x^2
                &y^3	&x\,y^2	&x^2\,y	&x^3\\
      1		&u_{0,0}	&u_{0,1}	&u_{1,0}
      		&u_{0,2}	&u_{1,1}	&u_{2,0}
      		&u_{0,3}	&u_{1,2}	&u_{2,1}	&u_{3,0}\\
      y		&u_{0,1}	&u_{0,2}	&u_{1,1}
      		&u_{0,3}	&u_{1,2}	&u_{2,1}
      		&u_{0,4}	&u_{1,3}	&u_{2,2}	&u_{3,1}\\
      x		&u_{1,0}	&u_{1,1}	&u_{2,0}
      		&u_{1,2}	&u_{2,1}	&u_{3,0}
                &u_{1,3}	&u_{2,2}	&u_{3,1}	&u_{4,0}\\
      y^2	&u_{0,2}	&u_{0,3}	&u_{1,2}
      		&u_{0,4}	&u_{1,3}	&u_{2,2}
      		&u_{0,5}	&u_{1,4}	&u_{2,3}	&u_{3,2}\\
      x\,y	&u_{1,1}	&u_{1,2}	&u_{2,1}
      		&u_{1,3}	&u_{2,2}	&u_{3,1}
      		&u_{1,4}	&u_{2,3}	&u_{3,2}	&u_{4,1}\\
      x^2	&u_{2,0}	&u_{2,1}	&u_{3,0}
      		&u_{2,2}	&u_{3,1}	&u_{4,0}
      		&u_{2,3}	&u_{3,2}	&u_{4,1}	&u_{5,0}\\
      y^3	&u_{0,3}	&u_{0,4}	&u_{1,3}
      		&u_{0,5}	&u_{1,4}	&u_{2,3}
      		&u_{0,6}	&u_{1,5}	&u_{2,4}	&u_{3,3}\\
      x\,y^2	&u_{1,2}	&u_{1,3}	&u_{2,2}
      		&u_{1,4}	&u_{2,3}	&u_{3,2}
      		&u_{1,5}	&u_{2,4}	&u_{3,3}	&u_{4,2}\\
      x^2\,y	&u_{2,1}	&u_{2,2}	&u_{3,1}
      		&u_{2,3}	&u_{3,2}	&u_{4,1}
      		&u_{2,4}	&u_{3,3}	&u_{4,2}	&u_{5,1}\\
      x^3	&u_{3,0}	&u_{3,1}	&u_{4,0}
      		&u_{3,2}	&u_{4,1}	&u_{5,0}
      		&u_{3,3}	&u_{4,2}	&u_{5,1}	&u_{6,0}
    }.
    \]
  \end{enumerate}
\end{example}


\section{An Adaptive version of the \BMS algorithm}\label{s:aBMS}
The \BMS algorithm was presented first in~\cite{Sakata88} for the
dimension $2$ case and then was extended to dimension $n$
in~\cite{Sakata90,Sakata09}. In~\cite[Section~3]{part1}
or Appendix~\ref{a:BMS}, we give a description of the algorithm mainly
based on linear algebra.

The \BMS algorithm is an iterative algorithm, visiting each term
$u_{\bi}=[\x^{\bi}]$ of the input sequence in increasing order for the
input monomial order. At each step, it has a truncated \gb of the
ideal of relations and test them in the visited monomial.
If some of them fail, the algorithm updates the \gb with new valid relations.

When a relation $g$ fails at monomial $m$, two situations arise:
either $\frac{m}{\LM(g)}$ was already in the staircase and then a new
relation $g'$ with $\LM(g')=\LM(g)$ is computed or it was not and both
$\LM(g)$ and $\frac{m}{\LM(g)}$ are added to the staircase. New
relations are then computed depending on the possible new leading
monomials. See~\cite[Proposition~9]{part1} and Proposition~\ref{prop:iter}.

This is summed up in the following example; it is a truncated version
of~\cite[Example~10]{part1} and Example~\ref{ex:binom_bms}.
\begin{example}\label{ex:truncated_binom_bms}
  We give the trace of the algorithm called on the binomial sequence
  $\bin$ for the
  $\DRL(y\prec x)$ ordering from monomial $y^5$ up to monomial $x^5$.

  To simplify the reading, whenever a relation succeeds in $m$ or
  cannot be tested in $m$, we
  skip the updating part as this relation remains the same.

  We start with the non empty staircase $S=\{[y^2,x^2],[x^2-2\,x+1,y^2]\}$ and
  the relations $G=\{x\,y-y-1,y^3,x^3-2\,x^2+x\}$. This means that on the one
  hand the relations in $G$ have been tested up to all their multiples
  less than $y^5$ while relation $y^2$ (\resp $x^2-2\,x+1$) in $S$ fails when
  multiplied by $x^2$ (\resp $y^2$) but does not fail when multiplied by a
  lesser monomial.
  \begin{enumerate}
  \item[] For the monomial $y^5$
    \begin{enumerate}
    \item[] Nothing must be done for the relation $g_1=x\,y-y-1$.
    \item[] The relation $g_2=y^3$ succeeds since $[\bin_{0,5}]=0$.
    \item[] Nothing must be done for the relation $g_3=x^3-2\,x^2+x$.
    \end{enumerate}
  \item[] For the monomial $x\,y^4$
    \begin{enumerate}
    \item[] The relation $g_1=x\,y-y-1$ succeeds since $[\bin_{1,4}-\bin_{0,4}-\bin_{0,3}]=0$.
    \item[] The relation $g_2=y^3$ succeeds since $[\bin_{1,4}]=0$.
    \item[] Nothing must be done for the relation $g_3=x^3-2\,x^2+x$.
    \end{enumerate}
  \item[] For the monomial $x^2\,y^3$
    \begin{enumerate}
    \item[] The relation $g_1=x\,y-y-1$ succeeds since $[\bin_{1,4}-\bin_{0,4}-\bin_{0,3}]=0$.
    \item[] The relation $g_2=y^3$ succeeds since $[\bin_{2,3}]=0$.
    \item[] Nothing must be done for the relation $g_3=x^3-2\,x^2+x$.
    \end{enumerate}
  \item[] For the monomial $x^3\,y^2$
    \begin{enumerate}
    \item[] The relation $g_1=x\,y-y-1$ succeeds since $[\bin_{3,2}-\bin_{2,2}-\bin_{2,1}]=0$.
    \item[] Nothing must be done for the relation $g_2=y^3$.
    \item[] The relation $g_3=x^3-2\,x^2+x$ fails since
      $[\bin_{3,2}-2\,\bin_{2,2}+\bin_{1,2}]=1$. Thus
      $S'=\{[y^2,x^2],[x^2-2\,x+1,y^2],[x^3-2\,x+1,y^2]\}$.
    \item[] $S'$ is set to $\{[y^2,x^2],[x^2-2\,x+1,y^2]\}$ and $G'=\{y^3,x\,y,x^3\}$.
    \item[] We set $g_1'=x\,y-y-1$ and $g_2'=y^3$.
    \item[] For the relation $g_3'=x^3$, $x^3| x^3\,y^2$ and
      $\frac{x^3\,y^2}{x^3}|\fail(x^2-2\,x+1)$, hence $g_3'=x^3-3\,x^2+3\,x-1$.
    \item[] We update $G:=G'=\{y^3,x\,y-y-1,x^3-3\,x^2+3\,x-1\}$ and
      $S:=S'=\{[y^2,x^2],[x^2-2\,x+1,y^2]\}$.
    \end{enumerate}
  \item[] For the monomial $x^4\,y$
    \begin{enumerate}
    \item[] The relation $g_1=x\,y-y-1$ succeeds since $[\bin_{4,1}-\bin_{3,1}-\bin_{3,0}]=0$.
    \item[] Nothing must be done for the relation $g_2=y^3$.
    \item[] The relation $g_3=x^3-3\,x^2+3\,x-1$ succeeds since
      $[\bin_{4,1}-3\,\bin_{3,1}+3\,\bin_{2,1}-\bin_{1,1}]=0$.
    \end{enumerate}
  \item[] For the monomial $x^5$
    \begin{enumerate}
    \item[] Nothing must be done for the relation $g_1=x\,y-y-1$.
    \item[] Nothing must be done for the relation $g_2=y^3$.
    \item[] The relation $g_3=x^3-3\,x^2+3\,x-1$ succeeds since
      $[\bin_{5,0}-3\,\bin_{4,0}+3\,\bin_{3,0}-\bin_{2,0}]=0$.
    \end{enumerate}
  \item[] The algorithm returns relations
    $x\,y-y-1,y^3,x^3-3\,x^2+3\,x-1$, the first one with a shift $x^3$
    and the last two with a shift $x^2$.
  \end{enumerate}
\end{example}

The problem is now to understand when the \gb of the ideal of
relations has actually been computed.
Assuming the sequence is linear recurrent,
Proposition~\ref{prop:upperbound_repeat} provides an answer to this question (see
also~\cite[Proposition~11]{part1} and Proposition~\ref{prop:upperbound}.
\begin{proposition}\label{prop:upperbound_repeat}
  Let $\bu$ be a linear recurrent sequence and $I$ be its ideal of
  relations.

  Let $S$ be the
  staircase of $I$ for $\prec$. Let $s_{\max}$
  be the largest monomial in
  $S$. Then, at step $m\succeq (s_{\max})^2$, the computed staircase
  is equal to $S$.

  Let $\cG$ be a minimal \gb of $I$ for $\prec$ and let $g_{\max}$ be the
  largest leading monomial of  $\cG$. Then, at step $m\succeq
  s_{\max}\cdot\max_{\prec}(g_{\max},s_{\max})$,
  the computed \gb is a minimal \gb of $I$ for $\prec$.
\end{proposition}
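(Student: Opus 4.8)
The plan is to read off the behaviour of the \BMS algorithm from its linear algebra description (\cite[Section~3.2]{part1}, see also Appendix~\ref{ss:bms_lin_alg}), in which processing the term $u_\bi$ at step $m=\x^\bi$ amounts to updating a greedy, downward closed staircase $S^{(m)}$ together with its associated relations, built from the entries of the multi-Hankel matrix $H$ of $\bu$ of associated monomial $\preceq m$. I would first record two structural facts. Since $\bu$ is linear recurrent, $I$ is zero-dimensional, $S$ is a basis of $\K[\x]/I$, the Gram matrix $H_{S,S}$ with entries $u_{\bi+\bj}$ for $\x^\bi,\x^\bj\in S$ is invertible, the full matrix $H$ has rank $d=|S|$, and a monomial lies in $S$ exactly when its column of $H$ is independent from the columns of the smaller monomials. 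Secondly, a column is declared to be in $S^{(m)}$ only through a nonsingular minor of the accessible (row/column truncated) matrix; as filling the still unknown entries of $H$ cannot lower the rank of an already full minor, every certified column is genuinely independent in $H$, so that $|S^{(m)}|\le\rank H=d$ holds \emph{at every step}. This last invariant is the multivariate analogue of the bound $L_n\le d$ on the register length in the \bm algorithm.

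For the staircase bound I would then prove only the inclusion $S\subseteq S^{(m)}$ for $m\succeq s_{\max}^2$. The minor witnessing the simultaneous independence of all of $S$ is precisely $H_{S,S}$, whose entries $u_{\bi+\bj}$ with $\x^\bi,\x^\bj\in S$ all have associated monomial $\x^\bi\x^\bj\preceq s_{\max}^2$; hence by step $s_{\max}^2$ every element of $S$ has been certified and added. Equality is then immediate: combining $S\subseteq S^{(m)}$ with $|S^{(m)}|\le d=|S|$ forces $S^{(m)}=S$, with no further case analysis on spurious elements needed.

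For the \gb bound the extra ingredient is that the leading monomials of a minimal \gb are the corners of $S$, the largest being $g_{\max}$, and some corner $b$ may satisfy $b\succ s_{\max}$ (typically a pure power lying just outside the staircase). Determining and validating the reduced relation with leading monomial $b$ means expressing the column $b$ over the columns of $S$, i.e.\ using the entries $u_{\bi+\bj}$ with $\x^\bi\in S$ and $\x^\bj\in\{b\}\cup S$, of maximal associated monomial $b\,s_{\max}\preceq g_{\max}\,s_{\max}$. Taking the maximum over all corners and recalling that the staircase (hence every relation tail) is already settled by step $s_{\max}^2$, all corner relations are found and validated by step $s_{\max}\cdot\max_{\prec}(g_{\max},s_{\max})$, which is exactly a minimal \gb of $I$.

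The step I expect to cost the most work, and the genuine obstacle, is reconciling the incremental update rule of Proposition~\ref{prop:iter} with this static rank picture. Two points need care. First, the rank invariant $|S^{(m)}|\le d$ must be justified from the update rule itself, namely that the algorithm never enlarges the staircase beyond a set of genuinely independent columns; this is what prevents a corner $b\succ s_{\max}$ (whose relation cannot yet be found at step $s_{\max}^2$) from being spuriously admitted. Second, one must verify that each relation is \emph{discovered exactly when} its witnessing entries first become accessible, which is where the two cases of Proposition~\ref{prop:iter} — whether the cofactor $m/\LM(g)$ already lies in the staircase or not — come in, guaranteeing that a failing relation is replaced by the correct updated one at the right step rather than later.
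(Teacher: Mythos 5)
You should first note that the paper itself contains no proof of this statement: Proposition~\ref{prop:upperbound_repeat} restates Proposition~\ref{prop:upperbound} of Appendix~\ref{a:BMS}, which is quoted from the companion paper without proof, so your attempt can only be judged on its own merits. Your bookkeeping of the bounds is correct and is indeed the heart of the matter: $t\,s\preceq s_{\max}^2$ for $t,s\in S$, $t\,b\preceq s_{\max}\,g_{\max}$ for a corner $b$ of the staircase, and the invertibility of $H_{S,S}$ for a linear recurrent sequence. But the argument is written for the \sFGLM mechanism, not the \BMS one, and the bridge you defer to your last paragraph is the actual proof. The \BMS algorithm does not admit a monomial to the staircase by certifying a nonsingular accessible minor: by Definition~\ref{def:im} and Proposition~\ref{prop:iter}, $s\in S_m$ if and only if $s=\fail(f)/\LM(f)$ for some $f\notin I_m$, i.e.\ membership is witnessed by an actual failure, and the mere accessibility of the entries of $H_{S,S}$ at step $s_{\max}^2$ does not by itself imply that any such failure has occurred. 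Note also that the inclusion $S_m\subseteq S$, which you derive through a rank invariant, is immediate from $I\subseteq I_m$ (hence $\LM(I)\subseteq\LM(I_m)$); the nontrivial direction is $S\subseteq S_m$, which your sentence ``every element of $S$ has been certified and added'' asserts without justification.

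The missing idea is a contrapositive argument inside $I_m$, together with a reduction modulo $I$. If $s\in S\setminus S_m$ with $m\succeq s_{\max}^2$, there exists $f$ with $\LM(f)=s$ and $\fail(f)\succ m$, so $[t\,f]=0$ for every $t\in S$ since $\LM(t\,f)=t\,s\preceq s_{\max}^2\preceq m$. At this point you cannot ``express the column $s$ over the columns of $S$'', because the tail of $f$ need not be supported in $S$; one must replace $f$ by its normal form modulo $I$, which is supported in $S$ and has the same brackets since $f-\mathrm{NF}(f)\in I$. Then $H_{S,S}\,\overrightarrow{\mathrm{NF}(f)}=0$ and invertibility forces $\mathrm{NF}(f)=0$, i.e.\ $f\in I$ and $s\in\LM(I)$, a contradiction; equality $S_m=S$ follows with no cardinality count. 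The same normal-form trick validates every $g\in\cG_m$ with corner leading monomial $b$ once $m\succeq s_{\max}\,b$, which together with $s_{\max}^2$ for the staircase is exactly covered by $s_{\max}\cdot\max_{\prec}(g_{\max},s_{\max})$ --- and it is insensitive to the fact, which your phrase ``the reduced relation with leading monomial $b$'' overlooks, that the \BMS output need not be reduced (Theorem~\ref{th:reduced_gb_adapt}). Finally, your second proposed repair --- checking that each relation is ``discovered exactly when its witnessing entries first become accessible'' --- is a red herring: the proposition concerns the intrinsic objects $S_m$ and $\cG_m$ of Definition~\ref{def:im}, which the algorithm provably maintains, so no timing analysis of the update rule is required.
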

\begin{example}\label{ex:bound_for_rect}
  For the $\DRL(y\prec x)$ ordering, $I=\langle x^p,y^q\rangle$ and
  $q>p\geq 1$, we have, $s_{\max}=x^{p-1}\,y^{q-1}$ and
  $g_{\max}=y^q$. Therefore, the right staircase is found at most at step
  $m=x^{2\,p-2}\,y^{2\,q-2}$, while the \gb is found at most at step
  $x^{p-1}\,y^{q-1}\,\max_{\prec}(x^{p-1}\,y^{q-1},y^q)$, \ie
  $y^{2\,q-1}$ if $p=1$ and $x^{2\,p-2}\,y^{2\,q-2}$ otherwise.
\end{example}

\begin{remark}
  In some favourable cases though, it is not necessary to go up to
  this bound to guess the right relations. In
  Example~\ref{ex:bound_for_rect}, for $p=1$ and $q=2$, the right
  staircase is found at step $y$. In fact, the
  right \gb is already guessed as well, while
  Proposition~\ref{prop:upperbound_repeat} only ensures that it will
  be correctly guessed at step $y^3$.
\end{remark}
It could therefore be very
fruitful to have an heuristic helping us determining if the current \gb
is the right one when the size of the staircase is known in
advance. Indeed, it could allow us to end earlier the running of the
\BMS algorithm. Unfortunately, it is not rare that an interrupted
\BMS algorithm does not return the correct \gb, in fact such an
interrupted \BMS algorithm will never return the right \gb for any of the four
families of sequences used in Section~\ref{s:implem_adapt}. The goal
is thus to reduce the number of testings differently.

Let us recall that at step $m$, whenever a relation $g$
such that $\LM(g)|m$ fails, if $\frac{m}{\LM(g)}$ is not in the
staircase, then the algorithm adds both $\LM(g)$ and $\frac{m}{\LM(g)}$ in the
new staircase. Assuming we know in advance the size of the staircase of the output
\gb, during the execution of the algorithm, we can detect that testing
the relation $g$ in $m$ is useless if the staircase becomes too big
after adding the two monomials.

Let us show in the following example how we can take advantage of this strategy.

\begin{example}
  Let us reconsider Example~\ref{ex:truncated_binom_bms} with
  the assumption that the staircase has a size at most $5$.

  We start with the non empty staircase $S=\{[y^2,x^2],[x^2-2\,x+1,y^2]\}$ and
  the relations $G=\{x\,y-y-1,y^3,x^3-2\,x^2+x\}$. This means that on the one
  hand the relations in $G$ have been tested up to all their multiples
  less than $y^5$ while relation $y^2$ (\resp $x^2-2\,x+1$) in $S$ fails when
  multiplied by $x^2$ (\resp $y^2$) but does not fail when multiplied by a
  lesser monomial.
  \begin{enumerate}
  \item[] For the monomial $y^5$
    \begin{enumerate}
    \item[] Nothing must be done for the relation $g_1=x\,y-y-1$.
    \item[] The relation $g_2=y^3$ succeeds since $[\bin_{0,5}]=0$.
    \item[] Nothing must be done for the relation $g_3=x^3-2\,x^2+x$.
    \end{enumerate}
  \item[] For the monomial $x\,y^4$
    \begin{enumerate}
    \item[] Should the relation $g_1=x\,y-y-1$ fail in $x\,y^4$, we
      would have to add $x\,y$ and $y^3$ in the staircase, raising
      its size to $7$. We skip testing $g_1$.
    \item[] Should the relation $g_2=y^3$ fail in $x\,y^4$, we
      would have to add $y^3$ and $x\,y$ in the staircase, raising
      its size to $7$. We skip testing $g_2$.
    \item[] Nothing must be done for the relation $g_3=x^3-2\,x^2+x$.
    \end{enumerate}
  \item[] For the monomial $x^2\,y^3$
    \begin{enumerate}
    \item[] Should the relation $g_1=x\,y-y-1$ fail in $x^2\,y^3$, we
      would have to add $x\,y$ and $x\,y^2$ in the staircase, raising
      its size to $7$. We skip testing $g_1$.
    \item[] The relation $g_2=y^3$ succeeds since $[\bin_{2,3}]=0$.
    \item[] Nothing must be done for the relation $g_3=x^3-2\,x^2+x$.
    \end{enumerate}
  \item[] For the monomial $x^3\,y^2$
    \begin{enumerate}
    \item[] Should the relation $g_1=x\,y-y-1$ fail in $x^3\,y^2$, we
      would have to add $x\,y$ and $x^2\,y$ in the staircase, raising
      its size to $7$. We skip testing $g_1$.
    \item[] Nothing must be done for the relation $g_2=y^3$.
    \item[] The relation $g_3=x^3-2\,x^2+x$ fails since
      $[\bin_{3,2}-2\,\bin_{2,2}+\bin_{1,2}]=1$. Thus
      $S'=\{[y^2,x^2],[x^2-2\,x+1,y^2],[x^3-2\,x+1,y^2]\}$.
    \item[] $S'$ is set to $\{[y^2,x^2],[x^2-2\,x+1,y^2]\}$ and $G'=\{y^3,x\,y,x^3\}$.
    \item[] We set $g_1'=x\,y-y-1$ and $g_2'=y^3$.
    \item[] For the relation $g_3'=x^3$, $x^3| x^3\,y^2$ and
      $\frac{x^3\,y^2}{x^3}|\fail(x^2-2\,x+1)$, hence $g_3'=x^3-3\,x^2+3\,x-1$.
    \item[] We update $G:=G'=\{y^3,x\,y-y-1,x^3-3\,x^2+3\,x-1\}$ and
      $S:=S'=\{[y^2,x^2],[x^2-2\,x+1,y^2]\}$.
    \end{enumerate}
  \item[] For the monomial $x^4\,y$
    \begin{enumerate}
    \item[] Should the relation $g_1=x\,y-y-1$ fail in $x^4\,y$, we
      would have to add $x\,y$ and $x^3$ in the staircase, raising
      its size to $7$. We skip testing $g_1$.
    \item[] Nothing must be done for the relation $g_2=y^3$.
    \item[] Should the relation $g_3=x^3-3\,x^2+3\,x-1$ fail in $x^4\,y$, we
      would have to add $x^3$ and $x\,y$ in the staircase, raising
      its size to $7$. We skip testing $g_3$.
    \end{enumerate}
  \item[] For the monomial $x^5$
    \begin{enumerate}
    \item[] Nothing must be done for the relation $g_1=x\,y-y-1$.
    \item[] Nothing must be done for the relation $g_2=y^3$.
    \item[] The relation $g_3=x^3-3\,x^2+3\,x-1$ succeeds since
      $[\bin_{5,0}-3\,\bin_{4,0}+3\,\bin_{3,0}-\bin_{2,0}]=0$.
    \end{enumerate}
  \item[] The algorithm returns relations
    $x\,y-y-1,y^3,x^3-3\,x^2+3\,x-1$, the first one with a shift $x^3$
    and the other two with a shift $x^2$.
  \end{enumerate}
\end{example}

In this example, skipping some relation testings allowed us to
skip all the testings in a loop, namely loops $x\,y^4$ and
$x^4\,y$. As a byproduct, we also reduced the number of table queries.

Integrating this strategy in the \BMS algorithm yields an adaptive
variant, Algorithm~\ref{algo:abms},
reducing the number of relation testings and table queries.

\begin{algorithm2e}[htbp!]\label{algo:abms}
  \small
  \DontPrintSemicolon
  \TitleOfAlgo{\aBMS (Linear Algebra variant).}
  \KwIn{A table $\bu=(u_{\bi})_{\bi\in\N^n}$ with coefficients in
    $\K$, a monomial ordering $\prec$, a given bound $d$
    and a monomial $M$
    as the stopping condition.}
  \KwOut{A set $G$ of relations generating $I_M$.}
  $T := \{m\in\K[\x], m\preceq M\}$.\;
  $G := \{1\}$.\;
  $S := \emptyset$.\;
  \Forall{$m\in T$}{
    $S' := S$.\;
    \For{$g\in G$}{
      \If{$\LM(g)| m$}{
        \If{$\frac{m}{\LM(g)}\not\in\Stabilize(S)$ 
          \KwAnd $\#\,\Stabilize\pare{S\cup\acc{\LM(g),
              \frac{m}{\LM(g)}}}> d$}{
          \KwNext.\tcp*{skip this relation testing}
        }
        $e:=\cro{\frac{m}{\LM(g)}\,g}_{\bu}$\;
        \If{$e\neq 0$}{
          $S':=S'\cup
          \acc{\cro{\frac{g}{e},\frac{m}{\LM(g)}}}$.\;
        }
      }
    }
    $S':=
    \min_{\fail(h)\in S'}\acc{[h,\fail(h)/\LM(h)]}$.\;
    $G':= \Border (S')$.\;
    \For{$g'\in G'$}{
      Let $g\in G$ such that $\LM(g)|\LM(g')$.\; 
      \uIf{$\LM(g) \nmid m$}{
        $g':=\frac{\LM(g')}{\LM(g)}\,g$.\;
      }
      \uElseIf{$\exists\,h\in S,
        \frac{m}{\LM(g')} |\fail(h)$}{
        $g':= \frac{\LM(g')}{\LM(g)}\,g
        -\cro{\frac{m}{\LM(h)}\,h}_{\bu}\,
        \frac{\LM(g')\,\fail(h)}{m}\,h$.\;
      }
      \lElse{
        $g':=g$.
      }
    }
    $G := G'$.\;
    $S := S'$.\;
  }  
  \KwRet $G$.
\end{algorithm2e}
This version was motivated by a remark in~\cite{Sakata09} where the
author announced that in applications where an approximate size of the
staircase is known, one can stop early the execution of the \BMS
algorithm. Yet, we do not know if such a strategy is classical and
if it is exactly the one described in Algorithm~\ref{algo:abms}.

Predicting how many monomials will be completely skipped in order to
reduce the number of table queries can
be a hard task. Indeed, it is clear that if relation $g$ can be
skipped at monomial $m$, it will also be skipped at any multiple of
$m$. Yet, even if $m$ is completely skipped, a relation that cannot be
tested in $m$ might need to be
tested in $m\,x_i$ for some $i$.

Therefore, even if $m$ is completely skipped, $m\,x_i$ might be
not. We illustrate this phenomon with the following example.
\begin{example}
  Let $\bu=(u_{i,j})_{(i,j)\in\N^2}$ be the sequence defined by
  $u_{4,1}=1$ and $u_{i,j}=0$ if $(i,j)\neq(4,1)$. Running the \BMS
  algorithm on these arguments yields relations 
  $y^2,x^5$ so that the staircase has size $10$.
  We assume though that the only known
  upper bound on the staircase size is $14$.

  We give a short trace of the algorithm called on $\bu$ for the
  $\DRL(y\prec x)$ ordering up to
  monomial $x^9$.  Therefore, we
  also input $14$ as the upper bound on the size of the output staircase to the \aBMS
  algorithm.

  \begin{enumerate}
  \item[] For all the monomials from $1$ to $x^3\,y^2$
    \begin{enumerate}
    \item[] The relation $g_1=1$ succeeds.
    \end{enumerate}
  \item[] For the monomial $x^4\,y$
    \begin{enumerate}
    \item[] The relation $g_1=1$ fails since
      $[u_{4,1}]=1$. Thus
      $S'=\{[1,x^4\,y]\}$.
    \item[] $S'$ is set to $\{[1,x^4\,y]\}$ and $G'=\{y^2,x^5\}$.
    \item[] We set $g_1'=y^2$ and $g_2'=x^5$.
    \item[] For the relation $g_1'=y^2$, $y^2\nmid x^4\,y$ thus $g_1'=y^2$.
    \item[] For the relation $g_2'=x^5$, $x^5\nmid x^4\,y$ thus $g_2'=x^5$.
    \item[] We update $G:=G'=\{y^2,x^5\}$ and
      $S:=S'=\{[1,x^4\,y]\}$.
    \end{enumerate}
  \item[] For the monomial $x^5$
    \begin{enumerate}
    \item[] The relation $g_2=x^5$ succeeds.
    \end{enumerate}
  \item[] For the monomial $y^6$
    \begin{enumerate}
    \item[] The relation $g_1=y^2$ succeeds.
     \end{enumerate}
  \item[] For the monomial $x\,y^5$
    \begin{enumerate}
    \item[] The relation $g_1=y^2$ succeeds.
    \end{enumerate}
  \item[] For the monomial $x^2\,y^4$
    \begin{enumerate}
    \item[] The relation $g_1=y^2$ succeeds.
    \end{enumerate}
  \item[] For the monomial $x^3\,y^3$
    \begin{enumerate}
    \item[] The relation $g_1=y^2$ succeeds.
    \end{enumerate}
  \item[] For the monomial $x^4\,y^2$
    \begin{enumerate}
    \item[] The relation $g_1=y^2$ succeeds.
    \end{enumerate}
  \item[] For the monomial $x^5\,y$
    \begin{enumerate}
    \item[] The relation $g_2=x^5$ succeeds.
    \end{enumerate}
  \item[] For the monomial $x^6$
    \begin{enumerate}
    \item[] The relation $g_2=x^5$ succeeds.
    \end{enumerate}
  \item[] For the monomial $y^7$
    \begin{enumerate}
    \item[] The relation $g_1=y^2$ succeeds.
    \end{enumerate}
  \item[] For the monomial $x\,y^6$
    \begin{enumerate}
    \item[] Should the relation $g_1=y^2$ fail, we would have to add
      $y^2$ and $x\,y^4$ to the staircase, raising its size to $16$.
      We skip testing $g_1$.
    \end{enumerate}
  \item[] For the monomial $x^2\,y^5$
    \begin{enumerate}
    \item[] Should the relation $g_1=y^2$ fail, we would have to add
      $y^2$ and $x^2\,y^3$ to the staircase, raising its size to $16$.
      We skip testing $g_1$.
    \end{enumerate}
  \item[] For the monomial $x^3\,y^4$
    \begin{enumerate}
    \item[] The relation $g_1=y^2$ succeeds.
    \end{enumerate}
  \item[] For the monomial $x^4\,y^3$
    \begin{enumerate}
    \item[] The relation $g_1=y^2$ succeeds.
    \end{enumerate}
  \item[] For the monomial $x^5\,y^2$
    \begin{enumerate}
    \item[] The relation $g_1=y^2$ succeeds.
    \item[] The relation $g_2=x^5$ succeeds.
    \end{enumerate}
  \item[] For the monomial $x^6\,y$
    \begin{enumerate}
    \item[] The relation $g_2=x^5$ succeeds.
    \end{enumerate}
  \item[] For the monomial $x^7$
    \begin{enumerate}
    \item[] The relation $g_2=x^5$ succeeds.
    \end{enumerate}
  \item[] For the monomial $y^8$
    \begin{enumerate}
    \item[] Should the relation $g_1=y^2$ fail, we would have to add
      $y^2$ and $y^6$ to the staircase, raising its size to $16$.
      We skip testing $g_1$.
    \end{enumerate}
  \item[] For the monomial $x\,y^7$
    \begin{enumerate}
    \item[] We did not test $g_1$ in $x\,y^6$. We skip testing $g_1$.
    \end{enumerate}
  \item[] For the monomial $x^2\,y^6$
    \begin{enumerate}
    \item[] We did not test $g_1$ in $x\,y^6$ and $x^2\,y^5$. We skip testing $g_1$.
    \end{enumerate}
  \item[] For the monomial $x^3\,y^5$
    \begin{enumerate}
    \item[] We did not test $g_1$ in $x^2\,y^5$. We skip testing $g_1$.
    \end{enumerate}
  \item[] For the monomial $x^4\,y^4$
    \begin{enumerate}
    \item[] Should the relation $g_1=y^2$ fail, we would have to add
      $y^2$ and $x^4\,y^2$ to the staircase, raising its size to $15$.
      We skip testing $g_1$.
    \end{enumerate}
  \item[] For the monomial $x^5\,y^3$
    \begin{enumerate}
    \item[] The relation $g_1=y^2$ succeeds.
    \item[] The relation $g_2=x^5$ succeeds.
    \end{enumerate}
  \item[] For the monomial $x^6\,y^2$
    \begin{enumerate}
    \item[] The relation $g_1=y^2$ succeeds.
    \item[] The relation $g_2=x^5$ succeeds.
    \end{enumerate}
  \item[] For the monomial $x^7\,y$
    \begin{enumerate}
    \item[] The relation $g_2=x^5$ succeeds.
    \end{enumerate}
  \item[] For the monomial $x^8$
    \begin{enumerate}
    \item[] The relation $g_2=x^5$ succeeds.
    \end{enumerate}
  \item[] For the monomial $y^9$
    \begin{enumerate}
    \item[] We did not test $g_1$ in $y^8$. We skip testing $g_1$.
     \end{enumerate}
  \item[] For the monomial $x\,y^8$
    \begin{enumerate}
    \item[] We did not test $g_1$ in $y^8$ and $x\,y^7$. We skip testing $g_1$.
    \end{enumerate}
  \item[] For the monomial $x^2\,y^7$
    \begin{enumerate}
    \item[] We did not test $g_1$ in $x\,y^7$ and $x^2\,y^6$. We skip testing $g_1$.
    \end{enumerate}
  \item[] For the monomial $x^3\,y^6$
    \begin{enumerate}
    \item[] We did not test $g_1$ in $x^2\,y^6$ and $x^3\,y^5$. We skip testing $g_1$.
    \end{enumerate}
  \item[] For the monomial $x^4\,y^5$
    \begin{enumerate}
    \item[] We did not test $g_1$ in $x^3\,y^5$ and $x^4\,y^4$. We skip testing $g_1$.
    \end{enumerate}
  \item[] For the monomial $x^5\,y^4$
    \begin{enumerate}
    \item[] We did not test $g_1$ in $x^4\,y^4$. We skip testing $g_1$.
    \item[] The relation $g_2=x^5$ succeeds.
    \end{enumerate}
  \item[] For the monomial $x^6\,y^3$
    \begin{enumerate}
    \item[] Should the relation $g_1=y^2$ fail, we would have to add
      $y^2$ and $x^6\,y$ to the staircase, raising its size to $15$.
      We skip testing $g_1$.
    \item[] Should the relation $g_2=x^5$ fail, we would have to add
      $x^5$ and $x\,y^3$ to the staircase, raising its size to $15$.
      We skip testing $g_2$.
    \end{enumerate}
  \item[] For the monomial $x^7\,y^2$
    \begin{enumerate}
    \item[] The relation $g_1=y^2$ succeeds.
    \item[] The relation $g_2=x^5$ succeeds.
    \end{enumerate}
  \item[] For the monomial $x^8\,y$
    \begin{enumerate}
    \item[] The relation $g_2=x^5$ succeeds.
    \end{enumerate}
  \item[] For the monomial $x^9$
    \begin{enumerate}
    \item[] The relation $g_2=x^5$ succeeds.
    \end{enumerate}
  \item[] The algorithm returns relations
    $y^2,x^5$, the first one with a shift $x^7$
    and the other one with a shift $x^4$.
  \end{enumerate}
  The following figure shows the visited monomials where at least one
  relation was tested ($\cdot$) and those completely skipped
  ($\times$).
  \[
  \begin{ytableau}
    \none[y^9] &\times\\
    \none[y^8] &\times &\times\\
    \none[y^7] &\cdot &\times &\times\\
    \none[y^6] &\cdot &\times &\times &\times\\
    \none[y^5] &\cdot &\cdot  &\times &\times &\times\\
    \none[y^4] &\cdot &\cdot  &\cdot  &\cdot  &\times &\cdot\\
    \none[y^3] &\cdot &\cdot  &\cdot  &\cdot  &\cdot  &\cdot &\times\\
    \none[y^2] &\cdot &\cdot  &\cdot  &\cdot  &\cdot  &\cdot &\cdot &\cdot\\
    \none[y]   &\cdot &\cdot  &\cdot  &\cdot  &\cdot  &\cdot &\cdot &\cdot &\cdot\\
    \none[1]   &\cdot &\cdot  &\cdot  &\cdot  &\cdot  &\cdot &\cdot &\cdot &\cdot &\cdot\\
    \none      &\none[1]    &\none[x]  &\none[x^2] &\none[x^3] &\none[x^4]
               &\none[x^5] &\none[x^6] &\none[x^7] &\none[x^8] & \none[x^9]
  \end{ytableau}
  \]
  Although the monomial $x^4\,y^4$ was completely skipped, $x^5\,y^4$
  is not thanks to the relation $x^5$ that must be tested.
\end{example}


\section{The Adaptive version of the \sFGLM algorithm}
\label{s:asFGLM}
While the \BMS and \aBMS algorithms are iterative algorithms, the
\sFGLM algorithm is global,
see~\cite{issac2015,berthomieu:hal-01253934}
and~\cite[Section~4]{part1}. It finds the \gb of the ideal of
relations by computing the column rank profile of a big multi-Hankel
matrix indexed by a set of monomials $T$. In practice, this set $T$
must contain all the monomials less than the monomials in the \gb of
relations.

To circumvent the inherent complexity of computing the rank profile
of a big multi-Hankel matrix, the authors proposed an adaptive
algorithm behaving more closely to the \FGLM algorithm,
see~\cite{FGLM93}.

The goal is to iterate on a monomial $t$ and compute, for a set
$S$ such that $H_{S,S}$ is full rank, if $H_{S\cup\{t\},S\cup\{t\}}$ is
also full rank. If it is, then $t$ is added to $S$, otherwise a
relation with support in $S\cup\{t\}$ has been found. No further
relation with leading term a multiple of $t$ will be computed. When a
given lower-bound on the size of the staircase is reached, the
algorithm stops and computes the remaining relations from the leading
terms lying on the border of the staircase.

This yields the \asFGLM algorithm: Algorithm~\ref{algo:adapt_sfglm}.

\begin{algorithm2e}[htbp!]\label{algo:adapt_sfglm}
  \small
  \DontPrintSemicolon
  \TitleOfAlgo{\asFGLM (simple version).}
  \KwIn{A table $\bu=(u_{\bi})_{\bi\in\N^n}$ with coefficients in
    $\K$, $\prec$ a monomial ordering and $d$ a given bound.}
  \KwOut{A reduced truncated \gb of a zero-dimensional ideal of degree
    $\geq d$.}
  $L:=\{1\}$.\tcp*{set of next terms to study}
  $S:=\emptyset$.\tcp*{the useful staircase with respect to $\prec$}
  $G:=\emptyset,G':=\emptyset$.\;
  \While{$L\neq\emptyset$}{
    $t:=\min_{\prec}(L)$.\;
    \uIf{$H_{S\cup\{t\},S\cup\{t\}}$ is full rank}{
      $S:=S\cup\{t\}$ and $L:=L\cup\left\{x_i\,t,
        i=1,\ldots,n\right\}\setminus\{t\}$.\;
      Remove multiples of elements of $G'$ in $L$.\;
      \If(\tcp*[f]{early termination}){$\#\,S\geq d$}{
        \While{$L\neq\emptyset$}{
          $t':=\min_{\prec}(L)$.\;
          Find $\balpha$ such that $H_{S,S}\,\balpha + H_{S,\{t'\}}=0$.\;
          $G:=G\cup\left\{t'+\sum_{s\in S}\alpha_s\,s\right\}$.\;
          Remove multiples of elements of $t'$ in $L$.\;
        }
        \KwRet $G$.
      }
    }
    \Else{
      Find $\balpha$ such that $H_{S,S}\,\balpha + H_{S,\{t\}}=0$.\;
      $G':=G'\cup\{t\}$.\;
      $G:=G\cup\left\{t+\sum_{s\in S}\alpha_s\,s\right\}$.\;
      Remove multiples of $t$ in $L$ and sort $L$ by increasing order.
    }
  }
  \KwErr ``Run \sFGLM''.
\end{algorithm2e}

\begin{example}\label{ex:asfglm}
  We give the trace of the algorithm on the sequence
  $\bu=(2^i\,3^j\,(i+1))_{(i,j)\in\N^2}$ with the $\DRL(y\prec x)$
  ordering with a lower bound $2$ on the staircase size.
  \begin{enumerate}
  \item[] We set $L=\{1\}$, $S=\emptyset$, $G'=\emptyset$.
  \item[] We set $t=1$ and build the matrix $H_{S\cup\{1\},S\cup\{1\}}=\pare{
      \begin{smallmatrix}1
      \end{smallmatrix}}$ that is full rank. Hence $S=\{1\}$ and
    $L=\{y,x\}$.
  \item[] We set $t=y$ and build the matrix $H_{S\cup\{y\},S\cup\{y\}}=\pare{
      \begin{smallmatrix}1 &3\\3 &9        
      \end{smallmatrix}}$ that is not full rank. Solving
    $H_{S,S}\,\balpha+H_{S,\{y\}}=0$ yields relation $y-3$, so
    $G=\{y-3\},G'=\{y\}$ and $L$ is updated to $\{x\}$.
  \item[] We set $t=x$ and  build the matrix $H_{S\cup\{x\},S\cup\{x\}}=\pare{
      \begin{smallmatrix}1 &4\\4 &12
      \end{smallmatrix}}$ that is full rank. Hence $S=\{1,x\}$ and
    $L=\{x^2\}$.
  \item[] Now $\#\,S$ is greater or equal to the bound $2$.
    Solving $H_{S,S}\,\balpha+H_{S,\{x^2\}}=0$ yields relation $x^2-4\,x+4$, so
    $G=\{y-3,x^2-4\,x+4\}$ and $L$ is updated to $\emptyset$.

    Furthermore, the relation $y-3$ has been tested with a shift
    $\{1,y\}$ while
    the relation $x^2-4\,x+4$ has been tested with a shift $\{1,x\}$.
  \end{enumerate}
\end{example}

\begin{remark}\label{rk:asfglm_lowerbound}
  If no lower bound on the size of $S$ were given, then an infinite loop
  might occur on a non linear recurrent sequence. For instance, on the
  factorial  sequence $(i!)_{i\in\N}$, all the monomials $x^i$ would be found
  in the staircase.

  If we know the sequence is linear recurrent, then we can remove this
  bound. In that case, the last step of Example~\ref{ex:asfglm}
  becomes:
  \begin{enumerate}
  \item[] We set $t=x^2$ and build the matrix
    $H_{S\cup\{x^2\},S\cup\{x^2\}}=\pare{
      \begin{smallmatrix}1 &4 &12\\4 &12 &32\\12 &32 &80        
      \end{smallmatrix}}$ that is not full rank. Solving
    $H_{S,S}\,\balpha+H_{S,\{x^2\}}=0$ yields relation $x^2-4\,x+4$, so
    $G=\{y-3,x^2-4\,x+4\},G'=\{y,x^2\}$ and $L$ is updated to $\emptyset$.
  \end{enumerate}
  Furthermore, the relation $y-3$ has been tested with shift $\{1,y\}$ while
  the relation $x^2-4\,x+4$ has been tested with a shift $\{1,x,x^2\}$.
\end{remark}

For a generic sequence, the algorithm computes the ideal of
relations of the sequence. However, it is easy to make a
sequence such that the algorithm fails. It suffices to have a sequence
whose staircase $S$ has a subset $S'$ such that
the matrix $H_{S',S'}$ has a rank defect.

This motivated the authors to extend the algorithm to bypass this
issue in~\cite{berthomieu:hal-01253934}.

We give an example of what can happen when the wrong relations are
computed and describe their shifts.
\begin{example}\label{ex:fail_asfglm}
  We consider the ideal $I=\langle
  y^2-y,x^2\,y-x\,y,x^4-6\,x^3+11\,x^2-6\,x\rangle\subseteq\F_{11}[x,y]$
  and a sequence
  $\bu=(u_{i,j})_{(i,j)\in\N^2}$ over $\F_{11}$
  made from this ideal and some initial conditions. The first terms of
  the sequence are $\pare{
    \begin{smallmatrix}
      1 &2 &2 &2 &2 &\cdots\\
      3 &4 &4 &4 &4 &\cdots\\
      3 &4 &4 &4 &4 &\cdots\\
      -1 &4 &4 &4 &4&\cdots\\
      1 &4 &4 &4 &4&\cdots\\
      \vdots &\vdots &\vdots &\vdots &\vdots &\ddots
    \end{smallmatrix}}$.
  We also call the algorithm on the $\DRL(y\prec x)$ ordering.
  \begin{enumerate}
  \item[] We set $L=\{1\}$, $S=\emptyset$, $G'=\emptyset$.
  \item[] We set $t=1$ and build the matrix $H_{S\cup\{1\},S\cup\{1\}}=\pare{
      \begin{smallmatrix}1
      \end{smallmatrix}}$ that is full rank. Hence $S=\{1\}$ and
    $L=\{y,x\}$.
  \item[] We set $t=y$ and build the matrix $H_{S\cup\{y\},S\cup\{y\}}=\pare{
      \begin{smallmatrix}1 &2\\2 &2        
      \end{smallmatrix}}$ that is full rank. Hence $S=\{1,y\}$ and
    $L=\{x,y^2,x\,y\}$.
  \item[] We set $t=x$ and build the matrix $H_{S\cup\{x\},S\cup\{x\}}=\pare{
      \begin{smallmatrix}1 &2 &3\\2 &2 &4\\3 &4 &3        
      \end{smallmatrix}}$ that is full rank. Hence $S=\{1,y,x\}$ and
    $L=\{y^2,x\,y,x^2\}$.
  \item[] We set $t=y^2$ and build the matrix $H_{S\cup\{y^2\},S\cup\{y^2\}}=\pare{
      \begin{smallmatrix}1 &2 &3 &2\\2 &2 &4 &2\\3 &4 &3 &4\\2 &2 &4 &2        
      \end{smallmatrix}}$ that is not full rank. Solving
    $H_{S,S}\,\balpha+H_{S,\{y^2\}}=0$ yields relation $y^2-y$ with a
    shift $\{1,y,x,y^2\}$, so
    $G=\{y^2-y\},G'=\{y^2\}$ and $L$ is updated to $\{x\,y,x^2\}$.
  \item[] We set $t=x\,y$ and  build the matrix
    $H_{S\cup\{x\,y\},S\cup\{x\,y\}}=\pare{
      \begin{smallmatrix}1 &2 &3 &4\\2 &2 &4 &4\\3 &4 &3 &4\\4 &4 &4 &4        
      \end{smallmatrix}}$ that is not full rank. Solving
    $H_{S,S}\,\balpha+H_{S,\{x\,y\}}=0$ yields relation $x\,y-x-y+1$
    with a shift $\{1,y,x,x\,y\}$, so
    $G=\{y^2-y,x\,y-x-y+1\},G'=\{y^2,x\,y\}$ and $L$ is updated to $\{x^2\}$.
  \item[] We set $t=x^2$ and  build the matrix $H_{S\cup\{x^2\},S\cup\{x^2\}}=\pare{
      \begin{smallmatrix}1 &2 &3 &3\\2 &2 &4 &4\\
        3 &4 &3 &-1\\3 &4 &-1 &1        
      \end{smallmatrix}}$ that is full rank. Hence $S=\{1,y,x,x^2\}$ and
    $L=\{x^3\}$.
  \item[] We set $t=x^3$ and  build the matrix $H_{S\cup\{x^3\},S\cup\{x^3\}}=\pare{
      \begin{smallmatrix}1 &2 &3 &3 &-1\\
        2 &2 &4 &4 &4\\
        3 &4 &3 &-1 &1\\
        3 &4 &-1 &1 &2\\
        -1 &4 &1 &2 &6
      \end{smallmatrix}}$ that is not full rank. Solving
    $H_{S,S}\,\balpha+H_{S,\{x^3\}}=0$ yields relation
    $g_3=x^3+3\,x^2+10\,x+y+4$ with a shift
    $\{1,y,x,x^2\}$, so
    $G=\{y^2-y,x\,y-x-y+1,x^3+3\,x^2+10\,x+y+4\},G'=\{y^2,x\,y,x^3\}$
    and $L$ is updated to $\emptyset$.
  \end{enumerate}
  
  We can notice that
  \begin{itemize}
  \item the first relation, $y^2-y$ is really a relation of $\bu$
    but has only, a priori, a shift $\{1,y,x,y^2\}$, \ie its
    shift is $y^2$.
  \item the second relation, $x\,y-x-y+1$, is not a real
    relation of $\bu$ and is known to have a shift
    $\{1,y,x,x\,y\}$. Actually we can check that
    $[y^2\,(x\,y-x-y+1)]=0$ and $[x^2\,(x\,y-x-y+1)]=4$, so
    that the relation has a shift $\{1,y,x,y^2,x\,y\}$, \ie its
    shift is $x\,y$ and its fail is $x^3\,y$.
  \item the third relation, $x^3+3\,x^2+10\,x+y+4$,
    is not a true relation of $\bu$ and
    is known to have a shift
    $\{1,y,x,x^2,x^3\}$. Actually we can check that
    $[y^2\,(x\,y-x-y+1)]=0$ and
    $[x\,y\,(x^3+3\,x^2+10\,x+y+4)]=-1$, \ie its shift is
    $y^2$ and its fail is $x^4\,y$.
  \end{itemize}
  All in all, we computed the relation $x^3+3\,x^2+10\,x+y+4$
  assuming it should be
  valid when multiplied by $x^2$ or $x^3$, while it cannot be valid
  when multiplied by $x\,y\prec x^2\prec x^3$.
\end{example}


\section{Analogies and differences of the adaptive variants}
\label{s:comparison_adapt}
We now compare theoretically the \aBMS and the \asFGLM algorithms.
As the \aBMS algorithm
differs from the \BMS algorithm just in the execution: mainly some
testings are skipped, results from~\cite[Section~6]{part1} are still
valid for the \aBMS algorithm. On the other hand, the \asFGLM
algorithm does not
necessarily provide the same output as the \sFGLM algorithm.

\subsection{Closed staircase}
In~\cite[Section~5.1, Theorem~7]{part1}, we show that the \BMS
algorithm always returns a zero-dimensional ideal while the \sFGLM
algorithm can
return a zero-dimensional or a positive-dimensional ideal. This is in
fact one of the main differences between these two algorithms.

In the following theorem, we prove that the \aBMS algorithm and the
\asFGLM algorithm are closer on that matter assuming one knows the
size of the output staircase in advance.
\begin{theorem}\label{th:closed_staircase_adapt}
  Let $\bu$ be a sequence, $\prec$ be a monomial ordering and $d$ be
  the size of the staircase.
  
  Calling the \aBMS algorithm  on
  $\bu$, $\prec$, $d$ and a stopping monomial $M$ yields a
  truncated \gb of a zero-dimensional ideal.

  Calling the \asFGLM algorithms on $\bu$, $\prec$ and $d$
  yields a truncated \gb of a zero-dimensional ideal.
\end{theorem}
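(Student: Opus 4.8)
The plan is to reduce both statements to a single finiteness criterion and then verify it for each algorithm separately. Recall that a monomial ideal $J$ is zero-dimensional if and only if its set of standard monomials $\cT\setminus\{m:\exists\,g,\ \LM(g)\mid m\}$ is finite, equivalently if and only if $J$ contains a pure power $x_i^{k_i}$ of each variable. Indeed, if the staircase $S$ is finite then for each $i$ only finitely many of the pairwise distinct monomials $1,x_i,x_i^2,\dots$ can lie in $S$, so some $x_i^{k_i}$ is a multiple of a leading monomial, forcing zero-dimensionality; the converse is classical. Since the ideal generated by a truncated \gb and the ideal generated by its leading monomials share the same staircase, it therefore suffices, in both cases, to prove that the staircase $S$ returned by the algorithm is finite and that the returned set is a truncated \gb with $\Staircase=S$.

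For the \aBMS algorithm, the key observation is the loop invariant $\#\,\Stabilize(S)\leq d$. First I would check it is preserved at each monomial $m$: a failing relation $g$ with $\tfrac{m}{\LM(g)}\in\Stabilize(S)$ only replaces a relation with the same leading monomial and does not enlarge the staircase, whereas if $\tfrac{m}{\LM(g)}\notin\Stabilize(S)$ the skipping test precisely forbids the update whenever $\#\,\Stabilize(S\cup\{\LM(g),\tfrac{m}{\LM(g)}\})>d$; hence after stabilization the staircase never exceeds $d$. Consequently the final staircase is finite. Because \aBMS differs from \BMS only by skipping some testings while keeping the relations, the staircase stays a closed set (an order ideal) and the output $\cG=\Border(S)$ is a truncated \gb whose staircase is exactly $S$, as in~\cite[Section~3]{part1}. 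Finiteness of $S$ then yields a zero-dimensional ideal by the criterion above.

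For the \asFGLM algorithm, I would argue that a returned output can only be produced through the early-termination branch, which is entered once $\#\,S\geq d$; at that moment $S$ is finite and is frozen, since the inner loop no longer enlarges it. The remaining work is to show that the inner loop turns every border monomial of $S$ into a relation. Here I would first establish that $S$ is an order ideal throughout the execution: processing monomials in $\prec$-increasing order, together with the removal of multiples of $G'$ from $L$, maintains the invariant that every monomial $\prec\min_\prec(L)$ already lies in $S$ or is a multiple of a previously found leading term; a proper divisor of a monomial added to $S$ is $\prec$ it and not a multiple of $G'$, hence lies in $S$. Given this, the set $G'\cup L$ at the entry of the inner loop contains all minimal monomials outside $S$, and the inner loop, which removes multiples of each processed $t'$, exhausts $L$ in finitely many steps and produces a relation for each of these border monomials. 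Thus $\Staircase(\cG)=S$ is finite, and the output is a truncated \gb of a zero-dimensional ideal.

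The main obstacle is the bookkeeping that the skipping in \aBMS and the FGLM-style exploration in \asFGLM both preserve the order-ideal (closed) structure of $S$, since this is what lets finiteness of $S$ upgrade to zero-dimensionality; for \asFGLM one must additionally verify that no border monomial is ever lost before the inner loop processes it, and that reaching $\#\,S\geq d$ is the only way the algorithm returns rather than raising the error.
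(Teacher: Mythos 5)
Your proposal is correct and takes essentially the paper's approach: both reduce zero-dimensionality to the observation that the returned leading monomials form the border, equivalently the set of $|$-minimal elements of the complement, of a \emph{finite} staircase, which must therefore contain a pure power of every variable; your detailed verifications (the order-ideal invariant for the \asFGLM staircase, the fact that the algorithm only returns through the early-termination branch, and the exhaustion of the minimal border monomials by the inner loop) simply make explicit what the paper's two-sentence proof, resting on the line $G':=\Border(S')$ and on the minimality of the leading terms for both $\prec$ and $|$, leaves implicit. One caveat on an inessential intermediate step: your loop invariant $\#\,\Stabilize(S)\leq d$ for the \aBMS algorithm is not quite forced by the skipping test as you argue, because that test compares against the staircase $S$ of the previous iteration, while several relations may fail at the same monomial $m$ within one pass and enlarge $S'$ cumulatively beyond $d$. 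This does not affect your conclusion, since finiteness of the final staircase---the only thing your argument actually uses---holds unconditionally: only the finitely many monomials $m\preceq M$ are visited, and each failure adds at most two monomials.
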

\begin{proof}
  The first part of the result comes directly from the line
  $G':=\Border(S')$ in the description of the \aBMS algorithm, Algorithm~\ref{algo:abms}.
  
  The second part of the result comes from the fact that the leading
  terms of the relations are lying in the border of the staircase and
  are minimal for both $\prec$ and $|$. Thus, for any variable $x_i$,
  there always exists a
  relation with leading term a pure power of $x_i$.
\end{proof}

It is possible to change this early termination procedure so that the
\asFGLM algorithm is closer to the \sFGLM algorithm, yielding a
potential positive-dimensional algorithm. If we still want to try to
close as much as possible the staircase with degenerate square
matrices,
it suffices to check that the
relation $t'+\sum_{s\in S}\alpha_s\,s$ is valid with a shift
$S\cup\{t'\}$. This yields Algorithm~\ref{algo:tweaked_adapt_sfglm}.
\begin{algorithm2e}[htbp!]\label{algo:tweaked_adapt_sfglm}
  \small
  \DontPrintSemicolon
  \TitleOfAlgo{Tweaked \asFGLM.}
  \KwIn{A table $\bu=(u_{\bi})_{\bi\in\N^n}$ with coefficients in
    $\K$, $\prec$ a monomial ordering and $d$ a given bound.}
  \KwOut{A reduced truncated \gb of a zero-dimensional ideal of degree
    $\geq d$.}
  $L:=\{1\}$.\tcp*{set of next terms to study}
  $S:=\emptyset$.\tcp*{the useful staircase with respect to $\prec$}
  $G:=\emptyset,G':=\emptyset$.\;
  \While{$L\neq\emptyset$}{
    $t:=\min_{\prec}(L)$.\;
    \uIf{$H_{S\cup\{t\},S\cup\{t\}}$ is full rank}{
      $S:=S\cup\{t\}$ and $L:=L\cup\left\{x_i\,t,
        i=1,\ldots,n\right\}\setminus\{t\}$.\;
      Remove multiples of elements of $G'$ in $L$.\;
      \If(\tcp*[f]{early termination}){$\#\,S\geq d$}{
        \While{$L\neq\emptyset$}{
          $t':=\min_{\prec}(L)$.\;
          Find $\balpha$ such that $H_{S,S}\,\balpha +
          H_{S,\{t'\}}=0$.\;
          \If{$H_{\{t'\},S}\,\balpha+H_{\{t'\},\{t'\}}=0$}{
            $G:=G\cup\left\{t'+\sum_{s\in S}\alpha_s\,s\right\}$.\;
          }
          Remove multiples of elements of $t'$ in $L$.\;
        }
        \KwRet $G$.
      }
    }
    \Else{
      Find $\balpha$ such that $H_{S,S}\,\balpha + H_{S,\{t\}}=0$.\;
      $G':=G'\cup\{t\}$.\;
      $G:=G\cup\left\{t+\sum_{s\in S}\alpha_s\,s\right\}$.\;
      Remove multiples of $t$ in $L$ and sort $L$ by increasing order.
    }
  }
  \KwErr ``Run \sFGLM''.
\end{algorithm2e}

\subsection{Reduction of relations}
The \asFGLM algorithm computes a staircase and then relations with support
in the staircase except their leading terms that lie on the border.
On the other hand, although the \aBMS algorithm may compute the same
ideal of relations as the \asFGLM algorithm, their \gb can be different.
\begin{theorem}\label{th:reduced_gb_adapt}
  Let $\bu$ be a sequence, $\prec$ be a monomial ordering and $d$ be
  the size of the staircase.
  
  Calling the \asFGLM algorithms on $\bu$, $\prec$, and $d$
  yields a truncated reduced \gb of an ideal.

  Calling the \aBMS algorithm  on
  $\bu$, $\prec$, $d$ and a stopping monomial $M$ yields a
  truncated minimal \gb of an ideal, which is not necessarily reduced.

  Furthermore, even if $\bu$ is linear recurrent and the
  \asFGLM algorithm computes the ideal of relations of $\bu$, then there
  is no reason for the output of the \aBMS algorithm to be reduced.
\end{theorem}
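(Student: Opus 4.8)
The plan is to prove the three assertions of Theorem~\ref{th:reduced_gb_adapt} in turn, relying on the structural descriptions of the two algorithms given earlier in the paper.

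\medskip

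\textbf{The \asFGLM output is reduced.} First I would argue directly from the description of Algorithm~\ref{algo:adapt_sfglm}. Every relation produced has the form $t+\sum_{s\in S}\alpha_s\,s$ where $t$ is the current minimal element of $L$ and $S$ is the current staircase. Two facts must be established: (i) $\LM(t+\sum_{s\in S}\alpha_s\,s)=t$, so that the leading monomials are exactly the border elements $t\in G'$ together with the early-termination borders; and (ii) every monomial in the support other than the leading one lies in $S$, and is therefore not divisible by any leading term. Point (ii) is immediate since the tail is supported on $S$, and no element of $S$ is a multiple of a border leading term (by the way multiples are removed from $L$ in the line ``Remove multiples of elements of $G'$ in $L$''). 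Point (i) follows because $t$ is chosen minimal in $L$ and $S$ only ever contains monomials $\prec t$ at the moment $t$ is processed. Combining (i) and (ii) gives the reducedness condition of Definition~\ref{def:staircase}.

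\medskip

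\textbf{The \aBMS output is minimal but not necessarily reduced.} Here I would invoke that the \aBMS algorithm differs from the \BMS algorithm only by skipping some testings during execution, as noted at the start of Section~\ref{s:comparison_adapt}, so the minimality result from~\cite[Section~6]{part1} carries over. Concretely, the line $G':=\Border(S')$ guarantees that the leading monomials of the output are precisely the minimal border monomials of the staircase $S'$; hence no leading monomial divides another and the basis is minimal. However, the relations are built by the update formulas in the second \texttt{For} loop of Algorithm~\ref{algo:abms}, which combine the previous relation $g$ with failing relations $h$, and there is no reduction step forcing the tail of each $g'$ to be supported only on the staircase. Thus a monomial in $\supp(g')$ may be a multiple of some other leading monomial, violating reducedness.

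\medskip

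\textbf{The last assertion: an explicit family of examples.} The cleanest route is to exhibit a linear recurrent sequence on which \asFGLM recovers the ideal of relations but the \aBMS output is provably non-reduced. The binomial sequence $\bin$ of Example~\ref{ex:binom} is the natural candidate: tracing \aBMS as in Example~\ref{ex:truncated_binom_bms} yields the minimal \gb $\{x\,y-y-1,\ y^3,\ x^3-3\,x^2+3\,x-1\}$ for $\DRL(y\prec x)$, and one checks that the support of $x^3-3\,x^2+3\,x-1$ contains the monomial $x\,y$? — no; rather, the obstruction is that the leading term $x\,y$ of $g_1$ divides no tail monomial here, so I would instead pick a family where the staircase forces overlap. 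A robust choice is the ideal family $I=\langle x^p,\,y^q\rangle$ from Example~\ref{ex:bound_for_rect}: the minimal \gb leading monomials are $x^p$ and $y^q$, and the \aBMS update produces tails whose monomials can be divisible by the other generator's leading term whenever the staircase is rectangular, whereas \asFGLM returns the already-reduced pure-power generators. I would present one small instance ($p=q=2$, say) with the explicit \aBMS trace showing a non-reduced tail, and remark that the same phenomenon persists for the whole family.

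\medskip

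The main obstacle will be the third part: I must be careful to choose a sequence for which \emph{both} algorithms succeed in finding the true ideal (so that the comparison is meaningful and not confounded by an \asFGLM failure as in Example~\ref{ex:fail_asfglm}), yet for which the \aBMS update genuinely introduces a tail monomial divisible by another leading monomial. Verifying non-reducedness reduces to inspecting the explicit output of the update formula in Algorithm~\ref{algo:abms} on the chosen instance, which is a finite check rather than a general argument.
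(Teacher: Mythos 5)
Your first two parts coincide with the paper's own proof. For the \asFGLM algorithm, every returned relation has the form $t+\sum_{s\in S}\alpha_s\,s$ with tail supported on the staircase $S$, and no element of $S$ is a multiple of any leading term, which is exactly the paper's one-line argument for reducedness; for the \aBMS algorithm, the line $G':=\Border(S')$ forces the leading monomials to be the minimal border monomials, giving minimality, while no step of Algorithm~\ref{algo:abms} reduces the tails, so reducedness may fail. Note that the paper's justification of the third clause is in fact the \emph{same} structural remark (``there is no reason for $\LT(g)$ not to divide any monomial in the support of $g'$''), with Example~\ref{ex:reduced} serving as corroboration immediately after the proof; your insistence on producing an explicit witness is a stronger goal than the proof itself requires, but it is also where your proposal has a genuine gap.

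Concretely: you correctly discard the binomial sequence (its \aBMS output $\{x\,y-y-1,\ y^3,\ x^3-3\,x^2+3\,x-1\}$ is reduced), but your fallback choice --- the rectangle family $\langle x^p,y^q\rangle$ of Example~\ref{ex:bound_for_rect}, with a promised but unexecuted trace at $p=q=2$ --- is not safe, and your closing remark that ``the same phenomenon persists for the whole family'' is false as stated. The paper itself contains a counterexample to that claim: the sequence with $u_{4,1}=1$ and $u_{i,j}=0$ elsewhere (Section~\ref{s:aBMS}) has leading monomials $y^2,x^5$, exactly of rectangle type, yet the \aBMS output is the pure-power basis $\{y^2,x^5\}$, which is reduced. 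Whether the tails produced by the update formula overlap another leading monomial depends on the initial conditions of the sequence, not merely on the shape of the staircase, so the deferred ``finite check'' must actually be carried out on a verified instance. The paper supplies one: $\bu=\pare{i^2+j^2-1}_{(i,j)\in\N^2}$ with $\DRL(y\prec x)$, for which \aBMS returns $g_2=x^2-\frac{1}{3}\,x\,y-y^2-\frac{5}{3}\,x+\frac{7}{3}\,y-\frac{1}{3}$ and $g_3=y^3-\frac{1}{2}\,x\,y-3\,y^2+\frac{1}{2}\,x+\frac{7}{2}\,y-\frac{3}{2}$, whose supports contain $x\,y=\LM(g_1)$, while \asFGLM returns the reduced basis of the same ideal $I$. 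Substituting this (or any actually computed) example for your conjectural rectangle instance closes the argument.
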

\begin{proof}
  For two distinct polynomials $g,g'$ in the \gb returned by
  \asFGLM algorithm, $\LT(g)$ does not divide any monomial in the
  support of $g'$. Hence the \gb is reduced.

  For two distinct polynomials $g,g'$ in the \gb returned by
  \aBMS algorithm, $\LT(g)$ does not divide $\LT(g')$. Hence the \gb
  is minimal. However, there is no reason for $\LT(g)$ not to divide
  any monomial in the support of $g'$.
\end{proof}
\begin{example}\label{ex:reduced}
  We let $\bu=\pare{i^2+j^2-1}_{(i,j)\in\N^2}$
  be a sequence and consider the $\DRL(y\prec x)$ ordering.
  The ideal of relations of $\bu$ is $I=\langle
  x\,y-x-y+1,x^2-y^2-2\,x+2\,y,y^3-3\,y^2+3\,y-1\rangle$.

  The \aBMS algorithm called on $\bu$ and the stopping monomial $y^5$ returns
  $g_1=x\,y-x-y+1$, with shift $x^2$,
  $g_2=x^2-\frac{1}{3}\,x\,y-y^2-\frac{5}{3}\,x+\frac{7}{3}\,y-\frac{1}{3}$,
  with shift $x^2$ and
  $g_3=y^3-\frac{1}{2}\,x\,y-3\,y^2+\frac{1}{2}\,x+\frac{7}{2}\,y-\frac{3}{2}$,
  with shift $y^2$. We can notice that $\{g_1,g_2,g_3\}$ is a \gb but not a reduced
  \gb of $I$.
  
  The \asFGLM algorithm called on $\bu$ and the set of all the
  monomials of degree at most $3$ yields relations
  $g_1'=x\,y-x-y+1,g_2'=x^2-y^2-2\,x+2\,y,g_3'=y^3-3\,y^2+3\,y-1$.
  We can notice that
  $\{g_1',g_2',g_3'\}=\{g_1,g_2+\frac{1}{3}\,g_1,g_3+\frac{1}{2}\,g_1\}$ 
  is a reduced \gb of $I$.


\end{example}

As for the \BMS algorithm, it is not hard to tweak the \aBMS algorithm
so that it returns a reduced \gb. It suffices to perform an
inter-reduction of the relations either at the end of each step of the
main \textbf{For} loop or just before returning the \gb,
see Algorithm~\ref{algo:tweaked_abms}.

\begin{algorithm2e}[htbp!]\label{algo:tweaked_abms}
  \small
  \DontPrintSemicolon
  \TitleOfAlgo{Tweaked \aBMS algorithm.}
  \KwIn{A table $\bu=(u_{\bi})_{\bi\in\N^n}$ with coefficients in
    $\K$, a monomial ordering $\prec$, a given bound $d$
    and a monomial $M$
    as the stopping condition.}
  \KwOut{A set $G$ of relations generating $I_M$.}
  $T := \{m\in\K[\x], m\preceq M\}$.\;
  $G := \{1\}$.\;
  $S := \emptyset$.\;
  \Forall{$m\in T$}{
    $S' := S$.\;
    \For{$g\in G$}{
      \If{$\LM(g)| m$}{
        \If{$\frac{m}{\LM(g)}\not\in\Stabilize(S)$ 
          \KwAnd $\#\,\Stabilize\pare{S\cup\acc{\LM(g),
              \frac{m}{\LM(g)}}}> d$}{
          \KwNext.\tcp*{skip this relation testing}
        }
        $e:=\cro{\frac{m}{\LM(g)}\,g}_{\bu}$\;
        \If{$e\neq 0$}{
          $S':=S'\cup
          \acc{\cro{\frac{g}{e},\frac{m}{\LM(g)}}}$.\;
        }
      }
    }
    $S':=
    \min_{\fail(h)\in S'}\acc{[h,\fail(h)/\LM(h)]}$.\;
    $G':= \Border (S')$.\;
    \For{$g'\in G'$}{
      Let $g\in G$ such that $\LM(g)|\LM(g')$.\; 
      \uIf{$\LM(g) \nmid m$}{
        $g':=\frac{\LM(g')}{\LM(g)}\,g$.\;
      }
      \uElseIf{$\exists\,h\in S,
        \frac{m}{\LM(g')} |\fail(h)$}{
        $g':= \frac{\LM(g')}{\LM(g)}\,g
        -\cro{\frac{m}{\LM(h)}\,h}_{\bu}\,
        \frac{\LM(g')\,\fail(h)}{m}\,h$.\;
      }
      \lElse{
        $g':=g$.
      }
    }
    $G := \InterReduce(G')$ \;
    $S := S'$.\;
  }  
  \KwRet $G$.
\end{algorithm2e}

\subsection{Validity of relations}\label{ss:validity}
One of the main differences between the \BMS and the \sFGLM algorithms
is the validity of the relations they return. Given a \gb returned by
both algorithms. Loosely speaking, the
\sFGLM algorithm will only ensure
that all the relations in the \gb have the same shifts while for the
\BMS algorithm, the
smaller the leading term of a relation is, the larger its shift is
computed. See~\cite[Theorem~19]{part1}.

Naturally, if the given upper bound on the size of the staircase to
the \aBMS algorithm is correct, then the shifts computed by the \aBMS
algorithm are the same as those computed by the \BMS algorithm.

In Examples~\ref{ex:asfglm} and~\ref{ex:fail_asfglm}, we can see that
the shifts computed by the \asFGLM algorithm are not all the
same. This is the main difference between the \sFGLM and the \asFGLM algorithms.

In fact, we prove in the following Theorem~\ref{th:valid_shift_adapt}
that the larger the leading
term of a computed relation, the larger its shift.
\begin{theorem}\label{th:valid_shift_adapt}
  Let $\bu$ be a sequence, $\prec$ be a monomial ordering and $d$ be
  the size of the output staircase $S$. Let $S_M=\{m\in S,\ m\prec M\}$.

  Calling the \aBMS algorithm on $\bu$, $\prec$, $d$ and a stopping monomial
  $M$ yields
  relations $g_1,\ldots,g_r$ and shifts $v_1,\ldots,v_r$ such that
  \[\forall\,i, 1\leq i\leq r,\quad v_i\,\LM(g_i)\preceq M\]
  and $g_i$ is valid with a shift $v_i$, potentially $0$.

  Calling the \asFGLM algorithm on $\bu$, $\prec$ and $d$
  yields relations $g_1',\ldots,g_{r'}'$ such that
  \[\forall\,i, 1\leq i\leq r',\quad \deg\LM(g_i')\leq d\]
  and $g_i'$ has a shift $S_{\LM(g_i')}\cup\{\LM(g_i')\}$ if
  $\LM(g_i')\succ\max_{\prec}(S)$ and $S$ otherwise.
\end{theorem}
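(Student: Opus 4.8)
For the \aBMS half, the plan is to track, for each returned relation $g_i$, the set of shifts on which it has been certified, and to show this set is exactly the downward-closed set $\{m,\ m\,\LM(g_i)\preceq M\}$, whose largest element is the claimed $v_i$ (with $v_i=0$, an empty shift, precisely when $\LM(g_i)\succ M$, so that no admissible shift exists, giving the ``potentially $0$''). Since the \aBMS algorithm differs from the \BMS algorithm only by skipping some testings, I would first invoke the validity statement for the \BMS algorithm, \cite[Theorem~19]{part1}, which already gives $\cro{m\,g_i}=0$ for every $m$ with $m\,\LM(g_i)\preceq M$ at which $g_i$ was actually tested, together with the bound $v_i\,\LM(g_i)\preceq M$. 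The one extra ingredient is a skip-safety lemma: a testing of $g$ at $m$ is skipped only when $\frac{m}{\LM(g)}\not\in\Stabilize(S)$ and $\#\,\Stabilize\pare{S\cup\acc{\LM(g),\frac{m}{\LM(g)}}}>d$. Were $g$ to fail there, both $\LM(g)$ and $\frac{m}{\LM(g)}$ would enter the staircase, forcing $\#\,\Stabilize(S)>d$; as $d$ is the size of the final staircase and the computed staircase only grows, this is impossible, so the skipped relation is in fact valid at $m$. Feeding this into the loop invariant ``every $g\in G$ is valid on all already-visited multiples of $\LM(g)$'' closes the \aBMS part.

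For the \asFGLM part, I would split according to the two places a relation is created. A relation created in the main \textbf{While} loop (the \textbf{Else} branch) appears only after the test ``$H_{S\cup\{t\},S\cup\{t\}}$ full rank'' has failed; since $H_{S,S}$ is full rank, the kernel of the rank-deficient matrix $H_{S\cup\{t\},S\cup\{t\}}$ is one-dimensional with nonzero $t$-coordinate, so after normalization it yields $g=t+\sum_{s\in S}\alpha_s\,s$ with $\cro{s\,g}=0$ for every $s\in S$ \emph{and} $\cro{t\,g}=0$. Hence such a $g$ is certified on $S_{\mathrm{curr}}\cup\{t\}$, where $S_{\mathrm{curr}}$ is the staircase at that instant. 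A relation created in the early-termination inner loop is obtained only by solving $H_{S,S}\,\balpha+H_{S,\{t'\}}=0$, so only $\cro{s\,g}=0$ for $s\in S$ is guaranteed and its certified shift is $S$. Because the staircase is enlarged in increasing $\prec$-order, $S_{\mathrm{curr}}$ equals $\{m\in S,\ m\prec\LM(g)\}=S_{\LM(g)}$, turning the first shift into $S_{\LM(g)}\cup\{\LM(g)\}$.

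It remains to match these two shifts to the regimes $\LM(g)\succ\max_{\prec}(S)$ and $\LM(g)\preceq\max_{\prec}(S)$ of the statement, and this correspondence is the main obstacle. The plan is to examine $L$ and the staircase at the instant a relation is emitted and to prove that the main-loop rank drops account for the leading terms with $\LM(g)\succ\max_{\prec}(S)$, whereas the early-termination loop accounts for those with $\LM(g)\preceq\max_{\prec}(S)$, using that $\max_{\prec}(S)$ is the last monomial inserted into the staircase, that $S$ is frozen once $\#\,S\geq d$, and that the deletions ``remove multiples of elements of $G'$'' govern which frontier monomials survive in $L$. Substituting the per-branch shifts computed above, and noting $S_{\LM(g)}=S$ exactly when $\LM(g)\succ\max_{\prec}(S)$, then yields the stated assignment: shift $S_{\LM(g_i')}\cup\{\LM(g_i')\}$ when $\LM(g_i')\succ\max_{\prec}(S)$ and shift $S$ otherwise, while the degree bound $\deg\LM(g_i')\leq d$ follows from the early-termination threshold $\#\,S\geq d$ and from the leading terms lying on the border of a staircase of size $d$. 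The delicate point throughout is keeping the bookkeeping of $S_{\mathrm{curr}}$ against the final $S$ consistent, since only the relations emitted after the staircase is frozen see the full $S$, and I expect this regime identification, rather than the linear-algebra shift computation, to carry the weight of the argument.
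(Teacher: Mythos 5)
Your \aBMS half is sound, and in fact supplies the detail the paper leaves implicit (its proof of the first part is just ``clear from the behavior'' of the algorithms): the skip-safety lemma is exactly the right ingredient, since if $g$ failed at a skipped $m$ then, by Proposition~\ref{prop:iter}, both $\LM(g)$ and $\frac{m}{\LM(g)}$ would belong to the staircase, whose stabilization would then have more than $d$ elements --- impossible when $d$ is the size of the final staircase; and the theorem only claims \emph{some} valid shift $v_i$ with $v_i\,\LM(g_i)\preceq M$, so you prove even more than is asked. Your per-branch analysis of the \asFGLM algorithm also matches the paper's proof: an \textbf{Else}-branch relation at $t$ comes from the one-dimensional kernel (with nonzero $t$-coordinate) of the symmetric rank-deficient matrix $H_{S^t\cup\{t\},S^t\cup\{t\}}$, hence is certified on $S^t\cup\{t\}=S_t\cup\{t\}$, whereas an early-termination relation only solves $H_{S,S}\,\balpha+H_{S,\{t'\}}=0$ and is certified on $S$ alone.

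The genuine gap is your final regime identification, which is exactly backwards and cannot be proved. Since $t:=\min_{\prec}(L)$ and each inserted $x_i\,t$ satisfies $x_i\,t\succ t$, the monomials are treated in increasing $\prec$-order, and $\max_{\prec}(S)$ is the \emph{last} monomial inserted into the staircase before the test $\#\,S\geq d$ fires; consequently every \textbf{Else}-branch relation has $\LM(g')\prec\max_{\prec}(S)$ and every early-termination relation has $\LM(g')\succ\max_{\prec}(S)$ --- the opposite of your plan. What misled you is the statement itself: the two cases in its last sentence are swapped (a misprint), as shown by the Introduction's restatement of the same theorem and by Example~\ref{ex:asfglm}, where $y-3$ (with $\LM=y\prec x=\max_{\prec}(S)$) gets shift $\{1,y\}=S_y\cup\{y\}$ while $x^2-4\,x+4$ (with $\LM=x^2\succ x$) gets shift $S=\{1,x\}$. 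The correct assignment is: shift $S$ when $\LM(g_i')\succ\max_{\prec}(S)$, and $S_{\LM(g_i')}\cup\{\LM(g_i')\}$ otherwise; your own per-branch computations prove precisely this once combined with the order argument above. Forcing them into the misprinted split also fails pointwise: for $\LM(g')\succ\max_{\prec}(S)$ one has $S_{\LM(g')}\cup\{\LM(g')\}=S\cup\{\LM(g')\}$, and the early-termination branch never checks the row indexed by $\LM(g')$ --- adding that check is exactly what distinguishes the tweaked variant, Algorithm~\ref{algo:tweaked_adapt_sfglm}, from the plain \asFGLM algorithm.
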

\begin{proof}
  The first part is clear from the behavior of both the \BMS and the
  \aBMS algorithms.

  The second part comes from the fact that if $g_i'$, with $\LM(g_i')=t$
  is found before $S$ is
  completed, then it was because the matrix $H_{S^t\cup\{t\},S^t\cup\{t\}}$ had a
  rank default, where $S^t$ is the state for $S$ at loop
  $t$. Furthermore, $S^t=S_{\LM(g_i')}=S_t$.

  Otherwise, it is computed by solving
  $H_{S,S}\,\balpha+H_{S,\{t'\}}=0$ so that the relation has only been
  tested with a shift $S$.
\end{proof}
In a way, the behavior of the
\asFGLM algorithm is the opposite of the behaviors of
the \BMS and the \aBMS algorithms.

Furthermore, if one uses Algorithm~\ref{algo:tweaked_adapt_sfglm}
instead of the \asFGLM algorithm, then
each returned relation $g_i'$ has a shift $S_{\LM(g_i')}\cup\{\LM(g_i')\}$.
\begin{example}
  Let us consider the sequence $\bu=(F_{i+1})_{(i,j)\in\N^2}$, where
  $(F_i)_{i\in\N}$ is the Fibonacci sequence. Its ideal of relation is
  $\langle y-1,x^2-x-1\rangle$ so that its staircase has size $2$.

  Calling the \asFGLM algorithm on this sequence with this bound of
  the staircase makes us creating the matrices
  \begin{enumerate}
  \item[] $H_{\{1\},\{1\}}$, which is full rank, hence $1\in S$;
  \item[] $H_{\{1,y\},\{1,y\}}$, which is not full rank, hence the relation
    $y-1$ is found with a shift $\{1,y\}$;
  \item[] $H_{\{1,x\},\{1,x\}}$, which is full rank, hence $x\in S$.
  \end{enumerate}
  Now, the staircase is found so it remains to solve
  $H_{S,S}\,\balpha+H_{S,\{x^2\}}=0$ yielding the relation $x^2-x-1$
  with a shift $S$.
\end{example}

\subsection{Monomial ordering and Set of Terms}\label{ss:shape_position}
In this section, we study how both algorithms handle a monomial
ordering that is not a weighted degree ordering. The classical
specification of the \BMS algorithm are that the ordering must be a
weighted ordering. However, when running the \aBMS algorithm, the
upper bound on the staircase size makes us never visit monomials
of degree more than twice this size. Therefore, we can now use any
monomial ordering with the \aBMS algorithm by just enumerating, in
increasing order, all the monomials of degree less than
twice the upper bound.

This allows us to deal with ideal in shape position with both the
\aBMS and the \asFGLM algorithms.


\begin{theorem}\label{th:shape_position_adapt}
  Let $\bu$ be a linear recurrent sequence whose
  ideal of relation $I$ is in shape position for the
  $\LEX(x_n\prec\cdots\prec x_2\prec x_1)$ ordering, \ie there exist
  $g_n$ squarefree and $f_{n-1},\ldots,f_1\in\K[x_n]$ with $\deg
  g_n=d,\deg f_i<d$ such that 
  $I=\langle g_n(x_n),x_{n-1}-f_{n-1}(x_n),\ldots,x_1-f_1(x_n)\rangle$.
  
  Assuming no error
  is thrown in the execution of the \asFGLM algorithm called on $\bu$, $d$ and
  $\LEX(x_n\prec\cdots\prec x_2\prec x_1)$, then the ouput is $I$.

  Calling the \aBMS algorithm on $\bu$, $d$ and
  $\LEX(x_n\prec\cdots\prec x_2\prec x_1)$ yields $I$.
\end{theorem}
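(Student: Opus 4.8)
The plan is to exploit the very rigid structure that shape position imposes on the staircase and on the multi-Hankel matrix. Since $g_n$ is squarefree of degree $d$, it has $d$ distinct roots $\zeta_1,\dots,\zeta_d$, and the variety $V(I)$ consists of the $d$ distinct simple points $P_k=(f_1(\zeta_k),\dots,f_{n-1}(\zeta_k),\zeta_k)$. Because $\bu$ is linear recurrent with \emph{ideal of relations exactly} $I$ (a radical zero-dimensional ideal), it admits a spectral decomposition $u_\bi=\sum_{k=1}^d\lambda_k\,P_k^{\bi}$ with all $\lambda_k\neq 0$. The reduced \gb of $I$ for $\LEX(x_n\prec\cdots\prec x_1)$ has leading monomials $x_n^d,x_{n-1},\dots,x_1$, so its staircase is $S=\acc{1,x_n,\dots,x_n^{d-1}}$ with $s_{\max}=x_n^{d-1}$. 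The crucial observation is that on this staircase $H_{S,S}=\pare{\cro{x_n^{i+j}}}_{0\le i,j\le d-1}=\pare{\sum_k\lambda_k\,\zeta_k^{i+j}}$ factors as $W^{\T}DW$, where $W$ is the $d\times d$ Vandermonde matrix in the $\zeta_k$ and $D=\mathrm{diag}(\lambda_1,\dots,\lambda_d)$; both are invertible, hence $H_{S,S}$ is invertible. I also record that for $\LEX(x_n\prec\cdots\prec x_1)$ every pure power $x_n^j$ is $\prec$-smaller than any monomial involving $x_1,\dots,x_{n-1}$.

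For the \asFGLM part, I would trace the algorithm. As the pure powers of $x_n$ are the $\prec$-smallest monomials, the main loop first processes $1,x_n,x_n^2,\dots$ in turn, testing the leading principal submatrices $H_{\acc{1,\dots,x_n^{k-1}}}$ (which are $W_k^{\T}DW_k$ for the rectangular $d\times k$ Vandermonde $W_k$). The hypothesis that \emph{no error is thrown} is exactly what guarantees that none of these submatrices drops rank prematurely: the loop therefore adds $1,x_n,\dots,x_n^{d-1}$ to $S$ and triggers early termination at $\#\,S=d$ with $S$ equal to the true staircase. The inner loop then processes, in increasing order, the border monomials $x_n^d$ and $x_1,\dots,x_{n-1}$, solving $H_{S,S}\,\balpha+H_{S,\acc{t'}}=0$ for each. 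Since $H_{S,S}$ is invertible, $\balpha$ is unique; and since $g_n(x_n)$ and each $x_i-f_i(x_n)$ lie in $I$ and have support in $S$ away from their leading term, these are precisely the solutions produced. Hence the output is $\acc{g_n,x_{n-1}-f_{n-1},\dots,x_1-f_1}$, which generates $I$.

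For the \aBMS part no assumption is needed. First, as already observed in Section~\ref{ss:shape_position}, the bound $d$ caps the visited monomials to degree at most $2d$, so $\LEX$ may be used simply by enumerating monomials of bounded degree in increasing order. Next, the \aBMS algorithm differs from the \BMS algorithm only by skipping tests; when the supplied bound equals the true staircase size $d$, a skip occurs only when a failure would force the staircase to exceed $d$, which cannot correspond to a genuine failure since the true staircase has size exactly $d$. Thus every skipped test would have succeeded, and \aBMS computes exactly what \BMS would. Invoking Proposition~\ref{prop:upperbound_repeat} with $s_{\max}=x_n^{d-1}$ and $g_{\max}=x_1$, the computed set is a minimal \gb of $I$ once the run reaches step $x_n^{d-1}\cdot\max_{\prec}\pare{x_1,x_n^{d-1}}=x_1\,x_n^{d-1}$ (of degree $d<2d$), so taking the stopping monomial at least this large yields a \gb generating $I$. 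By Theorem~\ref{th:reduced_gb_adapt} this \gb is minimal though possibly unreduced, but it still defines the ideal $I$.

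The main obstacle, and the place where the two parts genuinely differ, is the role of the rank condition on the leading principal submatrices. For \asFGLM the honest content is that a vanishing leading minor (for instance $\cro{1}=u_{\bzero}=0$ while $H_{S,S}$ remains invertible, as already happens for a period-$2$ pure-$x_n$ behavior) would make the algorithm find a spurious low-degree relation in $x_n$ and ultimately throw the error; ruling these out is exactly the force of the no-error hypothesis, and I would make this equivalence explicit rather than simply assuming the clean trace. For \aBMS the delicate point is the safety of the skip rule, namely that with the exact bound $d$ no necessary test is ever discarded; I would justify this through the staircase-monotonicity arguments underlying the \BMS analysis of \cite{part1}, on which Theorem~\ref{th:closed_staircase_adapt} already rests.
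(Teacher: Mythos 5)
Your proposal reaches the paper's conclusions by a route that is partly richer and partly genuinely different. For the \asFGLM half, the paper's own proof is a bare trace: assuming no error is thrown, the staircase grows incrementally from $\emptyset$ to $\acc{1,x_n,\ldots,x_n^{d-1}}$, and the early-termination loop solves $H_{S,S}\,\balpha+H_{S,\{t\}}=0$ for $t\in\acc{x_n^d,x_{n-1},\ldots,x_1}$, yielding the shape-position generators. You supply what the paper leaves implicit: the spectral decomposition $u_{\bi}=\sum_k\lambda_k\,P_k^{\bi}$ with all $\lambda_k\neq 0$ (legitimate because the annihilator is exactly the radical ideal $I$) and the factorization $H_{S,S}=W^{\T}\,D\,W$ with $W$ Vandermonde in the distinct roots of the squarefree $g_n$, whence $H_{S,S}$ is invertible, the solved systems have unique solutions, and those solutions must be $g_n$ and the $x_i-f_i$ since these lie in $I$; this is a genuine strengthening of the paper's argument. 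For the \aBMS half you diverge more: the paper argues directly and constructively that $g_n$ is obtained by visiting $1,x_n,\ldots,x_n^{2d-1}$ exactly as the univariate \BM algorithm would, and that each $x_i-f_i(x_n)$ is obtained by visiting $x_i,x_i\,x_n,\ldots,x_i\,x_n^{d-1}$, all within the degree cap; you instead prove skip-safety (with the exact bound $d$, a genuine failure at a skipped test would push the computed staircase beyond the true staircase of size $d$, contradicting the monotonicity $S_m\subseteq S$, so every skipped test would have succeeded and \aBMS coincides with \BMS) and then invoke Proposition~\ref{prop:upperbound_repeat} at step $x_1\,x_n^{d-1}$. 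That alternative is valid, with the caveat that Proposition~\ref{prop:upperbound_repeat} is proved in the weight-ordering setting where $\acc{m\prec M}$ is finite; you are implicitly extending it to $\LEX$ restricted to the degree-capped monomial set (harmless here, since that set is divisor-closed and finite, but it deserves a sentence), whereas the paper's direct trace sidesteps the issue entirely.

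One concrete overreach: your claim that a premature rank drop of a leading principal submatrix would ``ultimately throw the error,'' i.e.\ that the no-error hypothesis is \emph{equivalent} to the clean trace, is false in one direction. After a spurious relation with leading monomial $x_n^k$, $k<d$, the algorithm removes the multiples of $x_n^k$ from $L$, can then accept $x_{n-1}$ (and further monomials in the other variables) into the staircase whenever the corresponding matrices are nonsingular, reach $\#\,S=d$, and terminate normally with an output different from $I$. For instance with $n=2$, $\LEX(y\prec x)$, points with $y$-coordinates $0,1,-1$ and weights $\lambda=(-3,3,1)$, one gets $\cro{1}=1$, $\cro{y}=2$, $\cro{y^2}=4$, so $H_{\acc{1,y}}$ is singular while the pure-$x$ principal minors are generically nonsingular; the run returns $\acc{y-2,\,x^3+\cdots}$ with no error raised. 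So the ``equivalence'' you propose to make explicit is not available. To be fair, the paper's proof makes the identical silent leap (it asserts the clean trace as a consequence of no error), and its Example~\ref{ex:fail_asfglm} already exhibits wrong relations computed without any error; but in a corrected write-up the clean trace should be taken as part of the hypothesis (or as a genericity assumption on $\bu$), not derived from the error flag.
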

\begin{proof}
  Assuming no error is thrown during the execution of the \asFGLM
  algorithm, the staircase is incrementally updated from
  $\emptyset$ to $\acc{1,x_n,\ldots,x_n^{d-1}}$. Then,
  the staircase size is reached and the early termination procedure
  solves the system $H_{S,S}\,\balpha+H_{S,\{t\}}=0$ for
  $t\in\acc{x_n^d,x_{n-1},\ldots,x_1}$ yielding
  $g_n(x_n),x_{n-1}-f_{n-1}(x_n),\ldots, x_1-f_1(x_n)$.

  For the \aBMS algorithm, we visit every monomial of degree
  at most $2\,d-1$. The first relation, $g_n(x_n)$ is computed by the
  algorithm visiting monomials $1,x_n,\ldots,x_n^{2\,d-1}$ like the
  \BM algorithm. Then, each
  relation $x_i-f_i(x_n)$ is computed by visiting monomials
  $x_i,x_i\,x_n,\ldots,x_i\,x_n^{d-1}$, all of degree less than
  $2\,d-1$.
\end{proof}
\begin{example}
  We let $\bu=(F_{4\,i+k+1})_{(i,j,k)\in\N^3}$, where $(F_i)_{i\in\N}$
  is the Fibonacci sequence. The ideal of relations of $\bu$ is
  $I=\langle z^2-z-1,y-1,x-3\,z-2\rangle$ with a staircase of size $2$.

  For the \asFGLM called on $\bu$, $d=2$ and the
  $\LEX(z\prec y\prec x)$ ordering, the algorithm creates the matrices
  \begin{enumerate}
  \item[] $H_{\{1\},\{1\}}=\pare{
      \begin{smallmatrix}
        1
      \end{smallmatrix}}
    $, which is full rank, hence $1\in S$;
  \item[] $H_{\{1,z\},\{1,z\}}=\pare{
      \begin{smallmatrix}
        1 &1\\1 &2
      \end{smallmatrix}}$, which is full rank, hence $z\in S$.
  \end{enumerate}
  Now, the staircase is found so it remains to solve
  \begin{enumerate}
  \item[] $H_{S,S}\,\balpha+H_{S,\{z^2\}}=0$ yielding the relation
    $g_1=z^2-z-1$;
  \item[] $H_{S,S}\,\balpha+H_{S,\{y\}}=0$ yielding the relation
    $g_2=y-1$;
  \item[] $H_{S,S}\,\balpha+H_{S,\{x\}}=0$ yielding the relation
    $g_3=x-3\,z-2$.
  \item[] The algorithm returns $\langle g_1,g_2,g_3\rangle=I$.
  \end{enumerate}

  Calling the \aBMS algorithm on $\bu$, $d=2$, the stopping monomial
  $x\,z$ and
  $\LEX(z\prec y\prec x)$ ordering makes us visit the set of all monomials
  of degree at most $2\,d-1=3$ less than $x\,z$, \ie
  $\{1,z,z^2,z^3,y,y\,z,y\,z^2,y^2,y^2\,z,y^3,x,x\,z\}$.
  \begin{enumerate}
  \item[] The algorithms tests the relation $g=1$ in
    $u_{0,0,0}=F_1=1$ where it fails. It has now relations
    $g_1=x,g_2=y$ and $g_3=z$.
  \item[] Testing $g_3=z$ in $u_{0,0,2}=F_2=1$, it updates now the
    relation to $g_3=z-1$. Going on testing $g_3=z-1$ in
    $u_{0,0,2}=F_3=2$ and $u_{0,0,3}=F_4=3$, it is able to guess that
    $g_3=z^2-z-1$. The staircase is now $\{1,z\}$ of size $2$ so it
    has been found. As anticipated, there is no need to go further in
    that
    direction.
  \item[] Testing $g_2=y$ in $u_{0,1,0}=F_1=1$, the relation is updated
    to $g_2=y-1$.
  \item[] Then, it checks that this relation is valid in
    $u_{0,1,1}$ but skips $u_{0,1,2},u_{0,2,0},u_{0,2,1},u_{0,3,0}$
    thanks to its criterion.
  \item[] It remains to test $g_3=x$ in $u_{1,0,0}=F_5=5$. It fails and
    the algorithm updates the relation to $g_3=x-5$.
  \item[] Finally, $g_3=x-5$ is tested in $u_{1,0,1}=F_6=8$ and the
    relation is updated to $g_3=x-3\,z-2$.
  \item[] The algorithm returns $\langle g_1,g_2,g_3\rangle=I$.
\end{enumerate}
\end{example}


\section{Complexity and Benchmarks of the adaptive variants}
\label{s:implem_adapt}
In this section, we present some benchmarks to compare how
the \aBMS and the \asFGLM algorithms behave.

Four families of ideals of relations are used to make the
sequences.

\begin{itemize}
\item In the first family, 
  the leading monomials of the ideal of
  relations are $\langle y^{\floor{d/2}},x^d\rangle$. Thus, its staircase is a
  rectangle of size around $d^2/2$. In three variables,
  the leading monomials are
  $\langle z^{\ceil{d/3}},y^{\floor{d/2}},x^d\rangle$,
  so that the staircase is a rectangular cuboid of size
  around $d^3/6$. This family will be called \emph{Rectangle}.

\item In the second family,
  the leading monomials of the ideal of relations
  are $\langle x\,y,y^d,x^d\rangle$.  Thus, its staircase looks like a
  \textsc{L} and has size $2\,d-1$.
  In three variables, the leading monomials are
  $\langle y\,z,x\,z,x\,y,z^d,y^d,x^d\rangle$, so that the staircase has size
  $3\,d-2$. This family will be called \emph{\textsc{L}
    shape}. It was considered as the worst case
  in~\cite{issac2015,berthomieu:hal-01253934} for the \asFGLM
  algorithm
  for the number of queries.
  It should also be a worst case for
  the \aBMS algorithm.

\item In the third family, 
  the leading monomials of the ideal of relations
  are all the monomials of degree $d$. Thus, its staircase
  is a simplex and has size $\binom{d+1}{2}=\frac{d\,(d+1)}{2}$ in
  two variables. In three variables, the staircase has
  size $\binom{d+2}{3}=\frac{d\,(d+1)\,(d+2)}{6}$. This family
  will be called \emph{Simplex}. It should be the best case for both
  the \sFGLM and the \BMS algorithms.

\item In the last family, 
  the leading monomials of the ideal of relations
  are $\langle y^d,x\rangle$.  Thus, its staircase looks like a
  line and has size $d$.
  In three variables, the leading monomials are
  $\langle z^d,y,x\rangle$, so that the staircase has also size
  $d$. This is the generic family for a $\LEX(z\prec y\prec x)$ basis
  and this example corresponds to the change of ordering application,
  see Section~\ref{ss:shape_position}.
  This family will be called \emph{Shape position}.
\end{itemize}
For the first three families,
we called the algorithms with the $\DRL(z\prec y\prec x)$ ordering, for the
last one, we called them with the $\LEX (z\prec y\prec x)$ ordering.

For the \aBMS algorithm, we used
Proposition~\ref{prop:upperbound_repeat} to
estimate sharply the stopping monomial.

\subsection{Counting the number of table queries}
The \asFGLM algorithm computes all the multi-Hankel matrices whose rows and columns
are all the terms that are in the staircase or are a leading monomial in the
\gb.

Likewise, the \aBMS algorithm needs to test each relation, with support
in $S\cup\LM(\cG)$, shifted by as many monomial as in $S$.

Therefore, we have the following proposition.

\begin{proposition}\label{prop:asfglm_queries}
  Let $\bu=(u_{\bi})_{\bi\in\N^n}$
  be a sequence and $\cG$ be a reduced \gb of its ideal of
  relations for a total degree ordering.
  
  Let $S$ be the staircase of $\cG$, $S^+=S\cup\LM(\cG)$. Let
  $S+T=\{s\,t,\ s\in S,t\in T\}$ and
  $2\,S=S+S=\{s\,s',\ s,s'\in S\}$.
  
  Let $d_S$ be the greatest
  degree of the elements in $S$, $d_{\cG}$ be the
  greatest degree of the elements in $\cG$ and
  $d_{\max}=\max(d_S,d_{\cG})$.

  Let $\cS(d)$ be the
  simplex of all monomials of degree $d$.

  Then, the \aBMS algorithm needs to perform at least $\#\,(S+S^+)$
  and at most $\#\,\cS(d_S+d_{\max})=\binom{n+d_S+d_{\max}}{n}$ queries to
  the sequence.

  The \asFGLM algorithm needs to perform at least $\#\,(2\,\,S)$ and
  fewer than $\#\,(2\,S^+)$ queries to $\bu$.
  In the worst case, this number grows as $(\#\,S^+)^2$.
\end{proposition}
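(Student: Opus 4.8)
The plan is to describe, for each algorithm, the precise set of monomial indices at which $\bu$ is ever queried, and then to bound the cardinality of that set from below and from above.

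For the \aBMS algorithm, every access to $\bu$ occurs inside an evaluation $\cro{\frac{m}{\LM(g)}\,g}_{\bu}$ of a tested relation $g$ shifted by $s=\frac{m}{\LM(g)}$. By the discussion preceding the proposition, the relations ever tested are the border relations, whose supports lie in $S^+=S\cup\LM(\cG)$, and each is tested against every shift $s$ ranging over $S$. Hence the set of queried indices is $\bigcup_{g}\bigcup_{s\in S}s\cdot\supp(g)\subseteq S+S^+$. To obtain the lower bound $\#(S+S^+)$, I would prove the reverse inclusion $S+S^+\subseteq\{\text{queried indices}\}$. The part $S+\LM(\cG)$ is immediate: for each $t\in\LM(\cG)$ the relation $g$ with $\LM(g)=t$ carries $t$ in its support, so shifting by every $s\in S$ forces the query $s\,t$. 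For the part $S+S=2\,S$, I would invoke the linear-algebra description of Appendix~\ref{ss:bms_lin_alg}: deciding that $S$ is the staircase and certifying each border relation on its shift set amounts to reading the whole submatrix $H_{S,S^+}$, whose distinct entries are exactly $\{u_p:p\in S+S^+\}$.

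For the upper bound I would use a pure degree count. Every queried index is a product $s\cdot q$ with $s\in S$ and $q\in S^+$, hence of total degree at most $d_S+d_{\max}$. Therefore all queried monomials lie in the simplex $\cS(d_S+d_{\max})$, which contains the $\binom{n+d_S+d_{\max}}{n}$ monomials of degree at most $d_S+d_{\max}$, giving at most that many queries.

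For the \asFGLM algorithm, each access to $\bu$ is an entry $u_{q\,q'}$ of one of the multi-Hankel matrices $H_{S'\cup\{t\},S'\cup\{t\}}$ assembled during the run, whose row and column labels range over $S^+$; thus the queried indices lie in $S^++S^+=2\,S^+$. The lower bound $\#(2\,S)$ follows because, by the moment the staircase reaches its final size $d$, the matrix $H_{S,S}$ has been built and is full rank, so every entry $u_{s\,s'}$ with $s,s'\in S$ has been read, and these distinct indices are exactly $2\,S$. The strict upper bound $\#(2\,S^+)$ reflects that the algorithm never assembles the full matrix $H_{S^+,S^+}$: once a relation with leading term $t$ is found, no multiple of $t$ is ever studied, so the products of two distinct border monomials are not all queried; in particular the count is $O((\#\,S^+)^2)$.

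I expect the main obstacle to be the $2\,S$ part of the \aBMS lower bound. Whereas $S+\LM(\cG)$ is forced by the mere presence of the leading monomials in the supports, a staircase monomial need not appear with nonzero coefficient in any tested relation, so one cannot read $S+S\subseteq\{\text{queried indices}\}$ off the supports alone. I would resolve this by falling back on the linear-algebra formulation: the information \aBMS needs to decide each monomial's membership in the staircase is encoded in $H_{S,S^+}$, and recovering it necessarily touches each of its distinct entries, which index precisely $S+S^+$.
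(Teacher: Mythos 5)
Your \asFGLM half is sound and matches the paper's (implicit) argument: every query is an entry of some $H_{S'\cup\{t\},S'\cup\{t\}}$ or $H_{S,\{t'\}}$ with labels in $S^+$, the full matrix $H_{S,S}$ is eventually assembled (giving all of $2\,S$), and no two distinct border monomials ever co-occur as labels, which gives the strict upper bound $\#\,(2\,S^+)$. The paper itself offers no formal proof of the proposition --- it is presented as a consequence of the two sentences preceding it --- so your level of rigor on the lower bounds is comparable to the source; your admitted hand-waving on the $2\,S$ part of the \aBMS lower bound mirrors the paper's own assertion and is not the main problem.

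The genuine gap is in your \aBMS upper bound, which rests on the false containment $\{\text{queried indices}\}\subseteq S+S^+$. Neither of its two ingredients holds. First, tested shifts are not confined to $S$: the skipping criterion only requires $\#\,\Stabilize\pare{S\cup\acc{\LM(g),\frac{m}{\LM(g)}}}\leq d$, a cardinality condition, so early relations (e.g.\ $g=1$) are tested at shifts far outside the final staircase. The paper's own trace for the sequence with $u_{4,1}=1$ exhibits this: $u_{0,5}$ is queried while testing $g_1=1$ at $y^5$, and $u_{0,6}$ while testing $y^2$ at $y^6$, yet neither $y^5$ nor $y^6$ lies in $S+S^+$ there (the pure-$y$ part of $S+S^+$ stops at $y^3$). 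Second, supports of intermediate relations need not lie in $S^+$, since translations $\frac{\LM(g')}{\LM(g)}\,g$ and the correction terms produce non-reduced relations (Theorem~\ref{th:reduced_gb_adapt}). A telltale sign you should have caught: if your containment were true, the number of queries would be \emph{exactly} $\#\,(S+S^+)$, collapsing the proposition's upper bound onto its lower bound and making the simplex bound pointless. The correct route to the simplex bound is different: every query at step $m$ involves indices $\preceq m\preceq M$, the stopping monomial $M$ is chosen via Proposition~\ref{prop:upperbound_repeat} with $\deg M\leq d_S+d_{\max}$, and for a total degree ordering $\{t\in\cT,\ t\preceq M\}\subseteq\cS(d_S+d_{\max})$, whence at most $\binom{n+d_S+d_{\max}}{n}$ queries. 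So the stated bound is salvageable, but not by the argument you give.
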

\begin{figure}[htbp!]
  \pgfplotsset{
    small,
    width=12cm,
    height=7cm,
    legend cell align=left,
    legend columns=5,
    legend style={at={(-0.05,0.98)},anchor=south
      west,font=\scriptsize,
    }
  }
  \centering
  \begin{tikzpicture}[baseline]
    \begin{axis}[
      ymode=log, xlabel={$d$}, xlabel
      style={at={(0.95,0.1)}},
      xmin=3.8,xmax=25.2,ymin=1.8,ymax=18,
      xtick={2,...,25},
      ytick={1,2,3,4,5,6,7,8,9,10,20,30,40,50,60,70,80,90,
        100,200,300,400,500,600,700,800,900,1000},
      yticklabels={},
      extra y ticks={1,5,10,50,100,500},
      extra y tick labels={1,5,10,50,100,500},
      ylabel={\#\,Queries/\#\,S},
      ylabel style={at={(0.08,0.750)}}
      ]
      \addlegendimage{empty legend}
      \addlegendentry{}
      \addlegendimage{legend image with text=Rectangle}
      \addlegendentry{}
      \addlegendimage{legend image with text=\textsc{L} shape}
      \addlegendentry{}
      \addlegendimage{legend image with text=Simplex}
      \addlegendentry{}
      \addlegendimage{legend image with text=Shape position}
      \addlegendentry{}
      
      \addlegendimage{empty legend}
      \addlegendentry{\asFGLM}
      \addplot[thick,every mark/.append style={solid},
      mark=triangle*,dashed,red,mark phase=1,mark repeat=4]
      plot coordinates {
        (4,25/8) (5,31/10) 
        (6,60/18) (7,70/21) (8,110/32) (9,124/36) 
        (10,176/50) (11,194/55) (12,258/72) (13,280/78) 
        (14,356/98) (15,382/105) (16,470/128) (17,500/136) 
        (18,600/162) (19,634/171) (20,746/200) 
        (21,784/210) (22,908/242) (23,950/253) 
        (24,1086/288) (25,1132/300) 
      };
      \addlegendentry{}
      \addplot[thick,every mark/.append style={solid},
      mark=triangle*,dashed,blue,mark phase=2,mark repeat=4]
      plot coordinates {
        (2,10/3) (3,19/5) (4,30/7) (5,43/9) 
        (6,58/11) (7,75/13) (8,94/15) (9,115/17) 
        (10,138/19) (11,163/21) (12,190/23) (13,219/25) 
        (14,250/27) (15,283/29) (16,318/31) (17,355/33) 
        (18,394/35) (19,435/37) (20,478/39)  (21,523/41) 
        (22,570/43) (23,619/45) (24,670/47) (25,723/49)
      };
      \addlegendentry{} 
      \addlegendimage{empty legend}
      \addlegendentry{}
      \addplot[thick,every mark/.append style={solid},
      mark=triangle*,dashed,orange,mark phase=4,mark repeat=4]
      plot coordinates {
        (2,6/2) (3,9/3) (4,11/4) (5,13/5) 
        (6,15/6) (7,17/7) (8,19/8) (9,21/9) 
        (10,23/10) (11,25/11) (12,27/12) (13,29/13) 
        (14,31/14) (15,33/15) (16,35/16) (17,37/17) 
        (18,39/18) (19,41/19) (20,43/20) (21,45/21) 
        (22,47/22) (23,49/23) (24,51/24) (25,53/25)
      };
      \addlegendentry{}

      \addlegendimage{empty legend}
      \addlegendentry{\aBMS}
      \addplot[thick,every mark/.append style={solid,rotate=180},
      mark=triangle*,dotted,red,mark phase=1,mark repeat=4]
      plot coordinates {
        (4,41/8) (5,58/10) 
        (6,103/18) (7,132/21) (8,198/32) (9,236/36) 
        (10,320/50) (11,371/55) (12,478/72) (13,541/78) 
        (14,663/98) (15,731/105) (16,882/128) (17,967/136) 
        (18,1141/162) (19,1238/171) (20,1418/200) 
      };
      \addlegendentry{}
      \addplot[thick,every mark/.append style={solid,rotate=180},
      mark=triangle*,dotted,blue,mark phase=2,mark repeat=4]
      plot coordinates {
        (2,10/3) (3,21/5) (4,36/7) (5,55/9) 
        (6,78/11) (7,105/13) (8,136/15) (9,171/17) 
        (10,210/19) (11,253/21) (12,300/23) (13,351/25) 
        (14,406/27) (15,465/29) (16,528/31) (17,595/33) 
        (18,666/35) (19,741/37) (20,820/39) (21,903/41) 
        (22,990/43) (23,1081/45) (24,1176/47) (25,1275/49)
      };
      \addlegendentry{}
      \addlegendimage{empty legend}
      \addlegendentry{}
      \addplot[thick,every mark/.append style={solid,rotate=180},
      mark=triangle*,dotted,orange,mark phase=4,mark repeat=4]
      plot coordinates {
        (2,6/2) (3,10/3) (4,14/4) (5,18/5) 
        (6,22/6) (7,26/7) (8,30/8) (9,34/9) 
        (10,38/10) (11,42/11) (12,46/12) (13,50/13) 
        (14,54/14) (15,58/15) (16,62/16) (17,66/17) 
        (18,70/18) (19,74/19) (20,78/20) (21,82/21) 
        (22,86/22) (23,90/23) (24,94/24) (25,98/25)
      };
      \addlegendentry{}
      
      \addlegendimage{empty legend}
      \addlegendentry{Both algorithms}
      \addlegendimage{empty legend}
      \addlegendentry{}
      \addlegendimage{empty legend}
      \addlegendentry{}
      \addplot[thick,every mark/.append style={solid},
      mark=square*,green,mark phase=3,mark repeat=4]
      plot coordinates {
        (2,10/3) (3,21/6) (4,36/10) (5,55/15) 
        (6,78/21) (7,105/28) (8,136/36) (9,171/45) 
        (10,210/55) (11,253/66) (12,300/78) (13,351/91) 
        (14,406/105) (15,465/120) (16,528/136) (17,595/153) 
        (18,666/171) (19,741/190) (20,820/210) (21,903/231) 
        (22,990/253) (23,1081/276) (24,1176/300) (25,1275/325) 
      };
      \addlegendentry{}
    \end{axis}
  \end{tikzpicture}
  \caption{Number of table queries (\textsc{2D}): \asFGLM \& \aBMS}
  \label{fig:queries2Dadapt}
\end{figure}

In the experiments of Figures~\ref{fig:queries2Dadapt}
and~\ref{fig:queries3Dadapt},
we can see that for the Rectangle family, the \asFGLM algorithm perform much
fewer
queries than the \aBMS.

For the \textsc{L} shape family, the size of the staircase only grows as
$O(d)$. Our experiments suggest that the number of queries grows as
$O(d^n)$ for the \aBMS algorithm, while it only grows as $O(d^2)$ for the
\asFGLM algorithm. This can be a huge advantage in dimension at least $3$.

We can see that the \aBMS algorithm
cannot take profit from the size of the staircase in the
\textsc{L} shape family as it needs as many queries as in the
Simplex family. Yet, although the \textsc{L} shape family is a worst
case for the
\asFGLM algorithm, it is still able to query fewer sequence terms for
the \textsc{L} shape family than for the Simplex family.

\begin{figure}[htbp!]
  \pgfplotsset{
    small,
    width=12cm,
    height=7cm,
    legend cell align=left,
    legend columns=5,
    legend style={at={(-0.05,0.98)},anchor=south
      west,font=\scriptsize,
    }
  }
  \centering
  \begin{tikzpicture}[baseline]
    \begin{axis}[
      ymode=log, xlabel={$d$}, xlabel
      style={at={(0.95,0.1)}},
      xmin=3.7,xmax=15.2,ymin=3.8,ymax=65,
      xtick={2,...,15},
      ytick={1,2,3,4,5,6,7,8,9,10,20,30,40,50,60,70,80,90,
        100,200,300,400,500,600,700,800,900,
        1000,2000,3000,4000,5000,6000,7000,8000,9000,
        10000,20000},
      yticklabels={},
      extra y ticks={1,5,10,50,100,500,1000,5000,10000},
      extra y tick labels={1,5,10,50,100,500,1000,5000,10000},
      ylabel={\#\,Queries/\#\,S},
      ylabel style={at={(0.08,0.75)}}
      ]
      \addlegendimage{empty legend}
      \addlegendentry{}
      \addlegendimage{legend image with text=Rectangle}
      \addlegendentry{}
      \addlegendimage{legend image with text=\textsc{L} shape}
      \addlegendentry{}
      \addlegendimage{legend image with text=Simplex}
      \addlegendentry{}
      \addlegendimage{legend image with text=Shape position}
      \addlegendentry{}
      
      \addlegendimage{empty legend}
      \addlegendentry{\asFGLM}
      \addplot[thick,every mark/.append style={solid},
      mark=triangle*,dashed,red,mark phase=1,mark repeat=4]
      plot coordinates {
        (4,72/16) (5,90/20) 
        (6,174/36) (7,334/63) (8,534/96) (9,604/108) 
        (10,1206/200) (11,1332/220) 
        (12,1780/288) (13,2484/390) (14,3168/490) (15,3402/525)
      };
      \addlegendentry{}
      \addplot[thick,every mark/.append style={solid},
      mark=triangle*,dashed,blue,mark phase=2,mark repeat=4]
      plot coordinates {
        (2,20/4) (3,42/7) (4,69/10) (5,102/13) 
        (6,141/16) (7,186/19) (8,237/22) (9,294/25) 
        (10,357/28) (11,426/31) (12,501/34) (13,582/37) 
        (14,669/40) (15,762/43) 
      };
      \addlegendentry{}
      \addlegendimage{empty legend}
      \addlegendentry{}
      \addplot[thick,every mark/.append style={solid},
      mark=triangle*,dashed,orange,mark phase=4,mark repeat=4]
      plot coordinates {
        (2,8/2) (3,13/3) (4,18/4) (5,21/5) 
        (6,25/6) (7,30/7) (8,33/8) (9,36/9) 
        (10,40/10) (11,46/11) (12,49/12) (13,52/13) 
        (14,55/14) (15,59/15)
      };
      \addlegendentry{}
     
      \addlegendimage{empty legend}
      \addlegendentry{\aBMS}
      \addplot[thick,every mark/.append style={solid,rotate=180},
      mark=triangle*,dotted,red,mark phase=1,mark repeat=4]
      plot coordinates {
        (4,207/16) (5,330/20) 
        (6,709/36) (7,1265/63) (8,2159/96) (9,2739/108) 
        (10,4819/200)
      };
      \addlegendentry{}
      \addplot[thick,every mark/.append style={solid,rotate=180},
      mark=triangle*,dotted,blue,mark phase=2,mark repeat=4]
      plot coordinates {
        (2,20/4) (3,56/7) (4,120/10) (5,218/13) 
        (6,357/16) (7,545/19) (8,784/22) (9,1090/25) 
        (10,1457/28) (11,1907/31) (12,2424/34) (13,3043/37) 
        (14,3741/40) (15,4557/43)
      };
      \addlegendentry{}
      \addlegendimage{empty legend}
      \addlegendentry{}
      \addplot[thick,every mark/.append style={solid,rotate=180},
      mark=triangle*,dotted,orange,mark phase=4,mark repeat=4]
      plot coordinates {
        (2,8/2) (3,13/3) (4,18/4) (5,23/5) 
        (6,28/6) (7,33/7) (8,38/8) (9,43/9) 
        (10,48/10) (11,53/11) (12,58/12) (13,63/13) 
        (14,68/14) (15,73/15) 
      };
      \addlegendentry{}
      
      \addlegendimage{empty legend}
      \addlegendentry{Both algorithms}
      \addlegendimage{empty legend}
      \addlegendentry{}
      \addlegendimage{empty legend}
      \addlegendentry{}
      \addplot[thick,every mark/.append style={solid},
      mark=square*,green,mark phase=3,mark repeat=4]
      plot coordinates {
        (2,20/4) (3,56/10) (4,120/20) (5,220/35) 
        (6,364/56) (7,560/84) (8,816/120) (9,1140/165) 
        (10,1540/220) (11,2024/286) (12,2600/364) (13,3276/455) 
        (14,4060/560) (15,4960/680)
      };
      \addlegendentry{}
    \end{axis}
  \end{tikzpicture}
  \caption{Number of table queries (\textsc{3D}): \asFGLM \& \aBMS}
  \label{fig:queries3Dadapt}
\end{figure}

\subsection{Counting the number of basic operations}
The complexity of the \BMS algorithm has been studied
in~\cite{Sakata09} yielding the following proposition.
\begin{proposition}\label{prop:bms_basicop}
  Let $\bu=(u_{\bi})_{\bi\in\N^n}$
  be a sequence, $\cG$ be a minimal \gb of its ideal of
  relations for a total degree ordering and $S$ be the staircase of $\cG$.

  Then, the \BMS algorithm performs at most
  $O\pare{(\#\,S)^2\,\LM(\cG)}$ operations to recover the
  ideal of relations of $\bu$.  
\end{proposition}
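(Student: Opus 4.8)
The statement is a complexity bound attributed to \cite{Sakata09}, so my plan is to recover it by the standard accounting that multiplies the number of iterations of the main loop by the worst-case cost of a single iteration. First I would bound the number of iterations. The \BMS algorithm enumerates monomials in increasing order for $\prec$, and by Proposition~\ref{prop:upperbound_repeat} a minimal \gb of the ideal of relations is obtained once the current step $m$ satisfies $m\succeq s_{\max}\cdot\max_{\prec}(g_{\max},s_{\max})$, where $s_{\max}$ is the largest staircase monomial and $g_{\max}$ the largest element of $\LM(\cG)$. Because $(\N^n,\prec)$ is isomorphic to $(\N,<)$, the number of monomials visited before termination is exactly the index of this stopping monomial; I read the factor $\LM(\cG)$ in the bound as an upper bound on this number of iterations.

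Next I would bound the cost of one iteration at a monomial $m$, splitting the work into a testing phase and an update phase. In the testing phase, the algorithm tests each current relation $g\in G$ with $\LM(g)\mid m$; since $G$ is kept minimal, $\#\,G$ is the number of generators, which for a zero-dimensional monomial ideal with staircase $S$ is $O(\#\,S)$ because the border $\Border(S)$ is contained in $\bigcup_i x_i\,S$ and the generators are a subset of it. Each test evaluates $\cro{\frac{m}{\LM(g)}\,g}_{\bu}$, a linear combination indexed by $\supp(g)\subseteq S\cup\{\LM(g)\}$, hence costing $O(\#\,S)$ field operations, so the testing phase costs $O((\#\,S)^2)$. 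In the update phase, triggered only when some relation fails, the algorithm minimizes $S'$, computes $\Border(S')$, and recomputes each border relation via the explicit formulas of Algorithm~\ref{algo:abms}; each of the $O(\#\,S)$ new relations is again a linear combination of length $O(\#\,S)$, so the update also costs $O((\#\,S)^2)$. Thus every iteration costs $O((\#\,S)^2)$, and multiplying by the $O(\LM(\cG))$ iterations gives the claimed $O((\#\,S)^2\,\LM(\cG))$.

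The hard part will be justifying the two uniform $O(\#\,S)$ estimates throughout the whole run, not merely at termination. Before stabilization the provisional staircase and its border can differ from the final ones, so I would need to argue that the intermediate staircases produced by the $\min$ and $\Border$ steps never exceed the size controlled by $\#\,S$, or else absorb the transient excess into the constant, and that every relation ever built has support confined to a staircase-sized set of monomials, so that no single evaluation costs more than $O(\#\,S)$. Checking that the failure-driven update steps, although potentially costly, occur with the right amortized frequency to stay within the per-iteration $O((\#\,S)^2)$ bound is the remaining delicate point.
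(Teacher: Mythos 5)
You cannot really be matched against an internal argument here: the paper does not prove Proposition~\ref{prop:bms_basicop} at all, it imports the bound from \cite{Sakata09} (``The complexity of the \BMS algorithm has been studied in~\cite{Sakata09} yielding the following proposition''). Your iterations-times-per-iteration-cost accounting is the standard route and is the right skeleton. Moreover, two of the three difficulties you flag at the end dissolve using material already in the paper: for a linear recurrent sequence the sets $I_m$ decrease and the staircases $S_m$ increase with limit $S$ (see the remarks following Proposition~\ref{prop:Im-closed}), so every intermediate staircase satisfies $S_m\subseteq S$, whence $\#\,S_m\leq \#\,S$ and $\#\,G_m\leq n\,\#\,S+1$ because the maintained leading monomials lie in $\Border(S_m)\subseteq\{1\}\cup\bigcup_{i=1}^n x_i\,S_m$; and no amortization argument is needed, since you already bounded the update phase of a single iteration by $O\pare{(\#\,S)^2}$ unconditionally, so the worst case per iteration is what it is.

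The genuine gap is your uniform $O(\#\,S)$ bound on the cost of one evaluation $\cro{\frac{m}{\LM(g)}\,g}_{\bu}$. That bound presumes $\supp(g)\subseteq S\cup\{\LM(g)\}$, which holds for \emph{reduced} relations; but by Theorem~\ref{th:reduced_gb_adapt} the \BMS algorithm maintains a minimal, generally non-reduced, basis, and the update rule $g':=\frac{\LM(g')}{\LM(g)}\,g-c\,\frac{\LM(g')\,\fail(h)}{m}\,h$ both translates supports outside the staircase and merges the supports of two polynomials, so a priori $\#\supp(g')$ can grow at each failure, up to the number of monomials $\preceq\LM(g')$ --- which for thin staircases exceeds $\#\,S$: in the \textsc{L} shape family $\#\,S=2\,d-1$ while there are $O(d^2)$ monomials of degree at most $d$. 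You must either prove a support-size invariant for the maintained relations or interreduce at each step as in Algorithm~\ref{algo:tweaked_abms}. A smaller but real imprecision concerns the iteration count: by Proposition~\ref{prop:upperbound_repeat} it is the number of monomials $\preceq s_{\max}\cdot\max_{\prec}(g_{\max},s_{\max})$, hence at most $\binom{n+d_S+d_{\max}}{n}$ (the same count the paper uses for queries in Proposition~\ref{prop:asfglm_queries}); reading the factor $\LM(\cG)$ in the statement as exactly this quantity is an interpretation of the paper's abused notation, not a derivation --- the index of $g_{\max}$ and the index of the stopping monomial differ in general --- so you should make that identification explicit rather than silent.
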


Obviously, the bound of Proposition~\ref{prop:bms_basicop} on the
number of basic operations applies to the \aBMS algorithm. Yet, since
the number of skipped relation testings is hard to predict, it is not
clear how to make it sharper for the \aBMS algorithm.

The \asFGLM computes the rank of a matrix of size at most
$\#\,S$. Furthermore, it solves as many linear systems with this matrix
as there are polynomials in the \gb. All in all, we have the following
result.
\begin{proposition}\label{prop:asfglm_basicop}
  Let $\bu=(u_{\bi})_{\bi\in\N^n}$
  be a sequence, $\cG$ be a reduced \gb of its ideal of
  relations for a total degree ordering and $S$ be the staircase of $\cG$.

  Then, the number of operations performed by the \asFGLM algorithm to
  recover the ideal of relations of $\bu$ is at most
  $O\pare{(\#\,S)^2\,(\#\,S+\#\,\LM(\cG))}$. 
\end{proposition}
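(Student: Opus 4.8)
The plan is to separate the algorithm's cost into two parts: the incremental computation of a rank profile that determines the staircase $S$, and the resolution of the linear systems producing the relations of $G$. Throughout I will write $s=\#\,S$ for the final staircase size and assume that, during the main \textbf{While} loop, a triangular factorization of the current full-rank matrix $H_{S,S}$ is maintained and updated.

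First, I would bound the construction of $S$. Each time the minimal element $t$ of $L$ is popped, the algorithm tests whether $H_{S\cup\{t\},S\cup\{t\}}$ is full rank. Since $H_{S,S}$ is already full rank and we append exactly one row and one column, this test reduces to computing one Schur-complement scalar against the already-reduced $k\times k$ block, where $k=\#\,S\leq s$; such a step costs $O(k^2)$. The monomials $t$ for which the test succeeds are exactly those added to $S$, so these successful tests number $s$ and contribute $\sum_{k=1}^{s}O(k^2)=O(s^3)=O\pare{s^2\cdot s}$. The monomials $t$ for which the test fails are exactly those placed in $G'$; each failed test also costs $O(s^2)$, and since every element of $G'$ is a leading monomial of the output basis, their number is at most $\#\,\LM(\cG)$, contributing $O\pare{s^2\,\#\,\LM(\cG)}$.

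Second, I would bound the linear solves. Whenever the full-rank test fails, the \textbf{Else} branch solves $H_{S,S}\,\balpha+H_{S,\{t\}}=0$; using the factorization of $H_{S,S}$ already at hand, this is a forward/back substitution of cost $O(s^2)$. In the early termination phase, $S$ is frozen, so a single factorization of $H_{S,S}$ serves every remaining solve $H_{S,S}\,\balpha+H_{S,\{t'\}}=0$, each again $O(s^2)$. Every solve yields exactly one element of the returned basis $G$, whether found in the \textbf{Else} branch or in the early termination loop, so the total number of solves equals $\#\,G=\#\,\LM(\cG)$, for a combined cost of $O\pare{s^2\,\#\,\LM(\cG)}$. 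Adding the two contributions gives $O(s^3)+O\pare{s^2\,\#\,\LM(\cG)}=O\pare{(\#\,S)^2\,(\#\,S+\#\,\LM(\cG))}$, as claimed.

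The delicate points will be the bookkeeping rather than any single estimate. The main obstacle is to justify that maintaining the factorization makes each incremental rank test and each solve cost only $O(s^2)$ instead of recomputing from scratch in $O(s^3)$, and to verify that the monomials removed from $L$ as multiples of elements of $G'$ (or of earlier $t'$) incur no further arithmetic. Once these are established, the count of effective rank tests is governed by $\#\,S$ and the count of solves by $\#\,\LM(\cG)$, and the two regimes combine into the stated bound.
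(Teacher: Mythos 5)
Your proposal is correct and follows essentially the same route as the paper, which justifies the bound in one stroke by observing that the algorithm computes the rank of a matrix of size at most $\#\,S$ (your $O\pare{(\#\,S)^3}$ term) and solves one linear system with that matrix per polynomial of the \gb (your $O\pare{(\#\,S)^2\,\#\,\LM(\cG)}$ term). Your incremental Schur-complement and maintained-factorization bookkeeping simply makes explicit the standard linear-algebra accounting the paper leaves implicit, including the observation that failed rank tests correspond exactly to elements of $\LM(\cG)$.
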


In the following Figures~\ref{fig:basicop2Dadapt} and~\ref{fig:basicop3Dadapt},
we report on the ratio between the number of basic operations and the
cube of the size of the staircase.

\begin{figure}[htbp!]
  \pgfplotsset{
    small,
    width=12cm,
    height=7cm,
    legend cell align=left,
    legend columns=5,
    legend style={at={(-0.05,0.98)},anchor=south
      west,font=\scriptsize,
    }
  }
  \centering
  \begin{tikzpicture}[baseline]
    \begin{axis}[
      ymode=log,
      xlabel={$d$},
      xlabel style={at={(0.95,0.1)}},
      xmin=3.8,xmax=25.2,
      ymin=0.08,ymax=22,
      xtick={4,...,25},
      ytick={0.09,0.1,0.2,0.3,0.4,0.5,0.6,0.7,0.8,0.9,1,2,3,4,5,6,7,8,9,
        10,20,30,40,50,60,70,80,90,100,200,300,400,500,600,700,800,900,
      1000,2000,3000,4000,5000,6000,7000,8000,9000,10000,20000},
      yticklabels={},
      extra y ticks={0.1,0.5,1,5,10,50,100,500,1000,5000,10000},
      extra y tick labels={0.1,0.5,1,5,10,50,100,500,1000,5000,10000},
      ylabel={\#\,Basic Op/\#\,S$^3$},
      ylabel style={at={(0.08,0.75)}},
      ]
      \addlegendimage{empty legend}
      \addlegendentry{}
      \addlegendimage{legend image with text=Rectangle}
      \addlegendentry{}
      \addlegendimage{legend image with text=\textsc{L} shape}
      \addlegendentry{}
      \addlegendimage{legend image with text=Simplex}
      \addlegendentry{}
      \addlegendimage{legend image with text=Shape position}
      \addlegendentry{}

      \addlegendimage{empty legend}
      \addlegendentry{\asFGLM}
      \addplot[thick,every mark/.append style={solid},
      mark=triangle*,dashed,red,mark phase=1,mark repeat=4]
      plot coordinates {
        (4,210/8^3) (5,337/10^3) (6,1595/18^3) (7,2355/21^3)
        (8,7147/32^3) (9,9871/36^3) (10,24599/50^3) (11,32254/55^3)
        (12,69587/72^3) (13,87739/78^3) (14,170015/98^3)
        (15,208053/105^3)
        (16,371335/128^3) (17,443935/136^3) (18,742787/162^3)
        (19, 871616/171^3) (20,1384599/200^3) (21,1600259/210^3)
        (22,2436147/242^3) (23,2780359/253^3) 
        (24,4085075/288^3) (25,4613103/300^3)
      };
      \addlegendentry{}
      \addplot[thick,every mark/.append style={solid},
      mark=triangle*,dashed,blue,mark phase=2,mark repeat=4]
      plot coordinates {
        (2,52/3^3) (3,128/5^3) (4,215/7^3) (5,342/9^3) 
        (6,517/11^3) (7,748/13^3) (8,1043/15^3) (9,1410/17^3) 
        (10,1857/19^3) (11,2392/21^3) (12,3023/23^3) (13,3758/25^3) 
        (14,4605/27^3) (15,5572/29^3) (16,6667/31^3) (17,7898/33^3) 
        (18,9273/35^3) (19,10800/37^3) (20,12487/39^3) (21,14342/41^3) 
        (22,16373/43^3) (23,18588/45^3) (24,20995/47^3) (25,23602/49^3)
      };
      \addlegendentry{}
      \addplot[thick,every mark/.append style={solid},
      mark=triangle*,dashed,green,mark phase=3,mark repeat=4]
      plot coordinates {
        (2,52/3^3) (3,200/6^3) (4,615/10^3) (5,1610/15^3) 
        (6,3724/21^3) (7,7812/28^3) (8,15150/36^3) (9,27555/45^3) 
        (10,47520/55^3) (11,78364/66^3) (12,124397/78^3) (13,191100/91^3) 
        (14,285320/105^3) (15,415480/120^3) (16,591804/136^3) (17,826557/153^3) 
        (18,1134300/171^3) (19,1532160/190^3) (20,2040115/210^3) (21,2681294/231^3) 
        (22,3482292/253^3) (23,4473500/276^3) (24,5689450/300^3) (25,7169175/325^3)
      };
      \addlegendentry{}
      \addplot[thick,every mark/.append style={solid},
      mark=triangle*,dashed,orange,mark phase=4,mark repeat=4]
      plot coordinates {
        (2,22/2^3) (3,41/3^3) (4,59/4^3) (5,83/5^3) 
        (6,114/6^3) (7,153/7^3) (8,201/8^3) (9,259/9^3) 
        (10,328/10^3) (11,409/11^3) (12,503/12^3) (13,611/13^3) 
        (14,734/14^3) (15,873/15^3) (16,1029/16^3) (17,1203/17^3) 
        (18,1396/18^3) (19,1609/19^3) (20,1843/20^3) (21,2099/21^3) 
        (22,2378/22^3) (23,2681/23^3) (24,3009/24^3) (25,3363/25^3)
      };
      \addlegendentry{}

      \addlegendimage{empty legend}
      \addlegendentry{\aBMS}
      \addplot[thick,every mark/.append style={solid,rotate=180},
      mark=triangle*,dotted,red,mark phase=1,mark repeat=4]
      plot coordinates {
        (4,2115/8^3) (5,3108/10^3) (6,12589/18^3) (7,18206/21^3)
        (8,50410/32^3) (9,65117/36^3) (10,142499/50^3)
        (11,180157/55^3)
        (12,352288/72^3) (13,426269/78^3) (14,738578/98^3)
        (15,865913/105^3)
        (16,1432507/128^3) (17,1664165/136^3) (18,2567128/162^3)
        (19, 2926327/171^3) (20,4307996/200^3)
      };
      \addlegendentry{}
      \addplot[thick,every mark/.append style={solid,rotate=180},
      mark=triangle*,dotted,blue,mark phase=2,mark repeat=4]
      plot coordinates {
        (2,438/3^3) (3,1111/5^3) (4,2259/7^3) (5,3721/9^3) 
        (6,5817/11^3) (7,8417/13^3) (8,11375/15^3) (9,14881/17^3) 
        (10,19457/19^3) (11,23359/21^3) (12,29387/23^3) (13,34697/25^3) 
        (14,40977/27^3) (15,48195/29^3) (16,55495/31^3) (17,63243/33^3) 
        (18,72697/35^3) (19,80719/37^3) (20,91271/39^3) (21,101213/41^3) 
        (22,113021/43^3) (23,122455/45^3) (24,136731/47^3) (25,148925/49^3)
      };
      \addlegendentry{}
      \addplot[thick,every mark/.append style={solid,rotate=180},
      mark=triangle*,dotted,green,mark phase=3,mark repeat=4]
      plot coordinates {
        (2,427/3^3) (3,1759/6^3) (4,5241/10^3) (5,12860/15^3) 
        (6,27552/21^3) (7,53414/28^3) (8,95823/36^3) (9,161690/45^3) 
        (10,259672/55^3) (11,400330/66^3) (12,596325/78^3) (13,862630/91^3) 
        (14,1216684/105^3) (15,1678622/120^3) (16,2271453/136^3)
        (17,3021248/153^3) (18,3957182/171^3) (19,5112366/190^3)
        (20,6522843/210^3) (21,8228829/231^3) (22,10275367/253^3)
        (23,12709953/276^3) (24,15585915/300^3) (25,18960680/325^3)
      };
      \addlegendentry{}
      \addplot[thick,every mark/.append style={solid,rotate=180},
      mark=triangle*,dotted,orange,mark phase=4,mark repeat=4]
      plot coordinates {
        (2,136/2^3) (3,337/3^3) (4,686/4^3) (5,1108/5^3) 
        (6,1966/6^3) (7,2794/7^3) (8,3205/8^3) (9,4506/9^3) 
        (10,5598/10^3) (11,6283/11^3) (12,8114/12^3) (13,9678/13^3) 
        (14,10331/14^3) (15,12964/15^3) (16,15034/16^3) (17,16314/17^3) 
        (18,18661/18^3) (19,21618/19^3) (20,23282/20^3) (21,26869/21^3) 
        (22,29037/22^3) (23,31586/23^3) (24,35273/24^3) (25,39316/25^3)
      };
      \addlegendentry{}
    \end{axis}
  \end{tikzpicture}
  \caption{Number of basic operations (\textsc{2D}): \asFGLM \& \aBMS}
  \label{fig:basicop2Dadapt}
\end{figure}

\begin{figure}[htbp!]
  \pgfplotsset{
    small,
    width=12cm,
    height=7cm,
    legend cell align=left,
    legend columns=5,
    legend style={at={(-0.05,0.98)},anchor=south
      west,font=\scriptsize,
    }
  }
  \centering
  \begin{tikzpicture}[baseline]
    \begin{axis}[
      ymode=log,
      xlabel={$d$},
      xlabel style={at={(0.95,0.1)}},
      xmin=3.7,xmax=15.2,
      ymin=0.08,ymax=250,
      xtick={2,...,15},
      ytick={0.1,0.2,0.3,0.4,0.5,0.6,0.7,0.8,0.9,1,2,3,4,5,6,7,8,9,10,
        20,30,40,50,60,70,80,90,100,200,300,400,500,600,700,800,900,
        1000,2000,3000},
      yticklabels={},
      extra y ticks={0.1,0.5,1,5,10,50,100,500,1000},
      extra y tick labels={0.1,0.5,1,5,10,50,100,500,1000},
      ylabel={\#\,Basic Op/\#\,S$^3$},
      ylabel style={at={(0.08,0.75)}},
      ]
      \addlegendimage{empty legend}
      \addlegendentry{}
      \addlegendimage{legend image with text=Rectangle}
      \addlegendentry{}
      \addlegendimage{legend image with text=\textsc{L} shape}
      \addlegendentry{}
      \addlegendimage{legend image with text=Simplex}
      \addlegendentry{}
      \addlegendimage{legend image with text=ShapePosition}
      \addlegendentry{}

      \addlegendimage{empty legend}
      \addlegendentry{\asFGLM}
      \addplot[thick,every mark/.append style={solid},
      mark=triangle*,dashed,red,mark phase=1,mark repeat=4]
      plot coordinates {
        (4,1244/16^3) (5,2132/20^3) (6,9923/36^3) (7,47318/63^3)
        (8,159544/96^3) (9,225344/108^3) (10,1379580/200^3)
        (11,1831620/220^3) (12,4076069/288^3) (13,10052631/390^3)
        (14,19865227/490^3) (15,24417922/525^3)
      };
      \addlegendentry{}
      \addplot[thick,every mark/.append style={solid},
      mark=triangle*,dashed,blue,mark phase=2,mark repeat=4]
      plot coordinates {
        (2,140/4^3) (3,384/7^3) (4,628/10^3) (5,998/13^3) 
        (6,1521/16^3) (7,2224/19^3) (8,3134/22^3) (9,4278/25^3) 
        (10,5683/28^3) (11,7376/31^3) (12,9384/34^3) (13,11734/37^3) 
        (14,14453/40^3) (15,17568/43^3)
      };
      \addlegendentry{}
      \addplot[thick,every mark/.append style={solid},
      mark=triangle*,dashed,green,mark phase=3,mark repeat=4]
      plot coordinates {
        (2,140/4^3) (3,1000/10^3) (4,5350/20^3) (5,22575/35^3) 
        (6,78848/56^3) (7,237160/84^3) (8,633100/120^3) (9,1534225/165^3) 
        (10,3433100/220^3) (11,7186608/286^3) (12,14216930/364^3)
      };
      \addlegendentry{}
      \addplot[thick,every mark/.append style={solid,rotate=180},
      mark=triangle*,dashed,orange,mark phase=4,mark repeat=4]
      plot coordinates {
        (2,31/2^3) (3,57/3^3) (4,76/4^3) (5,101/5^3) 
        (6,133/6^3) (7,173/7^3) (8,222/8^3) (9,281/9^3) 
        (10,351/10^3) (11,433/11^3) (12,528/12^3) (13,637/13^3) 
        (14,761/14^3) (15,901/15^3)
      };
      \addlegendentry{}

      \addlegendimage{empty legend}
      \addlegendentry{\aBMS}
      \addplot[thick,every mark/.append style={solid,rotate=180},
      mark=triangle*,dotted,red,mark phase=1,mark repeat=4]
      plot coordinates {
        (4,25728/16^3) (5,46093/20^3) (6,205772/36^3) (7,919757/63^3)
        (8,2802347/96^3) (9,3880340/108^3) (10,19393287/200^3) 
      };
      \addlegendentry{}
      \addplot[thick,every mark/.append style={solid,rotate=180},
      mark=triangle*,dotted,blue,mark phase=2,mark repeat=4]
      plot coordinates {
        (2,1988/4^3) (3,6477/7^3) (4,16081/10^3) (5,36225/13^3) 
        (6,58536/16^3) (7,104025/19^3) (8,155495/22^3) (9,229920/25^3) 
        (10,317121/28^3) (11,422449/31^3) (12,546251/34^3) (13,710852/37^3) 
        (14,865148/40^3) (15,1106272/43^3)
      };
      \addlegendentry{}
      \addplot[thick,every mark/.append style={solid,rotate=180},
      mark=triangle*,dotted,green,mark phase=3,mark repeat=4]
      plot coordinates {
        (2,1774/4^3) (3,14591/10^3) (4,75757/20^3) (5,298521/35^3) 
        (6,964815/56^3) (7,2689885/84^3) (8,6679544/120^3) (9,15125328/165^3) 
        (10,31763926/220^3) (11,62657181/286^3) (12,117227645/364^3)
      };
      \addlegendentry{}
      \addplot[thick,every mark/.append style={solid,rotate=180},
      mark=triangle*,dotted,orange,mark phase=4,mark repeat=4]
      plot coordinates {
        (2,292/2^3) (3,668/3^3) (4,2213/4^3) (5,3551/5^3) 
        (6,5006/6^3) (7,7142/7^3) (8,9096/8^3) (9,12903/9^3) 
        (10,16536/10^3) (11,20235/11^3) (12,24496/12^3) (13,32110/13^3) 
        (14,39217/14^3) (15,45012/15^3)
      };
      \addlegendentry{}
    \end{axis}
  \end{tikzpicture}
  \caption{Number of basic operations (\textsc{3D}): \asFGLM \& \aBMS}
  \label{fig:basicop3Dadapt}
\end{figure}
It seems that the
\asFGLM always perform fewer operations than the \aBMS
algorithm. Though, it is possible that, in dimension $2$,
for larger parameters,
the \aBMS becomes more efficient than the \asFGLM algorithm as
suggested by the graphs. Concerning the \textsc{L} shape family, although the
\aBMS algorithm do not reduce much its number of table queries, it
performs in fact much fewer basic operations than the \BMS algorithm.
For instance, in~\cite[Section~6]{part1}, we can see that the
\BMS algorithm performs four times (\resp seven times) as many basic
operations as the \aBMS algorithm in dimension $2$ (\resp dimension $3$).

It is also possible that the larger number of operations the \aBMS
algorithm performs compared to the \sFGLM algorithm is due
to the larger number of
queries it needs to recover the relations.

Therefore, we now also compare the ratio
between their number of
basic operations and their number of queries in
Figures~\ref{fig:basicop/queries2Dadapt} and~\ref{fig:basicop/queries3Dadapt}.
\begin{figure}[htbp!]
  \pgfplotsset{
    small,
    width=12cm,
    height=7cm,
    legend cell align=left,
    legend columns=5,
    legend style={at={(-0.1,0.98)},anchor=south west,font=\scriptsize,
    }
  }
  \centering
  \begin{tikzpicture}[baseline]
    \begin{axis}[
      ymode=log,
      xlabel={$d$},
      xlabel style={at={(0.95,0.1)}},
      xmin=3.8,xmax=25.2,
      ymin=3.8,ymax=25000,
      xtick={3,...,25},
      ytick={1,2,3,4,5,6,7,8,9,10,20,30,40,50,60,70,80,90,100,
        200,300,400,500,600,700,800,900,1000,
        2000,3000,4000,5000,6000,7000,8000,9000,10000,
        20000,30000,40000,50000,60000,70000,80000,90000,100000,
        200000},
      yticklabels={},
      extra y ticks={5,10,50,100,500,1000,5000,10000,50000,100000},
      extra y tick labels={5,10,50,100,500,1000,5000,10000,50000,100000},
      ylabel={\#\,Basic Op/\#\,Queries},
      ylabel style={at={(0.08,0.72)}},
      ]
      \addlegendimage{empty legend}
      \addlegendentry{}
      \addlegendimage{legend image with text=Rectangle}
      \addlegendentry{}
      \addlegendimage{legend image with text=\textsc{L} shape}
      \addlegendentry{}
      \addlegendimage{legend image with text=Simplex}
      \addlegendentry{}
      \addlegendimage{legend image with text=Shape Position}
      \addlegendentry{}

      \addlegendimage{empty legend}
      \addlegendentry{\asFGLM}
      \addplot[thick,every mark/.append style={solid},
      mark=triangle*,dashed,red,mark phase=1,mark repeat=4]
      plot coordinates {
        (4,210/25) (5,337/31) (6,1595/60) (7,2355/70) (8,7147/110)
        (9,9871/124) (10,24599/176) (11,32254/194) (12,69587/258)
        (13,87739/280) (14,170015/356) (15,208053/382) (16,371335/470)
        (17,443935/500) (18,742787/600) (19,871616/634)
        (20,1384599/746) (21,1600259/784) (22,2436147/908) (23,2780359/950) 
        (24,4085075/1086) (25,4613103/1132)
        
      };
      \addlegendentry{}
      \addplot[thick,every mark/.append style={solid},
      mark=triangle*,dashed,blue,mark phase=2,mark repeat=4]
      plot coordinates {
        (2,52/10) (3,128/19) (4,215/30) (5,342/43) 
        (6,517/58) (7,748/75) (8,1043/94) (9,1410/115) 
        (10,1857/138) (11,2392/163) (12,3023/190) (13,3758/219) 
        (14,4605/250) (15,5572/283) (16,6667/318) (17,7898/355) 
        (18,9273/394) (19,10800/435) (20,12487/478) (21,14342/523) 
        (22,16373/570) (23,18588/619) (24,20995/670) (25,23602/723)
      };
      \addlegendentry{}
      \addplot[thick,every mark/.append style={solid},
      mark=triangle*,dashed,green,mark phase=3,mark repeat=4]
      plot coordinates {
        (2,52/10) (3,200/21) (4,615/36) (5,1610/55) 
        (6,3724/78) (7,7812/105) (8,15150/136) (9,27555/171) 
        (10,47520/210) (11,78364/253) (12,124397/300) (13,191100/351) 
        (14,285320/406) (15,415480/465) (16,591804/528) (17,826557/595) 
        (18,1134300/666) (19,1532160/741) (20,2040115/820) (21,2681294/903) 
        (22,3482292/990) (23,4473500/1081) (24,5689450/1176) (25,7169175/1275)
      };
      \addlegendentry{}
      \addplot[thick,every mark/.append style={solid},
      mark=triangle*,dashed,orange,mark phase=4,mark repeat=4]
      plot coordinates {
        (2,22/6) (3,41/9) (4,59/11) (5,83/13) 
        (6,114/15) (7,153/17) (8,201/19) (9,259/21) 
        (10,328/23) (11,409/25) (12,503/27) (13,611/29) 
        (14,734/31) (15,873/33) (16,1029/35) (17,1203/37) 
        (18,1396/39) (19,1609/41) (20,1843/43) (21,2099/45) 
        (22,2378/47) (23,2681/49) (24,3009/51) (25,3363/53)
      };
      \addlegendentry{}

      \addlegendimage{empty legend}
      \addlegendentry{\aBMS}
      \addplot[thick,every mark/.append style={solid,rotate=180},
      mark=triangle*,dotted,red,mark phase=1,mark repeat=4]
      plot coordinates {
        (4,2115/41) (5,3108/58) (6,12589/103) (7,18206/132)
        (8,50410/198)
        (9,65117/236) (10,142499/320) (11,180157/371) (12,352288/478)
        (13,426269/541) (14,738578/663) (15,865913/731)
        (16,1432507/882)
        (17,1664165/967) (18,2567128/1141) (19,2926327/1238) (20,4307996/1418)
      };
      \addlegendentry{}
      \addplot[thick,every mark/.append style={solid,rotate=180},
      mark=triangle*,dotted,blue,mark phase=2,mark repeat=4]
      plot coordinates {
        (2,438/10) (3,1111/21) (4,2259/36) (5,3721/55) 
        (6,5817/78) (7,8417/105) (8,11375/136) (9,14881/171) 
        (10,19457/210) (11,23359/253) (12,29387/300) (13,34697/351) 
        (14,40977/406) (15,48195/465) (16,55495/528) (17,63243/595) 
        (18,72697/666) (19,80719/741) (20,91271/820) (21,101213/903) 
        (22,113021/990) (23,122455/1081) (24,136731/1176) (25,148925/1275)
      };
      \addlegendentry{}
      \addplot[thick,every mark/.append style={solid,rotate=180},
      mark=triangle*,dotted,green,mark phase=3,mark repeat=4]
      plot coordinates {
        (2,427/10) (3,1759/21) (4,5241/36) (5,12860/55) 
        (6,27552/78) (7,53414/105) (8,95823/136) (9,161690/171) 
        (10,259672/210) (11,400330/253) (12,596325/300) (13,862630/351) 
        (14,1216684/406) (15,1678622/465) (16,2271453/528) (17,3021248/595) 
        (18,3957182/666) (19,5112366/741) (20,6522843/820) (21,8228829/903) 
        (22,10275367/990) (23,12709953/1081) (24,15585915/1176) (25,18960680/1275)
      };
      \addlegendentry{}
      \addplot[thick,every mark/.append style={solid,rotate=180},
      mark=triangle*,dotted,orange,mark phase=4,mark repeat=4]
      plot coordinates {
        (2,136/6) (3,337/10) (4,686/14) (5,1108/18) 
        (6,1966/22) (7,2794/26) (8,3205/30) (9,4506/34) 
        (10,5598/38) (11,6283/42) (12,8114/46) (13,9678/50) 
        (14,10331/54) (15,12964/58) (16,15034/62) (17,16314/66) 
        (18,18661/70) (19,21618/74) (20,23282/78) (21,26869/82) 
        (22,29037/86) (23,31586/90) (24,35273/94) (25,39316/98)
      };
      \addlegendentry{}
    \end{axis}
  \end{tikzpicture}
  \caption{Number of basic operations by queries (\textsc{2D}):
    \asFGLM \& \aBMS}
  \label{fig:basicop/queries2Dadapt}
\end{figure}

In dimension $2$, the \asFGLM algorithm seems to have a
better ratio between the number of operations and the number of
queries than the \aBMS algorithm. Yet, once again, it is possible that this
statement is not true for larger $d$.

\begin{figure}[htbp!]
  \pgfplotsset{
    small,
    width=12cm,
    height=7cm,
    legend cell align=left,
    legend columns=5,
    legend style={at={(-0.1,0.98)},anchor=south west,font=\scriptsize,
    }
  }
  \centering
  \begin{tikzpicture}[baseline]
    \begin{axis}[
      ymode=log,
      xlabel={$d$},
      xlabel style={at={(0.95,0.1)}},
      xmin=3.7,xmax=15.2,
      ymin=3.8,ymax=25000,
      xtick={2,...,15},
      ytick={1,2,3,4,5,6,7,8,9,10,20,30,40,50,60,70,80,90,100,
        200,300,400,500,600,700,800,900,1000,2000,3000,4000,5000,6000,
        7000,8000,9000,10000,20000,30000,40000,50000,60000},
      yticklabels={},
      extra y ticks={5,10,50,100,500,1000,5000,10000,50000},
      extra y tick labels={5,10,50,100,500,1000,5000,10000,50000},
      ylabel={\#\,Basic Op/\#\,Queries},
      ylabel style={at={(0.08,0.72)}},
      ]
      \addlegendimage{empty legend}
      \addlegendentry{}
      \addlegendimage{legend image with text=Rectangle}
      \addlegendentry{}
      \addlegendimage{legend image with text=\textsc{L} shape}
      \addlegendentry{}
      \addlegendimage{legend image with text=Simplex}
      \addlegendentry{}
      \addlegendimage{legend image with text=Shape Position}
      \addlegendentry{}

      \addlegendimage{empty legend}
      \addlegendentry{\asFGLM}
      \addplot[thick,every mark/.append style={solid},
      mark=triangle*,dashed,red,mark phase=1,mark repeat=4]
      plot coordinates {
        (4,1244/72) (5,2132/90) (6,9923/174) (7,47318/334)
        (8,159544/534) (9,225344/604) (10,1379580/1206) (11,1831620/1332) 
        (12,4076069/1780) (13,10052631/2484) (14,19865227/3168) (15,24417922/3402)
      };
      \addlegendentry{}
      \addplot[thick,every mark/.append style={solid},
      mark=triangle*,dashed,blue,mark phase=2,mark repeat=4]
      plot coordinates {
        (2,140/20) (3,384/42) (4,628/69) (5,998/102) 
        (6,1521/141) (7,2224/186) (8,3134/237) (9,4278/294) 
        (10,5683/357) (11,7376/426) (12,9384/501) (13,11734/582) 
        (14,14453/669) (15,17568/762) 
      };
      \addlegendentry{}
      \addplot[thick,every mark/.append style={solid},
      mark=triangle*,dashed,green,mark phase=3,mark repeat=4]
      plot coordinates {
        (2,140/20) (3,1000/56) (4,5350/120) (5,22575/220) 
        (6,78848/364) (7,237160/560) (8,633100/816) (9,1534225/1140) 
        (10,3433100/1540) (11,7186608/2024) (12,14216930/2600)
      };
      \addlegendentry{}
      \addplot[thick,every mark/.append style={solid},
      mark=triangle*,dashed,orange,mark phase=4,mark repeat=4]
      plot coordinates {
        (2,31/8) (3,57/13) (4,76/18) (5,101/21) 
        (6,133/25) (7,173/30) (8,222/33) (9,281/36) 
        (10,351/40) (11,433/46) (12,528/49) (13,637/52) 
        (14,761/55) (15,901/59)
      };
      \addlegendentry{}

      \addlegendimage{empty legend}
      \addlegendentry{\aBMS}
      \addplot[thick,every mark/.append style={solid,rotate=180},
      mark=triangle*,dotted,red,mark phase=1,mark repeat=4]
      plot coordinates {
        (4,25728/207) (5,46093/330) (6,205772/709) (7,919757/1265)
        (8,2802347/2159) (9,3880340/2739) (10,19393287/4819) 
      };
      \addlegendentry{}
      \addplot[thick,every mark/.append style={solid,rotate=180},
      mark=triangle*,dotted,blue,mark phase=2,mark repeat=4]
      plot coordinates {
        (2,1988/20) (3,6477/56) (4,16081/120) (5,36225/218) 
        (6,58536/357) (7,104025/545) (8,155495/784) (9,229920/1090) 
        (10,317121/1457) (11,422449/1907) (12,546251/2424) (13,710852/3043) 
        (14,865148/3741) (15,1106272/4557)
      };
      \addlegendentry{}
      \addplot[thick,every mark/.append style={solid,rotate=180},
      mark=triangle*,dotted,green,mark phase=3,mark repeat=4]
      plot coordinates {
        (2,1774/20) (3,14591/56) (4,75757/120) (5,298521/220) 
        (6,964815/364) (7,2689885/560) (8,6679544/816) (9,15125328/1140) 
        (10,31763926/1540) (11,62657181/2024) (12,117227645/2600) 
      };
      \addlegendentry{}
      \addplot[thick,every mark/.append style={solid,rotate=180},
      mark=triangle*,dotted,orange,mark phase=4,mark repeat=4]
      plot coordinates {
        (2,292/8) (3,668/13) (4,2213/18) (5,3551/23) 
        (6,5006/28) (7,7142/33) (8,9096/38) (9,12903/43) 
        (10,16536/48) (11,20235/53) (12,24496/58) (13,32110/63) 
        (14,39217/68) (15,45012/73)
      };
      \addlegendentry{}
    \end{axis}
  \end{tikzpicture}
  \caption{Number of basic operations by queries (\textsc{3D}):
    \asFGLM \& \aBMS}
  \label{fig:basicop/queries3Dadapt}
\end{figure}

In dimension $3$, however, our experiments lead us to believe that
this ratio will always be larger for the \aBMS algorithm than for the
\asFGLM algorithm.



\bibliographystyle{elsarticle-harv} 
\addcontentsline{toc}{section}{References}
\bibliography{biblio}

\begin{appendices}
  \section{The \BMS algorithm}\label{a:BMS}
  \renewcommand{\thetheorem}{\thesection.\arabic{theorem}}
\renewcommand{\thedefinition}{\thesection.\arabic{definition}}
\renewcommand{\theexample}{\thesection.\arabic{example}}
\renewcommand{\thealgo}{\thesection.\arabic{algo}}
\renewcommand{\thealgorithm}{\thesection.\arabic{algorithm}}
\setcounter{theorem}{0}
\setcounter{definition}{0}
\setcounter{example}{0}
\setcounter{algo}{0}
\setcounter{algorithm}{0}
\emph{This appendix can also be found in~\cite[Section~3]{part1}.}

As in~\cite{Guisse2016}, we specialize to $\K[\x]$ the presentation of
the \BMS algorithm
given in~\cite{Bras-Amoros2006}, \cite{CoxLOS2015b} and~\cite{Sakata09}
in the more general case of ordered domains.

\subsection{A Polynomial interpretation of the \BMS algorithm}
Given a table $\bu=(u_{\bi})_{\bi\in\N^n}$ and a weight
ordering $\prec$ for $\x$. We let
$\cT_0=\{0\}\cup\{\x^{\bi},\ \bi\in\N^n\}$
and extend $\prec$ (still denoted by $\prec$)
to $\cT_0$ with the convention that $0\prec 1$.

The goal is to iterate on a monomial $m$,
by only considering, at each step, the table
$(u_{\bi})_{\bi\in\{\bk,\ \x^{\bk}\preceq m\}}$. As we only know partially the
table $\bu$, we need to define some notions according to this partial
knowledge at step $m$.
\begin{definition}
  \label{def:shift}
  Let $m\in\cT_0$. Let $f\in\K[\x]$, we say that the relation $f$ is
  \emph{valid up to $m$}, whenever
  \[\forall t\in\cT_0,\,
  \LM(t\,f) \preceq m \Rightarrow [t\,f]=0.\]
  We thus define the \emph{shift} of
  $f$ as $\shift (f)=\frac{m}{\LM(f)}$.

  We say that the relation $f$ \emph{fails} at $m$ whenever
  \begin{align*}
    \forall t\in\cT_0,\,
    t\,f \prec m \Rightarrow [t\,f]&=0,\\
    \cro{\frac{m}{\LM(f)}\,f}&\neq 0.
  \end{align*}
  We define the \emph{fail} of $f$ as
  $\fail(f)=m$.
  If the relation $f$ never fails, that is for all $t\in\cT_0$, $[t\,f]=0$,
  then by convention $\fail(f)=\shift(f)=+\infty$.
\end{definition}

\begin{proposition}
  Let $\bu$ be a table and $f\in\K[x]$ such that $\fail(f)\succ m$. For all
  $g\in\K[\x]$, if $\LM(g\,f)\preceq m$, then $[g\,f]=0$.
\end{proposition}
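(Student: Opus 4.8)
The plan is to reduce the claim for an arbitrary polynomial $g$ to the one-monomial case using the $\K$-linearity of the bracket, and then to read off the conclusion directly from the definition of $\fail(f)$. First I would expand $g=\sum_\bk c_\bk\,\x^\bk$ as a linear combination of monomials; since $\cro{\,\cdot\,}$ is $\K$-linear, $\cro{g\,f}=\sum_\bk c_\bk\,\cro{\x^\bk\,f}$, so it suffices to prove that $\cro{\x^\bk\,f}=0$ for every $\x^\bk$ in the support of $g$.

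The key structural fact is the multiplicativity of leading monomials, $\LM(g\,f)=\LM(g)\,\LM(f)$, which holds because $\K$ is a field (so leading terms cannot cancel) and the ordering is compatible with multiplication (property~2 of a monomial ordering). For each $\bk\in\supp g$ we have $\x^\bk\preceq\LM(g)$, hence $\LM(\x^\bk\,f)=\x^\bk\,\LM(f)\preceq\LM(g)\,\LM(f)=\LM(g\,f)$. Combining this with the hypotheses $\LM(g\,f)\preceq m$ and $\fail(f)\succ m$ gives the chain $\LM(\x^\bk\,f)\preceq m\prec\fail(f)$, so in particular $\LM(\x^\bk\,f)\prec\fail(f)$.

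To finish, I would feed this strict inequality into Definition~\ref{def:shift}. If $\fail(f)$ is a genuine monomial $M$, the first line of the ``fails at $M$'' condition says precisely that $\cro{t\,f}=0$ for every $t\in\cT_0$ with $\LM(t\,f)\prec M$; taking $t=\x^\bk$ yields $\cro{\x^\bk\,f}=0$. If instead $\fail(f)=+\infty$, the convention states that $f$ never fails, so $\cro{\x^\bk\,f}=0$ for all $\bk$ outright, and the hypothesis $\fail(f)\succ m$ is automatic. Summing over $\bk$ then gives $\cro{g\,f}=0$. I expect the only delicate point to be ensuring the inequality $\LM(\x^\bk\,f)\prec\fail(f)$ is \emph{strict}: this is exactly what the strict hypothesis $\fail(f)\succ m$ buys us once the products are bounded above by $m$, and it is what lets us invoke the fail condition rather than merely the weaker notion of validity up to $m$.
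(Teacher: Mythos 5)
Your proof is correct: the paper states this proposition without any proof, treating it as immediate from the definition of $\fail$, and your argument --- $\K$-linearity of $\cro{\,\cdot\,}$ to reduce to monomials $\x^\bk\in\supp g$, the bound $\LM(\x^\bk\,f)\preceq\LM(g)\,\LM(f)=\LM(g\,f)\preceq m\prec\fail(f)$, and then the first clause of the fail condition (with the $\fail(f)=+\infty$ convention handled separately) --- is exactly the natural verification the authors left implicit. Your closing remark is also the right diagnosis: the strictness $m\prec\fail(f)$ is what permits invoking the fail clause, which only guarantees vanishing of $\cro{t\,f}$ for $\LM(t\,f)$ \emph{strictly} below $\fail(f)$.
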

The following proposition show how to combine two failing relations with the
same shift in order to obtain a new relation valid with a bigger shift.

\begin{proposition}
  \label{prop:augm_shift}
  Let $f_1$ and $f_2$ be two relations such that
  $v=\frac{\fail (f_1)}{\LM(f_1)}=\frac{\fail (f_2)}{\LM (f_2)}$ and
  $e_1=\cro{v\,f_1}$, $e_2=\cro{v\,f_2}$. Let $f$ be the nonzero polynomial
  $f_1 - \frac{e_1}{e_2}\,f_2$. Then, for $i\in\{1,2\}$,
  $\fail(f)\succ \fail(f_i)$, \ie
  $\frac{\fail(f)}{\LM(f)}\succ v$.
\end{proposition}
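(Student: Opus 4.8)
The plan is to exploit the linearity of the evaluation bracket $[\cdot]$ together with the fact that the common shift $v$ is exactly the monomial at which both discrepancies $e_1$ and $e_2$ appear, so that the combination $f_1-\frac{e_1}{e_2}\,f_2$ is engineered precisely to annihilate that discrepancy. First I would record that $e_i=[v\,f_i]\neq 0$ for $i\in\{1,2\}$: this is nothing but the statement that $f_i$ fails at $\fail(f_i)=v\,\LM(f_i)$, since $\frac{\fail(f_i)}{\LM(f_i)}=v$. In particular $e_2\neq 0$, so the quotient $e_1/e_2$ is well defined (the nonvanishing of $f$ itself being part of the hypothesis).

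The heart of the argument is to show that $[t\,f]=0$ for every monomial $t\preceq v$. For $t\prec v$ I would observe that $\LM(t\,f_i)=t\,\LM(f_i)\prec v\,\LM(f_i)=\fail(f_i)$, so both $f_1$ and $f_2$ are valid at the shift $t$; hence $[t\,f_1]=[t\,f_2]=0$ and, by linearity, $[t\,f]=[t\,f_1]-\frac{e_1}{e_2}\,[t\,f_2]=0$. At the critical shift $t=v$, the defining choice of the coefficient produces the cancellation $[v\,f]=[v\,f_1]-\frac{e_1}{e_2}\,[v\,f_2]=e_1-\frac{e_1}{e_2}\,e_2=0$.

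Collecting these facts, $f$ is valid at every shift $t\preceq v$, which is exactly validity up to the monomial $v\,\LM(f)$, because $\LM(t\,f)\preceq v\,\LM(f)$ is equivalent to $t\preceq v$. Consequently $f$ cannot fail at or below $v\,\LM(f)$, so $\fail(f)\succ v\,\LM(f)$, that is $\frac{\fail(f)}{\LM(f)}\succ v$, which is the asserted strict increase of the shift. To recover the phrasing $\fail(f)\succ\fail(f_i)$, I would finish with a short case distinction on $\LM(f_1)$ versus $\LM(f_2)$: when they differ, the larger leading term survives in $f$, so $\LM(f)=\max_{\prec}(\LM(f_1),\LM(f_2))$; then $v\,\LM(f)=\max_{\prec}(v\,\LM(f_1),v\,\LM(f_2))=\max_{\prec}(\fail(f_1),\fail(f_2))\succeq\fail(f_i)$, and combined with $\fail(f)\succ v\,\LM(f)$ this gives $\fail(f)\succ\fail(f_i)$.

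The step I expect to require the most care is the translation between the two equivalent formulations of validity, namely ``$\LM(t\,f)\preceq m$ for all $t$'' versus ``$t\preceq m/\LM(f)$ for all shifts $t$'', and, relatedly, pinning down $\LM(f)$ when $\LM(f_1)=\LM(f_2)$ and the leading terms may cancel. In that degenerate case the genuinely clean invariant is only the shift inequality $\frac{\fail(f)}{\LM(f)}\succ v$, and it is precisely this inequality that the update rule of the \aBMS algorithm relies upon.
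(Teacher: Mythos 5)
Your proof is correct and follows essentially the same route as the paper's: linearity of the bracket kills every shift $t\prec v$, and the coefficient $\frac{e_1}{e_2}$ is chosen exactly to cancel the discrepancy at $t=v$, yielding $\fail(f)\succ v\,\LM(f)$. You are in fact slightly more careful than the paper, whose closing chain $\fail(f)\succ v\,\LM(f)\succeq\fail(f_i)$ silently assumes $\LM(f)\succeq\LM(f_i)$, i.e.\ no cancellation of leading terms when $\LM(f_1)=\LM(f_2)$; your remark that in that degenerate case only the shift inequality $\frac{\fail(f)}{\LM(f)}\succ v$ survives is exactly right, and that is the form the algorithm actually relies on.
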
 
\begin{proof}
  For any $c\in\K$ and any $\mu\in\K[\x]$ such that
  $\LM(g)\prec v$, 
  we have
  $[\mu\,(f_1+c\,f_2)]=[\mu\,f_1]+c\,[\mu\,f_2]=0$, hence
  $\fail (f_1+c\,f_2)\succeq \fail (f_i)$.
  
  It remains to prove that for a good choice of $c$, we have a strict
  inequality:
  as, $[v\,(f_1+c\,f_2)]=[v\,f_1]+c\,[v\,f_2]=e_1+c\,e_2$, it is clear that
  $[v\,f]=[v\,(f_1-\frac{e_1}{e_2}\,f_2)]=0$, so that
  $\fail (f)\succ v\,\LM(f)\succeq\fail (f_i)$.
\end{proof}

\begin{definition}\label{def:im}
  Using the same notation as in Definition~\ref{def:staircase}, we let
  \[I_m = \{f\in\K[\x],\ \fail\pare{f}\succ m\},\]
  and $\cG_m$ be the least elements for $\prec$ of $I_m$, it is
  a truncated \gb of $I_m$:
  \begin{align*}
    \cG_m &= \min_{\prec}\{g,\ g\in I_m\},\\
    S_m &= \Staircase(\cG_m).
  \end{align*}
\end{definition}

\begin{example}
  Let us go back to Example~\ref{ex:binom} with sequence
  $\bin = \left(\binom{i}{j}\right)_{(i,j)\in\N^2}$.
  Consider $\K[x,y]$ with the $\DRL(y\prec x)$ ordering,
  and $m=x^2$.
  \[
  \begin{ytableau}
    \none[y^2] & 0   \\
    \none[y]   & 0        & 1   \\
    \none[1]   & 1        & 1        &     *(green)1 \\  
    \none      & \none[1] & \none[x] & \none[x^2]
  \end{ytableau}
  \]
  From this table, on the one hand, we can deduce that
  \begin{itemize}
  \item since it is not identically $0$, there is no relation with
    leading monomial $1$ valid up to $x^2$, hence $1\in S_{x^2}$;
  \item since $[y+\alpha]=\alpha$ and
    $[x\,(y+\alpha)]=1+\alpha$, there is no relation with leading
    monomial $y$ valid up to $x\,y$ and thus $x^2$, hence
    $y\in S_{x^2}$;
  \item since $[y\,(x+\beta\,y+\alpha)]=1$, there is no relation with leading
    monomial $x$ valid up to $x\,y$ and thus $x^2$, hence $x\in S_{x^2}$.
  \end{itemize}
  On the other hand, we can check that
  \begin{itemize}
  \item since $[y^2]=0$, relation $y^2$ is valid up to $y^2$ and thus
    $x^2$, hence $y^2\in\cT\setminus S_{x^2}$;
  \item since $[x\,y-1]=0$, relation $x\,y-1$ is valid up to $x\,y$
    and thus $x^2$, hence $x\,y\in\cT\setminus S_{x^2}$;
  \item since $[x^2-x]=0$, relation $x^2-x$ is valid up to $x^2$,
    hence $x^2\in\cT\setminus S_{x^2}$.
  \end{itemize}
  Therefore, $S_{x^2} = \{1,y,x\}$,
  $\max_|(S_{x^2})=\{y,x\}$ and $\min_|(\cT\setminus S_{x^2})=\{y^2,x\,y,x^2\}$.
  This is summed up in the following diagram. 
  \[
  \begin{ytableau}
    \none[y^2] & \bigodot  \\
    \none[y]   & \bigotimes  & \bigodot\\
    \none[1]   &    & \bigotimes     &    *(green)\bigodot \\
    \none      & \none[1] & \none[x] & \none[x^2]
  \end{ytableau}
  \quad\quad
  \begin{array}{cl}
    \\\\
    \bigodot: &\min_|(\cT\setminus S_{x^2})\\
    \bigotimes: &\max_|(S_{x^2})
  \end{array}
  \]
  Let us notice that many relations with respective leading monomials
  $y^2,x\,y,x^2$ suit actually. These would be
  $y^2-\alpha_1\,x+\alpha_y\,y+\alpha_1,x\,y-(1+\alpha_1)\,x
  +\alpha_y\,y+\alpha_1$ and
  $x^2-(1+\alpha_1)\,x+\alpha_y\,y+\alpha_1$. Furthermore,
  $I_{x^2}$ is not stable by addition: $(x^2-x),(x^2-2\,x+1)\in I_{x^2}$ but
  $x^2-x-(x^2-2\,x+1)=(x-1)\not\in I_{x^2}$ since $\fail\pare{x-1}=x\,y$.
  Hence, $I_{x^2}$ is not an ideal of $\K[x,y]$.

  For $m=x^3$, with the following table, we find that
  \[
  \begin{ytableau}
    \none[y^3] &0\\
    \none[y^2] & 0 &0  \\
    \none[y]   & 0        & 1   & 2\\
    \none[1]   & 1        & 1        & 1 &     *(green)1 \\  
    \none      & \none[1] & \none[x] & \none[x^2] &\none[x^3]
  \end{ytableau}
  \]
  \begin{itemize}
  \item since $[y^2]=[y\,y^2]=[x\,y^2]=0$, then $y^2$ is valid up to
    $x\,y^2$ and thus $x^3$;
  \item since $[x\,y-1]=[y\,(x\,y-1)]=0$ and $[x\,(x\,y-y)]=1$, then
    $x\,y-1$ fails at $x^2\,y$. Yet, since
    $[y]=[y\,y]=0$ and $[x\,y]=1$, then by
    Proposition~\ref{prop:augm_shift}, 
    $[x\,y-y-1]=[y\,(x\,y-y-1)]=0$ and $[x\,(x\,y-y-1)]$ vanishes as
    well. Hence, $x\,y-y-1$ is
    valid up to $x^2\,y$ and thus $x^3$;
  \item since $[x^2-x]=0$ and $[y\,(x^2-x)]=1$, then $x^2-x$ fails at
    $x^2\,y$. Likewise, since $[x-1]=0$ and $[y\,(x-1)]=1$, then
    $[x^2-2\,x+1]=0$ and $[y\,(x^2-2\,x+1)]=0$. Furthermore,
    $[x\,(x^2-2\,x+1)]=0$, so that
    $x^2-2\,x+1$ is valid up to $x^3$.
  \end{itemize}
  Therefore, $S_{x^3} = \{1,y,x\}$,
  $\max_|(S_{x^3})=\{y,x\}$ and $\min_|(\cT\setminus
  S_{x^3})=\{y^2,x\,y,x^2\}$.
  We can also check that these relations span the only valid relations with
  support in $S_{x^3}\cup\{y^2,x\,y,x^2\}$.
  \[
  \begin{ytableau}
    \none[y^3] &       \\
    \none[y^2] & \bigodot &  \\
    \none[y]   & \bigotimes  & \bigodot & \\
    \none[1]   &    & \bigotimes     &    \bigodot  & *(green)\\
    \none      & \none[1] & \none[x] & \none[x^2] & \none[x^3]\\
  \end{ytableau}
  \]
\end{example}

Although $I_m$ is not an ideal in general, we have the following results:

 \begin{proposition}\label{prop:Im-closed}
   Using the notation of Definitions~\ref{def:shift} and~\ref{def:im},
   \begin{enumerate}
   \item \label{eq:stab} $I_m$ is closed under multiplication by 
   elements of $\K[\x]$,
   \item for all monomials $t,t'$ such that $t| t'$,
     \begin{enumerate}
     \item \label{it:staircase}
       if $t'\in S_m$, then $t\in S_m$.
     \item \label{it:compl_staircase}
       if $t\in\cT\setminus S_m$, then $t'\in\cT\setminus S_m$,
     \end{enumerate}
   \end{enumerate}
 \end{proposition}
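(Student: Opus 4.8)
The plan is to handle the two items separately, the first carrying the actual content and the second being essentially a formal consequence of the definition of $\Staircase$. Throughout I would exploit the reformulation furnished by the Proposition stated just after Definition~\ref{def:shift}: an element $f$ lies in $I_m$, i.e.\ $\fail(f)\succ m$, exactly when $f$ is \emph{valid up to $m$}, meaning $[g\,f]=0$ for every $g\in\K[\x]$ with $\LM(g\,f)\preceq m$. The forward implication is precisely that Proposition, and the converse is immediate from reading $\fail$ as the least monomial at which $f$ fails.

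For item~\ref{eq:stab}, fix $f\in I_m$ and $h\in\K[\x]$; the goal is $h\,f\in I_m$. If $h\,f=0$ this is trivial since the zero polynomial never fails, so assume $h\,f\neq 0$. I would argue by contradiction: suppose $\fail(h\,f)=m'\preceq m$. By the definition of $\fail$, the quantity $v=m'/\LM(h\,f)$ is a genuine monomial (a fail of a polynomial is always a multiple of its leading monomial) and $[v\,(h\,f)]\neq 0$. But $v\,(h\,f)=(v\,h)\,f$ with $g:=v\,h\in\K[\x]$, and $\LM(g\,f)=\LM(v\,h\,f)=v\,\LM(h\,f)=m'\preceq m$. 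Since $f\in I_m$, the preceding Proposition then forces $[g\,f]=0$, contradicting $[v\,(h\,f)]\neq 0$. Hence $\fail(h\,f)\succ m$, that is $h\,f\in I_m$. The only points requiring care are that $v=m'/\LM(h\,f)$ is indeed a monomial and that $\LM$ is multiplicative on the product $v\,h\,f$; both are standard for a monomial ordering, and this is where I expect the (mild) main obstacle to sit.

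For item~2, I would observe that both sub-statements are purely combinatorial consequences of $S_m=\Staircase(\cG_m)=\{s\in\cT:\ \forall g\in\cG_m,\ \LM(g)\nmid s\}$. Its complement is a union of divisibility cones, $\cT\setminus S_m=\{s\in\cT:\ \exists g\in\cG_m,\ \LM(g)\mid s\}$, which is upward closed under divisibility: if $t\in\cT\setminus S_m$ with $\LM(g)\mid t$ and $t\mid t'$, then $\LM(g)\mid t'$, so $t'\in\cT\setminus S_m$. This is exactly statement~\ref{it:compl_staircase}, and statement~\ref{it:staircase} is its contrapositive, since $t'\in S_m$ together with $t\mid t'$ forbids $t\in\cT\setminus S_m$ and hence puts $t\in S_m$. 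I would close by remarking that this is consistent with item~\ref{eq:stab}: because $I_m$ is closed under multiplication, the set $\LM(I_m)$ of leading monomials of its nonzero elements is stable under multiplication by monomials and coincides with $\cT\setminus S_m$, so the divisibility closure of the staircase is simply the multiplicative closure of $I_m$ read off at the level of leading monomials.
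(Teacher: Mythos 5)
Your proof is correct. Note that the paper itself states Proposition~\ref{prop:Im-closed} without any proof (the appendix is reproduced from \cite[Section~3]{part1}, where the result is treated as essentially immediate), so there is no written argument to compare against; yours supplies exactly the details one would expect. For item~\ref{eq:stab}, the contradiction argument is sound: if $\fail(h\,f)=m'\preceq m$, then by Definition~\ref{def:shift} the fail is indeed a multiple of $\LM(h\,f)$, so $v=m'/\LM(h\,f)$ is a monomial with $[v\,(h\,f)]\neq 0$; writing $v\,(h\,f)=(v\,h)\,f$ and using that multiplication by the monomial $v$ sends $\LM(h\,f)$ to $\LM(v\,h\,f)=m'\preceq m$, the unnumbered proposition following Definition~\ref{def:shift} (which extends validity from monomial multiples to arbitrary polynomial multiples by linearity of $[\,\cdot\,]$) yields $[(v\,h)\,f]=0$, a contradiction; the $h\,f=0$ case is correctly dispatched by the convention $\fail(0)=+\infty$. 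For item~2, reducing both sub-statements to the upward divisibility-closure of $\cT\setminus S_m=\{s\in\cT:\ \exists\,g\in\cG_m,\ \LM(g)\mid s\}$ is exactly right and requires nothing about $I_m$. One caveat on your closing remark: the identity $\LM(I_m\setminus\{0\})=\cT\setminus S_m$ needs the additional fact that $\LM(\cG_m)$ consists of the divisibility-minimal elements of $\LM(I_m)$ (implicit in Definition~\ref{def:im} but not yet verified at this point, and delicate since $I_m$ is not an ideal --- the paper's example shows it is not closed under addition); since you flag it as a consistency remark rather than a step of the proof, it does no harm, but it should not be relied upon here.
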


Moreover, it is clear that the sequence $(I_m)_{m\in\cT_0}$ is decreasing 
and that if $\bu$ is linear recurrent then $I = \bigcap_{m\in\cT_0}I_m$.
Therefore, $\left(S_m \right)_{m\in\cT_0}$
is increasing and its limit is $S$ the finite target staircase. Hence,
for $m$ big enough, $S_m$
will be the target staircase. We will give an upper bound in 
Proposition~\ref{prop:upperbound}.

The following result gives an intrinsic characterization of 
$S_m$ that is key in the iteration of the \BMS algorithm.

\begin{proposition}\label{prop:iter}
  For all monomial $m\in\cT_0$, 
  $S_m = \acc{\frac{\fail(f)}{\LM(f)},\ f\notin I_m}$.

  Furthermore, let $m^+$ be the successor of $m$. Let $s$ be a
  monomial in the staircase $S_{m^+}$. Then, $s$
  was added at step $m^+$, \ie $s\notin S_m$, if, and only if,
  $s|m^+$ and $\frac{m^+}{s}\in S_{m^+}\setminus S_m$.
\end{proposition}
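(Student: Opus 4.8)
The plan is to prove the two assertions separately; the only ingredient beyond the definitions will be the symmetry of the multi-Hankel matrix, whose entries satisfy $\cro{t\,t'}=\cro{t'\,t}$, so that the roles played by the leading monomial of a relation and by its shift are interchangeable. Throughout I write $v(f)=\frac{\fail(f)}{\LM(f)}=\shift(f)$, so that $f\notin I_m$ means $\fail(f)=v(f)\,\LM(f)\preceq m$.

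For the inclusion $\acc{v(f),\ f\notin I_m}\subseteq S_m$ I would argue by contradiction, and this part is entirely elementary. Take $f\notin I_m$, set $\ell=\LM(f)$ and $v=v(f)$, so $\fail(f)=v\,\ell\preceq m$ and $\cro{v\,f}\neq0$, and normalise $f=\ell-\sum_{w'\prec\ell}a_{w'}\,w'$. If $v\notin S_m$ then $v\in\cT\setminus S_m$, so by definition of the staircase there is $g\in\cG_m\subseteq I_m$ with $\LM(g)\mid v$; multiplying by $v/\LM(g)$ (legitimate since $I_m$ is closed under multiplication, Proposition~\ref{prop:Im-closed}) yields a monic $\tilde g=v-\sum_{w\prec v}b_w\,w\in I_m$, valid up to $m$, with $\LM(\tilde g)=v$. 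The key step chains three vanishings: since $\LM(\ell\,\tilde g)=v\,\ell=\fail(f)\preceq m$ one gets $\cro{\ell\,v}=\sum_w b_w\,\cro{\ell\,w}$; since each $w\prec v$ forces $\cro{w\,f}=0$ (as $\LM(w\,f)=w\,\ell\prec v\,\ell=\fail(f)$) one gets $\cro{w\,\ell}=\sum_{w'}a_{w'}\,\cro{w\,w'}$; and since $w'\prec\ell$ gives $\LM(w'\,\tilde g)=w'\,v\prec\ell\,v\preceq m$, one gets $\cro{w'\,\tilde g}=0$, that is $\cro{w'\,v}=\sum_w b_w\,\cro{w\,w'}$. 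Substituting these into one another collapses to $\cro{v\,f}=\cro{v\,\ell}-\sum_{w'}a_{w'}\,\cro{v\,w'}=0$, contradicting $\cro{v\,f}\neq0$; hence $v\in S_m$.

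For the reverse inclusion $S_m\subseteq\acc{v(f)}$ a direct construction seems hard, and here I would invoke the linear-algebra description of the \BMS algorithm from~\cite{part1}: $s\in S_m$ exactly when the column of the multi-Hankel matrix indexed by $s$ is linearly independent from the columns indexed by the monomials $\prec s$, using only the entries $\cro{t\,t'}$ known at step $m$, namely those with $t\,t'\preceq m$. Transposing this condition is harmless because the matrix is symmetric, and it turns the column rank profile into the row rank profile: the existence of a monic combination $f$ with $\cro{t\,f}=0$ for all $t\prec s$ and $\cro{s\,f}\neq0$ while $\LM(s\,f)\preceq m$. Such an $f$ is precisely a relation with $v(f)=s$ and $\fail(f)\preceq m$. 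I expect the delicate point, and the main obstacle of the whole proof, to be the bookkeeping of the truncation: one must verify that independence witnessed on the \emph{known} part of the matrix matches exactly the constraint $\fail(f)=s\,\LM(f)\preceq m$, so that no unseen table entry is ever used.

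For the second statement I would feed the first one back in. Passing from $m$ to its successor $m^+$ can only create relations that fail exactly at $m^+$, so the first statement gives $S_{m^+}\setminus S_m=\acc{v(f),\ \fail(f)=m^+}$. If $s\in S_{m^+}\setminus S_m$, pick such an $f$; then $s\,\LM(f)=\fail(f)=m^+$, whence $s\mid m^+$ and $\LM(f)=m^+/s$, and since $\fail(f)=m^+\succ m$ we have $f\in I_m$, so the valid-up-to-$m$ relation $f$ with leading monomial $m^+/s$ shows $m^+/s\notin S_m$; the row/column symmetry used above produces a companion relation failing at $m^+$ with shift $m^+/s$, so $m^+/s\in S_{m^+}$ by the first statement, giving $m^+/s\in S_{m^+}\setminus S_m$ and the forward implication. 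Conversely, if $s\mid m^+$ and $m^+/s\in S_{m^+}\setminus S_m$, then the same symmetry, read with the roles of $s$ and $m^+/s$ swapped back, yields a relation with shift $s$ and fail $m^+$, so $s\in S_{m^+}\setminus S_m$ and in particular $s\notin S_m$. The single nontrivial ingredient is again the symmetric existence of the companion relation, i.e. the rank-profile symmetry established for the first statement.
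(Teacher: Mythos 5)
Your first inclusion $\acc{\frac{\fail(f)}{\LM(f)},\ f\notin I_m}\subseteq S_m$ is correct and complete: the chain of three vanishings is precisely the exchange (duality) argument that the paper compresses into a single sentence (``for all $g$ with $\LM(g)=s$, $\fail(g)\preceq m$''), and your use of Proposition~\ref{prop:Im-closed} to replace $g\in\cG_m$ by a monic $\tilde g\in I_m$ with $\LM(\tilde g)=v$ is legitimate; in fact your write-up supplies details the paper omits. The genuine gap is the reverse inclusion $S_m\subseteq\acc{\frac{\fail(f)}{\LM(f)},\ f\notin I_m}$, which you yourself identify as the hard part and then do not prove. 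The ``rank-profile symmetry'' you invoke is not available: nothing in the paper establishes that $s\in S_m$ is equivalent to a column-independence condition on the truncated multi-Hankel data (the linear-algebra variant in Appendix~\ref{ss:bms_lin_alg} merely rewrites evaluations $\cro{t\,f}$ as matrix--vector products), and your condition is not even a rank profile of a single matrix: the known entries are the $\cro{t\,t'}$ with $t\,t'\preceq m$, so each column $s'$ comes with its own set of admissible rows $\{t,\ t\,s'\preceq m\}$, and ``transposing'' such staircase-shaped partial data has no off-the-shelf meaning. Making this precise --- in particular matching independence witnessed on the known entries with the constraint $\fail(f)=s\,\LM(f)\preceq m$ --- is essentially equivalent to proving the proposition itself. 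The paper closes exactly this gap by induction on $m$: for $s\in S_{m^+}\setminus S_m$ it first extracts $f\in I_m\setminus I_{m^+}$ with $\LM(f)=s$, hence $\fail(f)=m^+$ and $s\,|\,m^+$; then, assuming $\frac{m^+}{s}\in S_m$, it uses the induction hypothesis to produce $h\notin I_m$ with ratio $\frac{\fail(h)}{\LM(h)}=\frac{m^+}{s}$ and Proposition~\ref{prop:augm_shift} to build $f-\alpha\,h\in I_{m^+}$ with leading monomial $s$, contradicting $s\in S_{m^+}$; finally a duality evaluation shows that the resulting $g$ with $\LM(g)=\frac{m^+}{s}$ and $\fail(g)\succeq m^+$ satisfies $\cro{s\,g}\neq 0$, \ie $\fail(g)=m^+$, exhibiting the required witness with ratio $s$. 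None of this machinery (the induction, the shift-augmentation lemma, the pinning of $\fail(g)$) appears in your proposal.

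Two smaller points on your last paragraph. First, $S_{m^+}\setminus S_m=\acc{\frac{\fail(f)}{\LM(f)},\ \fail(f)=m^+}$ holds only as the inclusion $\subseteq$: a ratio realized by a relation failing at $m^+$ may also be realized by another relation failing at or before $m$, and then it lies in $S_m$; fortunately you only use the correct direction. Second, the ``companion relation'' does not need your symmetry lemma: since $s\notin S_m$, Proposition~\ref{prop:Im-closed} yields $g\in I_m$ with $\LM(g)=s$, and $s\in S_{m^+}$ forbids $g\in I_{m^+}$, so $\fail(g)=m^+$ and $\frac{\fail(g)}{\LM(g)}=\frac{m^+}{s}$; the easy inclusion (which you did prove) then gives $\frac{m^+}{s}\in S_{m^+}$. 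This elementary repair --- the paper's own route --- fixes the second statement, but only conditionally on the full first statement, so the unproved reverse inclusion remains the missing piece of your proof.
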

\begin{proof}
  We shall prove the first assertion by double inclusion.
  If $s=\frac{\fail(f)}{\LM(f)}$ then for all $g\in\K[\x]$ such that
  $\LM(g)=s$, $\fail(g)\preceq m$, hence $s\notin\LM(I_m)$, $s\in S_m$.

  The reverse inclusion is proved by induction on $m$. For $m=0$,
  $S_m=\emptyset$ and there is nothing to do. Let us
  assume the inclusion
  is satisfied for a monomial $m$.
  
  Let $s\in S_{m^+}$. On the one hand, if $s\in S_m$, then
  there exists
  $f\in\K[\x]\setminus I_m\subseteq\K[\x]\setminus I_{m^+}$
  such that $s=\frac{\fail(f)}{\LM(f)}$.

  If, on the other hand, $s\in S_{m^+}\setminus S_m$, then there
  exists a relation $f\in\K[\x]$ such that $\LM(f)=s$,
  and $m\prec \fail(f) \preceq m^+$, hence $\fail(f)=m^+$ and $s$
  divides $m^+$.

  Let us assume that for all $g\in\K[\x]$ with $\LM(g)=\frac{m^+}{s}$, we have
  $\fail (g)\preceq m\prec m^+$. Therefore, $\frac{m^+}{s}\in S_m$ and
  there exists $h\notin I_m$ such that
  $\frac{\fail(h)}{\LM(h)}=\frac{m^+}{s}$. By
  Proposition~\ref{prop:augm_shift}, there is $\alpha\in\K$ such that
  $\fail(f-\alpha\,h)\succ m^+$. Since $\fail(h)\preceq m\prec
  m^+$, then $\LM(h)\preceq s$ and $\LM(f-\alpha\,h)=s$, hence
  $\frac{\fail(f-\alpha\,h)}{\LM(f-\alpha\,h)}\succ\frac{m^+}{s}$. This
  contradicts the fact that $\frac{m^+}{s}\in S_m$. Thus there exists
  a $g\in\K[\x]$ with $\LM(g)=\frac{m^+}{s}$ and $\fail(g)\succeq m^+$.
 
  Let $g$ be such a relation,
  since $\fail(f)=m^+$, then $[g\,f]\neq0$ and
  $\fail(g)=m^+$. Therefore,
  $\frac{\fail(g)}{\LM(g)}=\frac{m^+}{m^+/s}=s$
  so that $s\in\acc{\frac{\fail(f)}{\LM(f)},\ f\notin I_{m^+}}$.

  Now, we proved that $s\in S_{m^+}\setminus S_m$ implies $s|m^+$ and 
  $\frac{m^+}{s}\in S_{m^+}\setminus S_m$. This implication is clearly
  an equivalence.
\end{proof}

From this proposition it follows that
if $m\in\cT_0$, and if $m^+$
is its successor:
\begin{equation}\label{eq:recdelta}
  \max_|(S_{m^+}) = \max_{|}\pare{\max_|(S_m) \cup 
    \acc{\frac{m^+}{s},\ s\in\min_|(\cT\setminus S_m)\cap S_{m^+}}}
\end{equation}

Relation~\ref{eq:recdelta} allows us
to construct, iterating on the monomial $m$, 
the set of relations $G_m$ representing the truncated \gb of $I_m$.
Relations $g\in G_m$ are indexed by their leading
monomials, describing $\cT\setminus S_m$. 

\begin{remark}\label{rk:choose_Sm}
  We can also construct another set, describing the edge of $S_m$,
  still denoted $S_m$, as there is a one-to-one correspondence between a
  staircase and its edge. The relations $h\in S_m$ are indexed by their ratio
  $\frac{\fail(h)}{\LM(h)}$ between their fail and their leading
  monomial, describing the full staircase of $I_m$.

  When two relations $h$ and $h'$ in $S_m$ are such that
  $\frac{\fail(h)}{\LM(h)}=\frac{\fail(h')}{\LM(h')}$, then we only
  need to keep one. Since the goal is to combine a relation of
  $S_m$ with a relation failing at $m^+$ to make a new one with a bigger
  shift, as in Proposition~\ref{prop:augm_shift}, it is best to handle
  smaller polynomials.
\end{remark}
This yields Algorithm~\ref{algo:bms}.

\begin{algorithm2e}[htbp!]\label{algo:bms}
  \small
  \DontPrintSemicolon
  \TitleOfAlgo{The \BMS algorithm.}
  \KwIn{A table $\bu=(u_{\bi})_{\bi\in\N^n}$ with coefficients in
    $\K$, a monomial ordering $\prec$ and a monomial $M$
    as the stopping condition.}
  \KwOut{A set $G$ of relations generating $I_M$.}
  $T := \{m\in\K[\x],\ m\preceq M\}$.\tcp*{ordered for $\prec$}
  $G := \{1\}$.\tcp*{the future \gb}
  $S := \emptyset$.\tcp*{staircase edge, elements will be
    $[h,\fail(h)/\LM(h)]$} 
  
  \Forall{$m\in T$}{
    $S' := S$.\;
    \For{$g\in G$}{
      \If{$\LM(g)| m$}{
        $e:=\cro{\frac{m}{\LM(g)}\,g}_{\bu}$.\;
        \If{$e\neq 0$}{
          $S':=S'\cup\acc{\cro{\frac{g}{e},\frac{m}{\LM(g)}}}$.\;
        }
      }
    }
    $S':=\min_|\acc{[h,\fail(h)/\LM(h)]}$.
    \tcp*{see Remark~\ref{rk:choose_Sm}}
    $G':= \Border (S')$.\;
    
    \For{$g'\in G'$}{
      Let $g\in G$ such that $\LM(g)|\LM(g')$.\; 
      \uIf{$\LM(g) \nmid m$}{
        $g':=\frac{\LM(g')}{\LM(g)}\,g$.\tcp*{translates the relation}
      }
      \uElseIf{$\exists\,h\in S,
        \frac{m}{\LM(g')} |\fail(h)$}{
        $g':= \frac{\LM(g')}{\LM(g)}\,g
        -\cro{\frac{m}{\LM(g)}\,g}_{\bu}\,
        \frac{\LM(g')\,\fail(h)}{m}\,h$.\tcp*{see
          Proposition~\ref{prop:augm_shift}}
      }
      \lElse{
        $g':=g$.
      }
    }
    $G := G'$.\;
    $S := S'$.\;
  }  
  \KwRet $G$.
\end{algorithm2e}

We saw that for $m$ big enough, $S_m$
will be the target staircase. We now
give an upper bound.

\begin{proposition}\label{prop:upperbound}
  Let $\bu$ be a linear recurrent sequence and $I$ be its ideal of
  relations.

  Let $S$ be the
  staircase of $I$ for $\prec$. Let $s_{\max}$
  be the largest monomial in
  $S$. Then, for $m\succeq (s_{\max})^2$,
  $S_m = S$.

  Let $\cG$ be a minimal \gb of $I$ for $\prec$ and let $g_{\max}$ be the
  largest leading monomial of  $\cG$. Then, for $m\succeq
  s_{\max}\cdot\max_{\prec}(g_{\max},s_{\max})$,
  the \BMS algorithm returns a minimal \gb of $I$
  for $\prec$.
\end{proposition}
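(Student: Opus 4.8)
The plan is to prove both statements by combining the monotonicity of the chain $(S_m)$ with the intrinsic description $S_m=\acc{\fail(f)/\LM(f),\ f\notin I_m}$ of Proposition~\ref{prop:iter}. First I would record the inclusion that holds at every step: since every genuine relation of $I$ is valid up to any $m$, we have $I\subseteq I_m$, hence $\LM(I)\subseteq\LM(I_m)$ and therefore $S_m=\cT\setminus\LM(I_m)\subseteq\cT\setminus\LM(I)=S$. Thus for the first claim only the reverse inclusion $S\subseteq S_m$ is at stake once $m\succeq(s_{\max})^2$.

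For the first part I would argue by contradiction. Suppose $S_m\ne S$ for some $m\succeq(s_{\max})^2$; then there is $s\in S\setminus S_m$. As $s\in\LM(I_m)$ and $I_m$ is closed under multiplication by $\K[\x]$ (Proposition~\ref{prop:Im-closed}), there is a relation $h\in I_m$ with $\LM(h)=s$, namely a suitable monomial multiple of a generator of $\cG_m$. Since $s\in S$ is not a leading monomial of $I$, the relation $h$ cannot be valid everywhere, so $\fail(h)$ is finite and, because $h\in I_m$, we have $\fail(h)\succ m$. Now $h\notin I_{\fail(h)}$, so by Proposition~\ref{prop:iter} the ratio $\fail(h)/\LM(h)=\fail(h)/s$ lies in $S_{\fail(h)}\subseteq S$, whence $\fail(h)/s\preceq s_{\max}$, i.e. $\fail(h)\preceq s\,s_{\max}\preceq(s_{\max})^2$. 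This contradicts $\fail(h)\succ m\succeq(s_{\max})^2$, and the contradiction proves $S_m=S$.

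For the second part I would first note that $m\succeq s_{\max}\cdot\max_{\prec}(g_{\max},s_{\max})\succeq(s_{\max})^2$, so by the first part $S_m=S$ and the line $G':=\Border(S')$ produces relations whose leading monomials are exactly $\LM(\cG)$. It then remains to certify that each such relation is genuinely in $I$, not merely valid up to $m$. Here I would use that, since $S$ is the canonical basis of $\K[\x]/I$ for the zero-dimensional ideal $I$, the multi-Hankel matrix $H_{S,S}$ is invertible; consequently a polynomial $g$ lies in $I$ as soon as $\cro{s'\,g}=0$ for all $s'\in S$ (reduce $g$ modulo $I$ to $\bar g=\sum_{s\in S}a_s\,s$, observe $H_{S,S}\,\vec a=0$ since $\cro{s'\,g}=\cro{s'\,\bar g}$, hence $\bar g=0$). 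For a relation $g\in\cG_m$ with $\LM(g)=t\in\LM(\cG)$ one has $t\preceq g_{\max}$, so $s'\,t\preceq s_{\max}\,g_{\max}\preceq m$ for every $s'\in S$; validity of $g$ up to $m$ then forces $\cro{s'\,g}=0$ for all $s'\in S$, and therefore $g\in I$. Since these relations are genuine elements of $I$ with leading monomials forming the border of $S$, they constitute a minimal \gb of $I$.

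The main obstacle is exactly this second step: passing from ``$g$ is valid up to $m$'' to ``$g\in I$''. The subtlety is that $I_m$ is not an ideal and is not closed under addition, so one cannot simply subtract the true \gb element with the same leading monomial to kill the tail of $g$. The clean way around this is the nondegeneracy of $H_{S,S}$, which converts the finitely many checks $\cro{s'\,g}=0$, $s'\in S$ (equivalently, validity up to $s_{\max}\,t$), into membership in $I$; recognizing that $s_{\max}\cdot\max_{\prec}(g_{\max},s_{\max})$ is precisely the monomial that simultaneously guarantees $S_m=S$ and all of these checks is what pins down the stated bound.
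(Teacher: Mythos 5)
Your proof is correct, but there is a wrinkle in the comparison you were asked for: the paper itself never proves this proposition. It is stated in the appendix (and restated as Proposition~\ref{prop:upperbound_repeat} in the main text) with a pointer to \cite[Proposition~11]{part1}, so there is no in-paper argument to measure yours against; I can only assess it on its own merits, and it stands. The first half is sound: if $s\in S\setminus S_m$, closure of $I_m$ under multiplication (Proposition~\ref{prop:Im-closed}) yields $h\in I_m$ with $\LM(h)=s$; since $s\notin\LM(I)$ we get $h\notin I$, so $\fail(h)$ is finite and $\succ m$, while Proposition~\ref{prop:iter} applied at $\fail(h)$, combined with the inclusion $S_{m'}\subseteq S$ valid for every $m'$, forces $\fail(h)=s\cdot\bigl(\fail(h)/s\bigr)\preceq s\,s_{\max}\preceq (s_{\max})^2\preceq m$, the desired contradiction. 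The second half is where the real content lies, and you correctly identify both the obstruction and the remedy: because $I_m$ is not closed under addition (the paper's own computation showing $I_{x^2}$ is not an ideal makes this explicit), one cannot reduce a relation valid up to $m$ against the true \gb, and the clean substitute is that for a linear recurrent sequence the bilinear form $(f,g)\mapsto\cro{f\,g}_{\bu}$ descends to a nondegenerate form on $\K[\x]/I$ whose Gram matrix in the canonical basis $S$ is $H_{S,S}$; hence $\cro{s'\,g}=0$ for all $s'\in S$ already certifies $g\in I$, and the threshold $s_{\max}\cdot\max_{\prec}(g_{\max},s_{\max})$ is precisely what places every product $s'\,\LM(g)$, $s'\in S$, $\LM(g)\in\LM(\cG)$, at or below $m$. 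This nondegeneracy is the same fact that underlies the \sFGLM algorithm, and it is consistent with the paper's Example~\ref{ex:fail_asfglm}, where only proper principal submatrices $H_{S',S'}$ with $S'\subsetneq S$ are singular. The one thing you use without proof is the invariant that the \BMS output at step $m$ consists of relations of $I_m$ with leading monomials $\Border(S_m)$; that is exactly Definition~\ref{def:im} and the specification of Algorithm~\ref{algo:bms}, so you are arguing at the same level of rigor as the surrounding text.
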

\begin{example}
  For the $\DRL(y\prec x)$ ordering, $I=\langle x^p,y^q\rangle$ and
  $q>p\geq 1$, we have, $s_{\max}=x^{p-1}\,y^{q-1}$ and
  $g_{\max}=y^q$. Therefore, the right staircase is found at most at step
  $m=x^{2\,p-2}\,y^{2\,q-2}$, while the \gb is found at most at step
  $x^{p-1}\,y^{q-1}\,\max_{\prec}(x^{p-1}\,y^{q-1},y^q)$, \ie
  $y^{2\,q-1}$ if $p=1$ and $x^{2\,p-2}\,y^{2\,q-2}$ otherwise.
\end{example}

From Propositions~\ref{prop:iter} and~\ref{prop:upperbound}, we can
deduce that $S=\acc{\frac{\fail(f)}{\LM(f)},\ f\notin I}$.%

\begin{example}\label{ex:binom_bms}
  We give the trace of the algorithm called on the binomial sequence
  $\bin$ for the
  $\DRL(y\prec x)$ ordering up to monomial $x^3$ (hence visiting
  all the monomials of degree at most $3$).

  To simplify the reading, whenever a relation succeeds in $m$ or
  cannot be tested in $m$, we
  skip the updating part as this relation remains the same.

  We start with the empty staircase $S$ and
  the relation $G=\{1\}$.
  \begin{enumerate}
  \item[] For the monomial $1$
    \begin{enumerate}
    \item[] The relation $g_1=1$ fails since $[\bin_{0,0}]=1$. Thus
      $S'=\{[1,1]\}$.
    \item[] $S'$ is updated to $\{[1,1]\}$ and $G'=\{y,x\}$.
    \item[] For the relation $g_1'=y$, $y\nmid 1$ thus $g_1'=y$.
    \item[] For the relation $g_2'=x$, $x\nmid 1$ thus $g_2'=x$.
    \item[] We update $G:=G'=\{y,x\}$ and $S:=S'=\{[1,1]\}$.
    \end{enumerate}
  \item[] For the monomial $y$
    \begin{enumerate}
    \item[] The relation $g_1=y$ succeeds since $[\bin_{0,1}]=0$.
    \item[] Nothing must be done for the relation $g_2=x$.
    \item[] $S'$ is set to $\{[1,1]\}$ and $G'=\{y,x\}$.
    \item[] We set $g_1'=y$ and $g_2'=x$.
    \item[] We update $G:=G'=\{y,x\}$ and $S:=S'=\{[1,1]\}$.
    \end{enumerate}
  \item[] For the monomial $x$
    \begin{enumerate}
    \item[] Nothing must be done for the relation $g_1=y$.
    \item[] The relation $g_2=x$ fails since $[\bin_{1,0}]=1$. Thus
      $S'=\{[1,1],[x,1]\}$.
    \item[] $S'$ is set to $\{[1,1]\}$ and $G'=\{y,x\}$.
    \item[] We set $g_1'=y$.
    \item[] For the relation $g_2'=x$, $x| x$ and
      $\frac{x}{x}|\fail(1)$, hence $g_2'=x-1$.
    \item[] We update $G:=G'=\{y,x-1\}$ and $S:=S'=\{[1,1]\}$.
    \end{enumerate}
  \item[] For the monomial $y^2$
    \begin{enumerate}
    \item[] The relation $g_1=y$ succeeds since $[\bin_{0,2}]=0$.
    \item[] Nothing must be done for the relation $g_2=x-1$.
    \item[] $S'$ is set to $\{[1,1]\}$ and $G'=\{y,x\}$.
    \item[] We set $g_1'=y$ and $g_2'=x-1$.
    \item[] We update $G:=G'=\{y,x-1\}$ and $S:=S'=\{[1,1]\}$.
    \end{enumerate}
  \item[] For the monomial $x\,y$
    \begin{enumerate}
    \item[] The relation $g_1=y$ fails since $[\bin_{1,1}]=1$. Thus
      $S'=\{[1,1],[y,x]\}$.
    \item[] The relation $g_2=x-1$ fails since $[\bin_{1,1}-\bin_{0,1}]=1$. Thus
      $S'=\{[1,1],[y,x],[x-1,y]\}$.
    \item[] $S'$ is set to $\{[y,x],[x-1,y]\}$ and $G'=\{y^2,x\,y,x^2\}$.
    \item[] For the relation $g_1'=y^2$, $y^2\nmid x\,y$ thus $g_1'=y^2$.
    \item[] For the relation $g_2'=x\,y$, $x\,y| x\,y$
      and $\frac{x\,y}{x\,y}|\fail(y)$, hence $g_2'=x\,y-1$.
    \item[] For the relation $g_3'=x^2$, $x^2\nmid x\,y$ thus $g_3'=x^2-x$.
    \item[] We update $G:=G'=\{y^2,x\,y-1,x^2-x\}$ and
      $S:=S'=\{[y,x],[x-1,y]\}$.
    \end{enumerate}
  \item[] For the monomial $x^2$
    \begin{enumerate}
    \item[] Nothing must be done for the relation $g_1=y^2$.
    \item[] Nothing must be done for the relation $g_2=x\,y-1$.
    \item[] The relation $g_3=x^2-x$ succeeds since
      $[\bin_{2,0}-\bin_{1,0}]=0$.
    \item[] $S'$ is set to $\{[y,x],[x-1,y]\}$ and $G'=\{y^2,x\,y,x^2\}$.
    \item[] We set $g_1'=y^2$, $g_2'=x\,y-1$ and $g_3'=x^2-x$.
    \item[] We update $G:=G'=\{y^2,x\,y-1,x^2-x\}$ and
      $S:=S'=\{[y,x],[x-1,y]\}$.
    \end{enumerate}
  \item[] For the monomial $y^3$
    \begin{enumerate}
    \item[] The relation $g_1=y^2$ succeeds since $[\bin_{0,3}]=0$.
    \item[] Nothing must be done for the relation $g_2=x\,y-1$.
    \item[] Nothing must be done for the relation $g_3=x^2-x$.
    \item[] $S'$ is set to $\{[y,x],[x-1,y]\}$ and $G'=\{y^2,x\,y,x^2\}$.
    \item[] We set $g_1'=y^2$, $g_2'=x\,y-1$ and $g_3=x^2-x$.
    \item[] We update $G:=G'=\{y^2,x\,y-1,x^2-x\}$ and
      $S:=S'=\{[y,x],[x-1,y]\}$.
    \end{enumerate}
  \item[] For the monomial $x\,y^2$
    \begin{enumerate}
    \item[] The relation $g_1=y^2$ succeeds since $[\bin_{1,2}]=0$.
    \item[] The relation $g_2=x\,y-1$ succeeds since $[\bin_{1,2}-\bin_{0,1}]=0$.
    \item[] Nothing must be done for the relation $g_3=x^2-x$.
    \item[] $S'$ is set to $\{[y,x],[x-1,y]\}$ and $G'=\{y^2,x\,y,x^2\}$.
    \item[] We set $g_1'=y^2$, $g_2'=x\,y-1$ and $g_3=x^2-x$.
    \item[] We update $G:=G'=\{y^2,x\,y-1,x^2-x\}$ and
      $S:=S'=\{[x,y],[y,x-1]\}$.
    \end{enumerate}
  \item[] For the monomial $x^2\,y$
    \begin{enumerate}
    \item[] Nothing must be done for the relation $g_1=y^2$.
    \item[] The relation $g_2=x\,y-1$ fails since
      $[\bin_{2,1}-\bin_{1,0}]=1$. Thus
      $S'=\{[y,x],[x-1,y],[x\,y-1,x]\}$.
    \item[] The relation $g_3=x^2-x$ fails since
      $[\bin_{2,1}-\bin_{1,1}]=1$. Thus
      $S'=\{[y,x],[x-1,y],[x\,y-1,x],[x^2-x,y]\}$.
    \item[] $S'$ is set to $\{[y,x],[x-1,y]\}$ and $G'=\{y^2,x\,y,x^2\}$.
    \item[] We set $g_1'=y^2$.
    \item[] For the relation $g_2'=x\,y$, $x\,y| x^2\,y$ and
      $\frac{x^2\,y}{x\,y}|\fail(y)$, hence $g_3'=x\,y-y-1$.
    \item[] For the relation $g_3'=x^2$, $x^2| x^2\,y$ and
      $\frac{x^2\,y}{x^2}|\fail(x-1)$, hence $g_3'=x^2-2\,x+1$.
    \item[] We update $G:=G'=\{y^2,x\,y-y-1,x^2-2\,x+1\}$ and
      $S:=S'=\{[y,x],[x-1,y]\}$.
    \end{enumerate}
  \item[] For the monomial $x^3$
    \begin{enumerate}
    \item[] Nothing must be done for the relation $g_1=y^2$.
    \item[] Nothing must be done for the relation $g_2=x\,y-y-1$.
    \item[] The relation $g_3=x^2-2\,x+1$ succeeds since
      $[\bin_{3,0}-2\,\bin_{2,0}+\bin_{1,0}]=0$.
    \item[] $S'$ is set to $\{[y,x],[x-1,y]\}$ and $G'=\{y^2,x\,y,x^2\}$.
    \item[] We set $g_1'=y^2$, $g_2'=x\,y-y-1$ and $g_3=x^2-2\,x+1$.
    \item[] We update $G:=G'=\{y^2,x\,y-y-1,x^2-2\,x+1\}$ and
      $S:=S'=\{[y,x],[x-1,y]\}$.
    \end{enumerate}
  \item[] The algorithm returns relations $y^2,x\,y-y-1,x^2-2\,x+1$,
    all three with a shift $x$.
  \end{enumerate}
\end{example}

\subsection{A Linear Algebra interpretation of the \BMS algorithm}
\label{ss:bms_lin_alg}
In order to make the presentation of the \BMS algorithm closer to that
of the \sFGLM algorithm, we propose to replace every evaluation using
the $[\,]$ operator with a matrix-vector product.

As stated above, given a monic relation $f=\LM(f)+\sum_{s\in S}\alpha_s\,s$,
testing the shift of this relation by a monomial $m$ is done with the
bracket operator, \ie testing whether $[m\,f]=0$ or not. Denoting
$\vec{f}$, the vector
\[
\vec{f}=\kbordermatrix{
  &1\\
  \vdots &\vdots\\
  s\in S &\alpha_s\\
  \vdots &\vdots\\
  \LM(f) &1
},
\] this can also
be done through testing if the following matrix-vector product
\[H_{m,S\cup\{\LM(f)\}}\,\vec{f}=
\kbordermatrix{
  &\cdots &s\in S&\cdots &\LM(f)\\
  m &\cdots &[m\,s] &\cdots &[m\,\LM(f)]
}\,
\begin{pmatrix}
  \vdots\\\alpha_s\\\vdots\\1
\end{pmatrix}
=0
\]
or not. In this setting, the definitions of the \emph{shift} and the
\emph{fail} of a relation, \ie Definition~\ref{def:shift}, become as follows.
\begin{definition}\label{def:fail_linal}
  Let $f=\LT(f)+\sum_{s\in S}\alpha_s\,s$ be a polynomial.

  The monomial $m$ is a \emph{shift of $f$} if
  \[H_{\{1,\ldots,m\},S\cup\{\LM(f)\}}\,\vec{f}=
  \kbordermatrix{
    &\cdots &s\in S&\cdots &\LM(f)\\
    1 &\cdots &[s] &\cdots &[\LM(f)]\\
    \vdots &&\vdots &&\vdots\\
    m &\cdots &[m\,s] &\cdots &[m\,\LM(f)]\\
  }\,
  \begin{pmatrix}
    \vdots\\\alpha_s\\\vdots\\1
  \end{pmatrix}
  =
  \begin{pmatrix}
    0\\\vdots\\0
  \end{pmatrix}.
  \]

  Let $m^+$ be the successor of $m$, $m^+\,\LM(f)$ is the \emph{fail of $f$} if
  \[H_{\{1,\ldots,m,m^+\},S\cup\{\LM(f)\}}\,\vec{f}=
  \kbordermatrix{
    &\cdots &s\in S&\cdots &\LM(f)\\
    1 &\cdots &[s] &\cdots &[\LM(f)]\\
    \vdots &&\vdots &&\vdots\\
    m &\cdots &[m\,s] &\cdots &[m\,\LM(f)]\\
    m^+ &\cdots &[m^+\,s] &\cdots &[m^+\,\LM(f)]
  }\,
  \begin{pmatrix}
    \vdots\\\alpha_s\\\vdots\\1
  \end{pmatrix}
  =
  \begin{pmatrix}
    0\\\vdots\\0\\e
  \end{pmatrix},
  \]
  with $e\neq 0$.
\end{definition}
We can also write another proof of Proposition~\ref{prop:augm_shift}
with a matrix viewpoint.
\begin{proof}[Proof of Proposition~\ref{prop:augm_shift}]
  Let $f_1=\LM(f_1)+\sum_{s\in S}\alpha_s\,s$ and
  $f_2=\LM(f_2)+\sum_{s\in S'}\beta_s\,s$ be monic. Let $v^-$ be the predecessor
  of $v$. Let $\tilde{S}=S\cup S'\setminus\{\LM(f_2),\LM(f_1)\}$,
  assuming $\LM(f_2)\neq\LM(f_1)$, then
  we have
  \begin{align*}
    H_{\{1,\ldots,v^-,v\},\tilde{S}\cup\{\LM(f_2),\LM(f_1)\}}\,(\vec{f_1}+c\,\vec{f_2})
    &=
      \begin{pmatrix}
        0\\\vdots\\0\\e_1+c\,e_2
      \end{pmatrix}\\
    \kbordermatrix{
    &\cdots &s\in \tilde{S}&\cdots &\LM(f_2) &\LM(f_1)\\
    1 &\cdots &[s] &\cdots &[\LM(f_2)] &[\LM(f_1)]\\
    \vdots &&\vdots &&\vdots&\vdots\\
    v^- &\cdots &[v^-\,s] &\cdots &[v^-\,\LM(f_2)] &[v^-\,\LM(f_1)]\\
    v &\cdots &[v\,s] &\cdots &[v\,\LM(f_2)] &[v\,\LM(f_1)]\\
    }\,
    \begin{pmatrix}
      \vdots\\\alpha_s+c\,\beta_s\\\vdots\\c\\1
    \end{pmatrix}
    &=
      \begin{pmatrix}
        0\\\vdots\\0\\e_1+c\,e_2
      \end{pmatrix}.
  \end{align*}
  It is now clear that vector $\vec{f}_1-\frac{e_1}{e_2}\,\vec{f}_2$
  is in the kernel of
  this matrix. That is, polynomial $f_1-\frac{e_1}{e_2}\,f_2$ has a
  shift $v$.
\end{proof}

Changing every evaluation into a matrix-vector product in the \BMS
algorithm yields 
the following presentation of the \BMS algorithm, namely
Algorithm~\ref{algo:bms_linalg}.

\begin{algorithm2e}[htbp!]\label{algo:bms_linalg}
  \small
  \DontPrintSemicolon
  \TitleOfAlgo{Linear Algebra variant of the \BMS algorithm.}
  \KwIn{A table $\bu=(u_{\bi})_{\bi\in\N^n}$ with coefficients in
    $\K$, a monomial ordering $\prec$ and a monomial $M$
    as the stopping condition.}
  \KwOut{A set $G$ of relations generating $I_M$.}
  $T := \{m\in\K[\x], m\preceq M\}$.\tcp*{ordered for $\prec$}
  $G := \{1\}$.\tcp*{the future \gb}
  $S := \emptyset$.\tcp*{staircase edge, elements will be
    $[h,\fail(h)/\LM(h)]$} 
  
  \Forall{$m\in T$}{
    $S' := S$.\;
    \For{$g\in G$}{
      \If{$\LM(g)| m$}{
        $e:=H_{\acc{\frac{m}{\LM(g)}},\supp(g)}\,\vec{g}$.\;
        \If{$e\neq 0$}{
          $S':=S'\cup
          \acc{\cro{\frac{g}{e},\frac{m}{\LM(g)}}}$.\;
        }
      }
    }
    $S':=
    \min_{\fail(h)\in S'}\acc{[h,\fail(h)/\LM(h)]}$.
    \tcp*{see Remark~\ref{rk:choose_Sm}}
    $G':= \Border (S')$.\;
    
    \For{$g'\in G'$}{
      Let $g\in G$ such that $\LM(g)|\LM(g')$.\; 
      \uIf{$\LM(g) \nmid m$}{
        $g':=\frac{\LM(g')}{\LM(g)}\,g$.\tcp*{shifts the relation}
      }
      \uElseIf{$\exists\,h\in S,
        \frac{m}{\LM(g')} |\fail(h)$}{
        $g':= \frac{\LM(g')}{\LM(g)}\,g
        -\pare{H_{\acc{\frac{m}{\LM(g)}},\supp(g)}\,\vec{g}}\,
        \frac{\LM(g')\,\fail(h)}{m}\,h$.\tcp*{see Prop.~\ref{prop:augm_shift}}
      }
      \lElse{
        $g':=g$.
      }
    }
    $G := G'$ \;
    $S := S'$ \;
  }  
  \KwRet $G$.
\end{algorithm2e}


\end{appendices}




\end{document}